\newtheorem{theorem}{Theorem}[section]
\newtheorem{corollary}[theorem]{Corollary}
\newtheorem{definition}[theorem]{Definition}
\newtheorem{lemma}[theorem]{Lemma}
\newtheorem{claim}{Claim}[section]
\def\E{\mathbb{E}}
\title{A Nearly Instance Optimal Algorithm for Top-$k$ Ranking under the Multinomial Logit Model}
\author{
Xi Chen \thanks{Stern School of Business, New York University, email: xchen3@stern.nyu.edu}
\and
Yuanzhi Li \thanks{Department of Computer Science, Princeton University, email: yuanzhil@cs.princeton.edu}
\and
Jieming Mao  \thanks{Department of Computer Science, Princeton University, email: jiemingm@cs.princeton.edu}
}
\begin{document}
\maketitle

\begin{abstract}
We study the active learning problem of top-$k$ ranking from multi-wise comparisons under the popular multinomial logit model. Our goal is to identify the top-$k$ items with high probability by adaptively querying sets for comparisons and observing the noisy output of the most preferred item from each comparison. To achieve this goal, we design a new active ranking algorithm without using any information about the underlying items' preference scores. We also establish a matching lower bound on the sample complexity even when the set of preference scores is given to the algorithm. These two results together show that the proposed algorithm is nearly instance optimal (similar to instance optimal \cite{FaginLN03}, but up to polylog factors). Our work extends the existing literature on rank aggregation in three directions. First, instead of studying a static problem with fixed data, we investigate the top-$k$ ranking problem in an active learning setting. Second, we show our algorithm is nearly instance optimal, which is a much stronger theoretical guarantee. Finally, we extend the pairwise comparison to the multi-wise comparison, which has not been fully explored in ranking literature.

\end{abstract}

\section{Introduction}

The problem of inferring a ranking over a set of $n$ items (e.g., products, movies, URLs) is an important problem in machine learning and finds numerous applications in recommender systems, web search, social choice, and many other areas. To learn the global ranking, an effective way is to present at most $l$ ($l\geq 2$) items at each time and ask about the most favorable item among the given items. Then, the answers from these  multi-wise comparisons will be aggregated to infer the global ranking. When the number of items $n$  becomes large, instead of inferring the global ranking over all the $n$ items, it is of more  interest to identify the top-$k$ items with a pre-specified $k$. In this paper, we study the problem of active top-$k$ ranking from multi-wise comparisons, where the goal is to adaptively choose at most $l$ items for each comparison and accurately infer the top-$k$ items with the minimum number of comparisons (i.e., the minimum sample complexity). As an illustration, let us consider a practical scenario: an online retailer is facing the problem of choosing $k$ best designs of handbags among $n$ candidate designs. One popular way is to display several designs to each arriving customer and observe which handbag is chosen. Since a shopping website has a capacity on the maximum number of display spots, each comparison will involve at most $l$ possible designs.


Given the wide application of top-$k$ ranking, this problem has received a lot of attention in recent years, e.g., \cite{Shah15Sim, Suh16Adversarial} (please see Section \ref{sec:related} for more details). Our work greatly extends the existing literature on top-$k$ ranking in the following three directions:
\begin{enumerate}
  \item Most existing work studies a non-active ranking aggregation problem, where the answers of comparisons are provided statically or the items for each comparison are chosen completely at random. Instead of considering a passive ranking setup, we propose an active ranking algorithm, which adaptively chooses the items for comparisons based on previously collected information.
 \item Most existing work chooses some specific function (call this function $f$)  of problem parameters (e.g., $n$, $k$, $l$ and preference scores) and shows that the algorithm's sample complexity is at most $f$. For the optimality, they also show that for any value of $f$, there exists an instance whose sample complexity equals to that value and any algorithm needs at least $\Omega(f)$ comparisons on this instance. However, this type of algorithms could perform poorly on some instances other than those instances for establishing lower bounds (see examples from \cite{Chen:17:SODA}); and  the form of function $f$ can vary the designed algorithm a lot.

    To address this issue, we establish a much more refined upper bound on the sample complexity. The derived sample complexity matches the lower bound when all the parameters (including the set of underlying preference scores for items) are given to the algorithm. They together show that our lower bound is tight and also our algorithm is nearly instance optimal (see Definition \ref{def:io} for the definition of nearly instance optimal).

  \item Existing work mainly focuses on pairwise comparisons. We extend the pairwise comparison to the multi-wise comparison (at most $l$ items) and further quantify the role of $l$ in the sample complexity. From our sample complexity result (see Section \ref{sec:result}), we show that the pairwise comparison could be as helpful as multi-wise comparison unless the underlying instance is very easy.
 \end{enumerate}

\subsection{Model}
\label{sec:model}

In this paper, we adopt the widely used multinomial logit (MNL) model \cite{Luce59, McFadden:73, Train:03:choice} for modeling multi-wise comparisons. In particular, we assume that each item $i$ has an underlying preference score (a.k.a. utility in economics)  $\mu_i$ for $i=1,\ldots, n$. These scores, which are unknown to the algorithm, determine the underlying ranking of the items. Specifically, $\mu_i >\mu_j$ means that item $i$ is preferred to item $j$ and item $i$ should have a a higher rank. Without loss of generality, we assume that $\mu_1 \geq \mu_2 \geq \cdots \mu_k > \mu_{k+1} \geq \cdots \geq \mu_n$, and thus the true top-$k$ items are $\{1,\dots, k\}$. At each time $t$ from $1$ to $T$, the algorithm chooses a subset of items with at least two items, denoted by $S_t \subseteq \{ 1,...,n\}$, for query/comparison. The size of the set $S_t$ is upper bounded by a pre-fixed parameter $l$, i.e., $2 \leq |S_t| \leq l$.

Given the set $S_t$, the agent will report her most preferred item $a \in S_t$ following the multinomial logit (MNL) model:
\begin{equation}\label{eq:mnl}
\Pr[a |S_t] = \frac{\exp(\mu_a)}{\sum_{j\in S_t} \exp(\mu_j)}.
\end{equation}
When the size of $S_t$ is $2$ (i.e., $l=2$), the MNL model reduces to Bradley-Terry model \cite{Bradley52}, which has been widely studied in rank aggregation literature in machine learning (see, e.g., \cite{Negahban12RankCentrality,Jang13,Rajkumar14,Chen15}).

In fact, the MNL model has a simple probabilistic interpretation as follows \cite{Train:03:choice}.  Given the set $S_t$, the agent draws her valuation $\nu_j =\mu_j + \epsilon_j$ for each item $j \in S_t$, where $\mu_j$ is the mean utility for item $j$ and each $\epsilon_j$ is independently, identically distributed random variable following the Gumbel distribution. Then, the probability that $a \in S_t$ is chosen as the most favorable item is $\Pr \left(\nu_a \geq \nu_j, \forall j \in S_t\backslash \{a\} \right)$. With some simple algebraic derivation using the density of Gumbel distribution (see Chapter 3.1 in \cite{Train:03:choice}), the choice probability $\Pr \left(\nu_a \geq \nu_j, \forall j \in S_t\backslash \{a\} \right)$ has an explicit expression in \eqref{eq:mnl}.
For notational convenience, we define $\theta_j = \exp(\mu_j)$ for $i=1,...,n$, and the choice probability in \eqref{eq:mnl} can be equivalently written as $\Pr[a |S_t] = \frac{\theta_a}{\sum_{j\in S_t} \theta_j}$. By adaptively querying the set $S_t$ for $1 \leq t \leq T$ and observing the reported most favorable item in $S_t$, the goal is to identify the set of top-$k$ items with high probability using the minimum number of queries.

For notation convenience, we assume the $i$-th item (with the preference score $\theta_i$) is labeled as $\pi_i \in \{1,\ldots, n\}$ by the algorithm at the beginning. Since the algorithm has no prior knowledge on the ranking of items before it makes any comparison, the ranking of the items should have no correlation with the labels of the items. Therefore, $\pi = (\pi_1,...,\pi_n)$ is distributed as a uniform permutation of $\{1,...,n\}$.

 The notion of instance optimal was originally defined and emphasized as an important concept in \cite{FaginLN03}. With the MNL model in place, we provide a formal definition of nearly instance optimal in our problem. To get a definition of instance optimal in our problem, we can just replace $\tilde{O}$ with $O$ in Definition \ref{def:io}. The ``nearly'' here just means we allow polylog factors.
\begin{definition}[Nearly Instance Optimal]
\label{def:io}
Given instance $(n,k, l, \theta_1,...,\theta_n)$, define \\$c(n,k,l,\theta_1,...,\theta_n)$ to be the sample complexity of an optimal adaptive algorithm on the instance. We say that an algorithm $A$ is nearly instance optimal, if for any instance $(n,k,l,\theta_1,...,\theta_n)$, the algorithm $A$ outputs the top-$k$ items with high probability and only uses at most \\$\tilde{O}(c(n,k,l,\theta_1,...,\theta_n))$ number of comparisons.  (Note that $\tilde{O}(\cdot)$ hides polylog factors of $n$ and  $\frac{1}{\theta_k - \theta_{k+1}}$.)
\end{definition}

\subsection{Main results}
\label{sec:result}

Under the MNL model described in Section \ref{sec:model}, the main results of this paper include the following upper and lower bounds on the sample complexity.

\begin{theorem}\label{thm:upper}
We design an active ranking algorithm which uses
\[
\tilde{O}  \Biggl( \frac{n}{l}  + k + \frac{\sum_{i \geq k  + 1} \theta_{i}}{\theta_k} +  \sum_{i \geq k + 1, \theta_i \geq \frac{\theta_k}{2}} \frac{\theta_{k}^2}{(\theta_k - \theta_i)^2}   +\sum_{ i: i\leq k, \theta_i \leq 2\theta_{k+1}} \frac{\theta_{k+1}^2}{(\theta_{ k + 1} - \theta_i)^2}  \Biggr)
\]
comparisons with the set size at most $l$ (can be $2$-wise, $3$-wise,...,$l$-wise comparisons) to identify the top-$k$ items with high probability. 
\end{theorem}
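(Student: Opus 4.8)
\emph{Proof plan.} The plan is to exhibit an explicit algorithm whose (random) number of comparisons is, on every instance and with high probability, within $\mathrm{polylog}$ factors of the displayed quantity, and to argue separately that it returns $\{1,\dots,k\}$ with high probability. The algorithm has two stages. A \emph{coarse} stage uses only $\tilde O(n/l)$ comparisons to learn every preference score up to a constant multiplicative factor, to peel off the items that are clearly above or clearly below the eventual cutoff, and to isolate a ``hard set'' $H$ of items whose scores lie within a constant factor of $\theta_k$. A \emph{refinement} stage then classifies the items of $H$ against the cutoff by an adaptive racing procedure. Crucially, every branching and stopping decision is a function of observed win counts together with Chernoff/Bernstein confidence intervals, so the algorithm never uses knowledge of the $\theta_i$; this is what lets us hope for near-instance-optimality rather than merely matching a worst-case $f$. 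Only the coarse stage exploits $l$-wise queries (it is what turns a ``touch every item once'' cost of $n$ into $n/l$); the refinement stage is essentially pairwise, matching the paper's remark that multi-wise comparisons help only on easy instances.

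For the coarse stage I would first run an $O(\log n)$-depth tournament over groups of size $l$, together with a noisy-quickselect/binary-search-by-rank step that stops as soon as a constant-factor estimate of the rank-$k$ score is certified, to obtain a reference item $r$ with $\theta_r=\Theta(\theta_k)$. Inserting $r$ into $(l-1)$-subsets of the remaining items and spending $\tilde O(1)$ comparisons per subset then reveals, for each item $i$, whether $\theta_i$ is $\gg\theta_r$, $\Theta(\theta_r)$, or $\ll\theta_r$, at amortized cost $\tilde O(1/l)$ per item and hence $\tilde O(n/l)$ total. Items certified $\gg\theta_k$ are declared top, items certified $\ll\theta_k$ are declared non-top, and the rest form $H$. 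One then checks that $H$ is not too large: every non-top member of $H$ has score $\Omega(\theta_k)$, so there are at most $O\!\big(\sum_{i\ge k+1}\theta_i/\theta_k\big)$ of them, while at most $k$ members of $H$ are top. Together with the $\tilde O(k)$ cost of confirming the clearly-top items (where a coupon-collector-type argument shows $\Omega(k)$ comparisons are unavoidable), this already produces the $n/l$ and $k$ terms and part of the $\sum_{i\ge k+1}\theta_i/\theta_k$ term.

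In the refinement stage I would run, in parallel and round-robin (batched $l$ at a time, always including $r$), a shrinking-confidence-interval estimator of $\theta_i/\theta_r$ for each $i\in H$, declaring $i$ classified once its interval — combined with the current counts of how many items are provably above various levels — certifies whether $\theta_i\ge\theta_k$ (equivalently $\theta_i>\theta_{k+1}$, since no score lies strictly between $\theta_{k+1}$ and $\theta_k$). This is the standard successive-elimination analysis: the query $\{r,i\}$ is a Bernoulli trial with mean $\theta_i/(\theta_i+\theta_r)$, which differs from $1/2$ by $\Theta(|\theta_i-\theta_r|/\theta_r)$, and the ``obstruction'' that keeps $i$ unclassified is item $k$ when $i$ is non-top and item $k+1$ when $i$ is top; hence $i$ is resolved after $\tilde O\!\big(\theta_k^2/(\theta_k-\theta_i)^2\big)$ comparisons in the first case and $\tilde O\!\big(\theta_{k+1}^2/(\theta_{k+1}-\theta_i)^2\big)$ in the second. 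Summing over $i\in H$, the members with $\theta_i\ge\theta_k/2$ (resp. $\theta_i\le 2\theta_{k+1}$) contribute exactly the two sums in the statement, while the remaining members contribute $\tilde O(1)$ each and are absorbed into the $k$ and $\sum_{i\ge k+1}\theta_i/\theta_k$ terms as above. A union bound over the $\mathrm{poly}(n)$ confidence events, each failing with probability $1/\mathrm{poly}(n)$, shows the output equals $\{1,\dots,k\}$ with high probability, and adding the stage costs gives the claimed bound.

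The step I expect to be the main obstacle is making both stages genuinely instance optimal \emph{without} knowing the instance: the partition into ``easy'' and ``hard'' items, the batch sizes in the counting step, and the per-item stopping times in the race must all be driven purely by observed counts, and one must show via a charging argument that the resulting random total work is w.h.p. within polylog of $\tfrac nl+k+\tfrac{\sum_{i\ge k+1}\theta_i}{\theta_k}+\sum(\cdots)$ on \emph{every} instance — in particular that no near-cutoff item is ever substantially over-tested (which rules out a uniform-precision strategy, whose cost can blow up multiplicatively) and that the bootstrap reference is never badly off even when the scores span many orders of magnitude. A related difficulty is the regime where $\theta_k$ and $\theta_{k+1}$ are far apart, so $H$ does not straddle the cut and the last two sums vanish: there the algorithm must detect the large gap from the coarse data alone and terminate the refinement stage early rather than race phantom boundary items.
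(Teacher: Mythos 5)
Your plan takes a genuinely different route from the paper. The paper never searches for a reference item: its multi-wise subroutine (Algorithm~\ref{alg:malg_4}) samples $\tilde{O}(m/l)$ random $l$-subsets, queries each a data-driven number $Q$ of times, and uses per-item indicator functions $X_{i,u,\alpha,\beta,\gamma}$ (Lemma~\ref{lem:indicator}) that compare an item's observed win frequency inside a random subset to the empirical win frequencies of the other subset members; this classifies items against \emph{rank-based} thresholds directly, with no explicit multiplicative score estimate and no reference score $\theta_r$. Its pairwise subroutine (Algorithm~\ref{alg:malg_3}) draws a random graph of $\tilde O(m)$ pairs, labels edges by observed win ratios, and then certifies rank relations via length-$\le\kappa$ label-monotone paths (Lemmas~\ref{lem:label}, \ref{lem:graph_path}, \ref{lem:main_3}), recursing once a constant fraction of the set is resolved --- again not a successive-elimination race against a fixed hub $r$. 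Both designs aim at the same bound, but the paper's sidesteps exactly the calibration issues you flag.

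The concrete gap I see is the cost claim for your coarse stage. A single $l$-wise query returns \emph{one} winner, not $\Theta(l)$ pairwise verdicts, so inserting $r$ into a random $(l-1)$-subset and querying it $\tilde O(1)$ times yields only $\tilde O(1)$ observed winners. In the hard regime where $\theta_r$ is near the median of the subset, $r$ itself wins about a $1/l$ fraction of queries, so $\tilde O(1)$ queries will typically see $r$ win zero times; likewise any $j$ with $\theta_j=\Theta(\theta_r)$ wins rarely, so $\tilde O(1)$ queries cannot distinguish $\theta_j\approx\theta_r$ from $\theta_j\ll\theta_r$. Hence $\tilde O(1)$ queries per subset do \emph{not} ``reveal, for each item $i$, whether $\theta_i$ is $\gg\theta_r$, $\Theta(\theta_r)$, or $\ll\theta_r$'' at amortized cost $\tilde O(1/l)$ per item. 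The paper avoids this by letting the per-subset budget $Q$ grow (the analysis takes $Q=\tilde\Omega\bigl(1+\frac{l(k+\sum_{j\ge k}\theta_j)}{m\theta_k}\bigr)$, found by doubling); the resulting multi-wise stage already costs $\tilde O\bigl(n/l + k + \sum_{i>k}\theta_i/\theta_k\bigr)$, not $\tilde O(n/l)$, and a separate argument (Claim~\ref{claim:big_l} and the recursion lemma) is needed to show that what remains after this stage is cheap enough for the pairwise algorithm. A secondary gap is your refinement-stage stopping rule: a constant-factor-accurate $r$ does not tell you $\theta_k/\theta_r$, so ``certify whether $\theta_i\ge\theta_k$'' must be driven purely by confidence-interval rank counts, and one must prove this terminates at $\tilde O(\theta_k^2/(\theta_k-\theta_i)^2)$ per item (the paper's monotone-path machinery is precisely what replaces this reasoning). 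You flag both difficulties honestly in your final paragraph, but they are where the actual proof lives.
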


We note that in Theorem \ref{thm:upper}, the notation $\tilde{O}(\cdot)$ hides polylog factors of $n$ and  $\frac{1}{\theta_k - \theta_{k+1}}$.

Next, we present a matching lower bound result, which shows that our sample complexity in Theorem \ref{thm:upper} is nearly instance optimal.
\begin{theorem}\label{thm:lower}
For any (possibly active) ranking algorithm $A$, suppose that $A$ uses comparisons of set size at most $l$. Even when the algorithm $A$ is given the values of $\{\theta_1,...,\theta_n\}$ (note that $A$ does not know which item takes the preference score $\theta_i$ for each $i$),  $A$ still needs
\[
\Omega  \Biggl( \frac{n}{l}  + k + \frac{\sum_{i \geq k  + 1} \theta_{i}}{\theta_k} +  \sum_{i \geq k + 1, \theta_i \geq \frac{\theta_k}{2}} \frac{\theta_{k}^2}{(\theta_k - \theta_i)^2}   +\sum_{ i: i\leq k, \theta_i \leq 2\theta_{k+1}}\frac{\theta_{k + 1 }^2}{(\theta_{k+1} - \theta_i)^2}  \Biggr)
\]
comparisons to identify the top-$k$ items with probability at least $7/8$.
\end{theorem}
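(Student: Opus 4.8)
The plan is to prove each of the five terms in the lower bound separately, since the maximum of the five lower bounds is, up to a constant, the same as their sum. I will establish each term by exhibiting a family of hard instances consistent with the given multiset $\{\theta_1,\dots,\theta_n\}$ (recall the algorithm knows the multiset but not the assignment to labels), and then apply a change-of-measure / information-theoretic argument. Concretely, for each term I would construct two instances that are statistically close under the queries any low-sample algorithm can make, yet have different top-$k$ sets, so that no algorithm succeeding with probability $7/8$ can distinguish them with too few comparisons.

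The first two terms, $n/l$ and $k$, are the "coverage" terms and are the easiest: $n/l$ holds because each comparison touches at most $l$ items, and if some item is never placed in any queried set the algorithm has zero information about whether it belongs in the top-$k$; formally, take a uniformly random unseen item and swap its score with item $k$ or $k+1$. The $k$ term is similar — the algorithm must in effect "locate" $k$ distinct items, and a packing/fooling argument over which $k$ labels receive the top $k$ scores forces $\Omega(k)$ queries (one can reduce from the fact that with $o(k)$ informative comparisons, a constant fraction of the true top-$k$ items have been touched by no comparison that could reveal their rank). The remaining three terms are the genuinely statistical ones. The term $\sum_{i\ge k+1}\theta_i/\theta_k$ captures the difficulty of confirming that item $k$ beats the pooled mass of the tail: I would consider comparing item $k$ against large subsets of low-score items and observe that the probability item $k$ is *not* selected is $\approx (\sum_{j\in S\setminus\{k\}}\theta_j)/\theta_k$, so to accumulate a constant number of informative observations (a "win" by some tail item) one needs $\Omega(\theta_k/\sum_j \theta_j)$ queries per unit of tail mass — summing over the tail, and handling the swap of item $k$ with a tail item via a KL/Hellinger bound between the resulting choice distributions. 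The two pairwise-gap sums $\sum_{i\ge k+1,\theta_i\ge\theta_k/2}\theta_k^2/(\theta_k-\theta_i)^2$ and $\sum_{i\le k,\theta_i\le 2\theta_{k+1}}\theta_{k+1}^2/(\theta_{k+1}-\theta_i)^2$ are the classical best-arm-style gap terms: for each near-boundary item $i$, swapping $\theta_i$ with $\theta_k$ (resp. $\theta_{k+1}$) changes the top-$k$ set, and the KL divergence between the two instances, restricted to any single comparison, is $O((\theta_k-\theta_i)^2/\theta_k^2)$ times the probability that the comparison is "informative" about the pair — so by the standard divergence-decomposition lemma (sum of per-round KL contributions lower-bounds $\log$ of the inverse error probability), distinguishing instance $i$ requires $\Omega(\theta_k^2/(\theta_k-\theta_i)^2)$ comparisons that involve item $i$. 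Since different $i$'s can be "attacked" by allocating the confusable pair independently, a union/simulation argument (run the algorithm, look at which of the $\Theta(\text{number of near-boundary items})$ hypotheses it is forced to get wrong) yields the full sum rather than just the max over $i$.

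For the technical core I would lean on the change-of-measure inequality in the style of Kaufmann–Cappé–Garivier (or Garivier–Kaufmann): if $N_S(T)$ denotes the (random) number of times set $S$ is queried, then for any two instances $\nu,\nu'$ with different correct answers, $\sum_S \E_\nu[N_S(T)]\,\mathrm{KL}\big(p_\nu(\cdot|S)\,\|\,p_{\nu'}(\cdot|S)\big)\ \ge\ \mathrm{kl}(7/8,1/8)=\Omega(1)$, where $p_\nu(\cdot|S)$ is the MNL choice distribution under $\nu$. The payoff is that the per-set KL for an instance-pair that differs only by swapping $\theta_i$ with a boundary score is supported only on sets containing $i$ (or the relevant boundary item) and is of order $(\theta_k-\theta_i)^2/\theta_k^2$ (with a correction for set composition), which immediately gives the per-$i$ bound; summing the separate budgets over all boundary items — after checking these budgets genuinely compete for the same comparisons — produces $\Omega(\sum_i \cdots)$.

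The main obstacle I anticipate is the additivity across the boundary items: a naive argument only shows that the algorithm needs $\Omega(\max_i \theta_k^2/(\theta_k-\theta_i)^2)$ comparisons, not the sum. To get the sum I would either (i) randomize over which single item is the "planted swap," argue that the algorithm's comparison budget is spread thin so that for a random boundary item it has allocated $o(\theta_k^2/(\theta_k-\theta_i)^2)$ informative comparisons, and conclude it errs on that item with constant probability — contradicting the $7/8$ guarantee uniformly over instances; or (ii) build a single composite instance (with many near-boundary items simultaneously present at their actual scores) and show that correctly deciding all of them forces the comparison counts $N_S(T)$ to satisfy all the per-item KL constraints at once, whose only feasible solution has total mass $\Omega(\sum_i \theta_k^2/(\theta_k-\theta_i)^2)$; approach (ii) is cleaner and is what I would pursue, being careful that a single comparison set of size $l$ can be "informative" for at most $O(1)$ (really, for at most the items it contains, but each contributes its own small KL) of the boundary pairs so the constraints cannot all be satisfied cheaply. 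A secondary nuisance is the $\sum_i\theta_i/\theta_k$ term when the tail has many tiny items: here the right hard instance queries item $k$ against a *batch* of $\Theta(l)$ tail items at once, and one must verify that batching does not let the algorithm confirm "$k$ beats the batch" faster than the individual-item accounting suggests — this follows because the event "some batch-member wins" has probability $\Theta(\sum_{j\in\text{batch}}\theta_j/\theta_k)$ and carries essentially the same information as the union of the individual comparisons, so Wald-type accounting over the tail mass goes through.
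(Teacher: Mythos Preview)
Your high-level decomposition (prove each of the five terms separately and combine via $\max \asymp \text{sum}$) is exactly what the paper does. Where you diverge is in the machinery for the three ``statistical'' terms. The paper does \emph{not} use per-alternative Kaufmann--Capp\'e--Garivier constraints; instead, for each of Theorems~3.1--3.3 it fixes weights $w_i$ (equal to $\theta_k^2/(\theta_k-\theta_i)^2$, $\theta_{k+1}^2/(\theta_{k+1}-\theta_i)^2$, and $\theta_i/\theta_k$ respectively), forms the single process $Z_t=\sum_i (w_i/W)\log\bigl(p_i(S_{1:t},\pi)/p(S_{1:t},\pi)\bigr)$ over the joint law of $(\pi,S)$, shows $Z_t+O(t/W)$ is a supermartingale with increments $|Z_t-Z_{t-1}|=O(1/\sqrt{W})$, applies Azuma, and finishes with weighted AM--GM ($p\le e^{c_1}\sum_i(w_i/W)p_i$) plus the symmetry fact $\Pr_{p_i}[A(S)=\{\pi_1,\dots,\pi_k\}]\le 1/2$. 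This bakes the additivity in from the start: the crucial computation is that the per-round drift satisfies $\sum_{i\in U_t} w_i\cdot\mathrm{KL}(p(\cdot|U_t)\,\|\,p_i(\cdot|U_t))\le O(1)$ regardless of $|U_t|$, because each term is bounded by $\theta_i/\Theta(U_t)$ and these sum to $1$. Your approach~(ii) can be made to work, but only via this same estimate (weight the per-$i$ KCG constraint by $w_i$ and sum); your parenthetical ``each contributes its own small KL'' is the right direction but you never state the inequality that prevents an extraneous factor of $l$, and it is the heart of the matter.

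Two specific gaps. First, your account of the $\sum_{i>k}\theta_i/\theta_k$ term (``confirming $k$ beats the pooled tail'', ``Wald-type accounting over batches'') does not lead to the bound. In the paper this term is obtained by the \emph{same} template as the gap terms---change $\theta_i$ to $\theta_k$ for each $i>k$ and take $w_i=\theta_i/\theta_k$---and the drift calculation again yields $O(1/W)$ per round; no batching argument is involved. Second, for the $\Omega(k)$ term the paper's argument is sharper than a generic packing claim: with $T\le k/4$ comparisons at least $3k/4$ of the true top-$k$ items are never returned as a winner, and for each such $i$ the monotonicity $p_i(S,\pi)\ge p(S,\pi)$ (lowering $\theta_i$ to $\theta_{k+1}$ only raises the likelihood of a transcript in which $i$ never wins) gives a direct bound of $2/3$ on the success probability. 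Your $n/l$ sketch is essentially the paper's (untouched items $k$ and $k{+}1$ are exchangeable).
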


Here we give some intuitive explanations of the terms in the above bounds before introducing the proof overview:
\begin{enumerate}
\item Term $\frac{n}{l}$: Since each comparison has size at most $l$, we need at least $\frac{n}{l}$ comparisons to query each item at least once.
\item  Term  $k$: As the proof will suggest, in order to find the top-$k$ items, we need to observe most items in the top-$k$ set as chosen items from comparisons.
    However, we do not have to observe most items in the bottom-$(n-k)$ set. Therefore, there is no term $n-k$ in the bound.
\item  Term  $\frac{\sum_{i \geq k  + 1} \theta_{i}}{\theta_k} +  \sum_{i \geq k + 1, \theta_i \geq \frac{\theta_k}{2}} \frac{\theta_{k}^2}{(\theta_k - \theta_i)^2}   +\sum_{ i: i\leq k, \theta_i \leq 2\theta_{k+1}} \frac{\theta_{k+1}^2}{(\theta_{ k + 1} - \theta_i)^2}$:
     Roughly speaking,   when $i>k$ and $\theta_i \geq \theta_k/2$, $\Theta\left( \frac{(\theta_k - \theta_i)^2}{\theta_{k}^2} \right)$ is the amount of information that the comparison between item $i$ and item $k$ reveals. So intuitively, we need $\Omega\left( \frac{\theta_{k}^2}{(\theta_k - \theta_i)^2} \right)$ to tell that item $i$ ranks after item $k$. Other quantity can also be understood from an information theoretic perspective.
\end{enumerate}

It is also worthwhile to note that when $l$ is a constant, it's easy to check that
\begin{eqnarray*}
&&\frac{n}{l}  + k + \frac{\sum_{i \geq k  + 1} \theta_{i}}{\theta_k} +  \sum_{i \geq k + 1, \theta_i \geq \frac{\theta_k}{2}} \frac{\theta_{k}^2}{(\theta_k - \theta_i)^2}   +\sum_{ i: i\leq k, \theta_i \leq 2\theta_{k+1}}\frac{\theta_{k+1}^2}{(\theta_{ k + 1} - \theta_i)^2} \\
&=& O\Biggl(\sum_{i =k+1}^n\frac{\theta_{k}^2}{(\theta_k - \theta_i)^2} + \sum_{ i=1}^k\frac{\theta_i^2}{(\theta_{ k + 1} - \theta_i)^2}\Biggr).
\end{eqnarray*}
This is a simpler expression of the instance optimal sample complexity when $l$ is a constant.

Based on the sample complexity results in Theorem \ref{thm:upper} and \ref{thm:lower}, we summarize the main theoretical contribution of this paper:
\begin{enumerate}
\item We design an active ranking algorithm  for identifying top-$k$ items under the popular MNL model.
      We further prove a matching lower bound, which establishes that the proposed algorithm is nearly instance optimal.
\item Our result shows that the improvement of the multi-wise comparison over the pairwise comparison depends on the difficulty of the underlying instance. Note that the only term in the sample complexity involving $l$ is $\frac{n}{l}$. Therefore, the multi-wise comparison makes a significant difference from the pairwise comparison only when $\frac{n}{l}$ is the leading term in the sample complexity.

    Therefore,  unless the underlying instance is really easy (e.g., the instance-adaptive term  $k+\frac{\sum_{i \geq k  + 1} \theta_{i}}{\theta_k} +  \sum_{i \geq k + 1, \theta_i \geq \frac{\theta_k}{2}} \frac{\theta_{k}^2}{(\theta_k - \theta_i)^2}   +\sum_{ i: i\leq k, \theta_i \leq 2\theta_{k+1}}  \frac{\theta_{k + 1 }^2}{(\theta_{ k + 1} - \theta_i)^2}$ is $o(n)$. One implication is that most of the $\theta_i$'s among $\theta_{k+1},...,\theta_n$ are much smaller than $\theta_k$), the pairwise comparison is as helpful as the multi-wise comparison.
\end{enumerate}

\subsection{Proof overview}
In this section, we give some very high level overviews of how we prove Theorem \ref{thm:upper} and Theorem \ref{thm:lower}.

\subsubsection{Algorithms}
To prove Theorem \ref{thm:upper}, we consider two separate cases: $l= O(\log n)$ or $l = \Omega(\log n)$.
\begin{enumerate}
\item
In the first case, by losing a log-factor, we can just focus on only using pairwise comparisons. Our algorithm first randomly select $\tilde{O}(n)$ pairs and proceed by querying all of them once per iteration. After getting the query results, by a standard binomial concentration bound, we are able to construct a confident interval of $\frac{\theta_i}{ \theta_j}$ for each pair $(i, j)$ selected by the algorithm in the beginning. In a high level, our algorithm goes by declaring $\theta_i \geq \theta_j $ for pair $i, j$, if the lower bound of the corresponding confident interval is bigger or equal to $1$, or if there already exists $d$ items $(i = i_1), i_2, \ldots, (i_d = j)$ such that we have already declared $\theta_{i_r} \geq \theta_{i_{r + 1}}$ for all $r \in [d - 1]$. We are able to show that, if $\theta_1 \geq \theta_2 \geq \cdots \geq \theta_n$, then for all $i, j \in [n]$ with $j \geq i + \frac{n}{4}$, the algorithm will successfully declare $\theta_i \geq \theta_j$ after $O\left( \sum_{i =k+1}^n\frac{\theta_{k}^2}{(\theta_k - \theta_i)^2} + \sum_{ i=1}^k\frac{\theta_i^2}{(\theta_{ k + 1} - \theta_i)^2} \right)$ many total queries. Thus, we can remove at least $\frac{n}{4}$ items and recurse on a  smaller set.

\item The more interesting case is when $l = \Omega(\log n)$. As we have argued before, it is only beneficial to use multi-wise comparisons when $k+\frac{\sum_{i \geq k  + 1} \theta_{i}}{\theta_k} +  \sum_{i \geq k + 1, \theta_i \geq \frac{\theta_k}{2}} \frac{\theta_{k}^2}{(\theta_k - \theta_i)^2}   +\sum_{ i: i\leq k, \theta_i \leq 2\theta_{k+1}}  \frac{\theta_{k + 1 }^2}{(\theta_{ k + 1} - \theta_i)^2} = o(n)$. This implies that $\frac{\sum_{i \geq k  + 1} \theta_{i}}{\theta_k} = o(n-k)$ and therefore among $\theta_{k+1},...,\theta_{n}$, there are more than half of $\theta_i$'s whose value is smaller than some constant fraction of $\theta_k$. Thus, intuitively, if we select a random subset of items that contains $\theta_k$ and keep querying this set, then,  instead of seeing all items in this set with roughly equal probability, we will be seeing item $k$ much more often than the median of frequencies of items in the set. Thus, our algorithm can select an item if it ``appears very often when querying a set containing it''. We will show that, if the number of total queries is
\[
\Omega\Biggl(\frac{n}{l}+ k+\frac{\sum_{i \geq k  + 1} \theta_{i}}{\theta_k} +  \sum_{i \geq k + 1, \theta_i \geq \frac{\theta_k}{2}} \frac{\theta_{k}^2}{(\theta_k - \theta_i)^2}   +\sum_{ i: i\leq k, \theta_i \leq 2\theta_{k+1}}  \frac{\theta_{k + 1 }^2}{(\theta_{ k + 1} - \theta_i)^2} \Biggr),
\]then we will be able to select all the top $k$-items while not selecting any of the bottom $n/2$ items.   Thus, we can remove at least $\frac{n}{2}$ items and recurse on a smaller set.
\end{enumerate}

\subsubsection{Lower bounds}
To prove Theorem \ref{thm:lower}, we establish several lower bounds and combine them using a simple averaging argument. Most of these lower bounds follow the following general proof strategy:
\begin{enumerate}
\item For a given instance $(n,k,l,\theta_1,...,\theta_n)$, consider other instances on which no algorithm can output $\{\pi_1,...,\pi_k\}$ with high probability \footnote{Recall that $\pi_i$ denotes the initial label of $i$-th item given as the input to the algorithm, and thus the true top-$k$ items are labeled by  $\{\pi_1,...,\pi_k\}$.}.
    For example if we just change $\theta_{k+1}$ to $\theta_k$, then no algorithm can output $\{\pi_1,...,\pi_k\}$ with probability more than $1/2$. This is because item $k$ and item $k+1$ look the same now and thus all the algorithms will output $\{\pi_1,...,\pi_k\}$ and $\{\pi_1,...,\pi_{k-1},\pi_{k+1}\}$ with the same probability in the modified instance.
\item We then consider a well-designed distribution over these modified instances. We show that for any algorithm $A$ with not enough comparisons, the transcript of running $A$ on the original instance distributes very closely to the transcript of running $A$ on the well-design distribution over modified instances.
\item Finally, since the transcript also includes the output, step 2 will tell us that if $A$ does not use enough comparisons, then $A$ must fail to output $\{\pi_1,...,\pi_k\}$ with some constant probability.
\end{enumerate}


\subsection{Related Works}
\label{sec:related}

Rank aggregation from pairwise comparisons is an important problem in computer science, which has been widely studied under different comparison models. Most existing works focus on the non-active setting: the pairs of items for comparisons are fixed (or chosen completely at random) and the algorithm cannot adaptively choose the next pair for querying. In this non-active ranking setup, when the goal is to obtain a global ranking over all the items, Negahban et al. \cite{Negahban12RankCentrality} proposed the \emph{RankCentrality} algorithm under the popular Bradley-Terry model, which is a special case of the MNL model for pairwise comparisons. Lu and Boutilier \cite{Craig:11} proposed a ranking algorithm under the Mallows model. Rajkumar and Agarwal  \cite{Rajkumar14} investigated different statistical assumptions (e.g., generalized low-noise condition) for guaranteeing to recover the true ranking. Shah et al. \cite{Shah15Sto} studied the ranking aggregation under a non-parametric comparison model---strong stochastic transitivity (SST) model, and converted the ranking problem into a matrix estimation problem under shape-constraints. Most machine learning literature assumes that there is a true global ranking of items and the output of each pairwise comparison follows a probabilistic model. Another way of formulating the ranking problem is via the minimum feedback arc set problem on tournaments, which does not assume a true global ranking and aims to find a ranking that
minimizes the number of inconsistent pairs. There is a vast literature on the minimum feedback arc set problem and here we omit the survey of this direction (please see \cite{Mathieu07} and references therein).  
Due to the increasing number of items, it is practically more useful to identify the top-$k$ items in many internet applications. Chen and Suh \cite{Chen15}, Jang et al. \cite{Jang13},  and Suh et al. \cite{Suh16Adversarial} proposed various spectral methods for top-$k$ item identification under the BTL model or mixture of BTL models. Shah and Wainwright \cite{Shah15Sim} proposed a counting-based algorithm under the SST model and Chen et al. The notion of instance optimal was originally defined and emphasized as an important concept in \cite{FaginLN03} for identifying the top-$k$ objects from sorted lists. \cite{Chen:17:SODA} suggested that notion ``instance optimal'' is necessary for rank aggregation from noisy pairwise comparisons in complicated noise models and further improved \cite{Shah15Sim} by proposing an algorithm that has competitive ratio $\tilde{\Theta}(\sqrt{n})$ compared to the best algorithm of each instance and proving $\tilde{\Theta}(\sqrt{n})$ is tight.

In addition to static rank aggregation,  active noisy sorting and ranking problems have received a lot of attentions in recent years. For example, several works \cite{Braverman08, Ailon11, Jamieson11,Wauthier13} studied the active sorting problem from noisy pairwise comparisons and explored the sample complexity to approximately recover the true ranking in terms of some distance function (e.g., Kendall's tau). Chen et al. \cite{Chen13} proposed a Bayesian online ranking algorithm under the mixture of BTL models. Dwork et al. \cite{Dwork01} and Ailon et al. \cite{Ailon08} considered a related Kemeny optimization problem, where the goal is to determine the total ordering that minimizes the sum of the distances to different permutations. For top-$k$ identification, Braverman et al. \cite{BMW16} initiated the study of how round complexity of active algorithms can affect the sample complexity. Sz\"{o}r\'{e}nyi et al. \cite{Szorenyi:15} studied the case of $k=1$ under the BTL model.   Heckel et. al. \cite{Heckel:16} investigated the active ranking under a general class of nonparametric models and also established a lower bound on the number of comparisons for parametric models. A very recent work by Mohajer and Suh \cite{Mohajer:16:active} proposed an active algorithm for top-$k$ identification under a general class of pairwise comparison models, where the instance difficulty is characterized by the key quantity $\min_{i \in \{1,\ldots, k\}} \min_{j: j>i} (p_{ij}-0.5)^2$. Here, $p_{ij}$ is the probability of item $i$ is preferred over item $j$. However, according to our result in Theorem \ref{thm:lower}, the obtained sample complexities in previous works are not instance optimal. We note that the lower bound result in Theorem \ref{thm:lower} holds for algorithms even when all the values of $\theta_i$'s are known (but without the knowledge of which item corresponds to which value) and thus characterizes the difficulty of each instance.
Moreover, we study the the multi-wise comparisons, which has not been explored in ranking aggregation literature but has a wide range applications.

Finally, we note that the top-$k$ ranking problem is related to the best $k$ arm identification in multi-armed bandit literature \cite{Bubeck:13,Jamieson:14,Zhou:14,Chen:17:adaptive}. However, in the latter problem, the samples are \emph{i.i.d.} random variables rather than comparisons and the goal is to identify the top-$k$ distributions with largest means.

\section{Algorithm}
\label{sec:ub}


For notational simplicity, throughout the paper we use the words w.h.p. to denote with probability $1 - 1/n^c$ for sufficiently large constant $c$.

\subsection{Top-$k$ item identification (For logarithmic $l$)}

For  $l  = O(\log n)$, we can always use pairwise comparisons by losing a polylog factor. Therefore, we only focus on the case when $l = 2$ in this section.

Before presenting the algorithm, let us first consider a graph $G = (V= [n], E)$ where each edge is labeled with either $\approx_l, \geq_l, \leq_l , >_l $ or $<_l$ (see Line \ref{algo:monotone} in Algorithm \ref{alg:malg_3}). Based on the labeling of edges, we give the following definition of label monotone, which will be used in Algorithm \ref{alg:malg_3}.
\begin{definition}[Monotone]
We call a path $i_1 \to i_2 \to \cdots \to i_d$ strictly label monotone if:
\begin{enumerate}
\item For every $j \in [d - 1]$, the edge $(i_j, i_{j + 1})$ is labeled with either $\approx_l, \geq_l$ or $>_l$. \label{p:1}
\item There exists at least one edge  $(i_j, i_{j + 1})$ with label $>_l$.
\end{enumerate}
Moreover, we call a path ``label monotone'' if only property \ref{p:1} holds.
\end{definition}

\begin{theorem}
\label{thm:ublogl}
For every $m$ items with $\theta_1 \geq \theta_2 \geq \cdots \geq \theta_n$, Algorithm  \ref{alg:malg_3}, on given a random permutation of labels $\Omega=[n]$ and $k$,  returns top-$k$ items w.h.p. using
$$ O \left(\kappa^7 \cdot  \left( k + \sum_{i = k + 1}^n \frac{\theta_{k }^2}{(\theta_k - \theta_i)^2}   +\sum_{i = 1}^k \frac{\theta_{k + 1 }^2}{(\theta_{ k + 1} - \theta_i)^2}  \right)\right) $$
total number of pairwise comparisons.
\end{theorem}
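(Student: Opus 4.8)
The plan is to prove Theorem~\ref{thm:ublogl} by a recursive elimination scheme. The algorithm runs in $O(\log n)$ \emph{phases}; in a phase it works with the current set $V$ of surviving items (initially $V=[n]$) and a current target $k'$ (initially $k$), spends a bounded number of pairwise comparisons, and outputs a correct partial classification that lets it discard a constant fraction of $V$, each discarded item being certifiably ``in the current top $k'$'' or ``not in the current top $k'$''; it then recurses on the residual set with $k'$ decremented by the number of top items just committed. After $O(\log n)$ phases the residual set has constant size and is handled directly. The reason the phase costs sum to the bound in the theorem is a monotonicity fact: if $V$ retains $k_V$ of the true top items, then the $k_V$-th largest score inside $V$ is $\ge\theta_k$ and the $(k_V{+}1)$-st is $\le\theta_{k+1}$, so the cut gap only grows; and since $x\mapsto x^2/(x-c)^2$ is decreasing for $x>c>0$ while $y\mapsto y^2/(c-y)^2$ is increasing for $0<y<c$, each term of the two instance sums restricted to $V$ is bounded by its global counterpart. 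Thus the phase cost is always $\tilde O$ of the global expression, and the $O(\log n)$ phases, the polylog path lengths, and the concentration union bounds below are what the factor $\kappa^{7}$ absorbs.

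Inside one phase on $m=|V|$ items I would first pick $\tilde O(m)$ i.i.d.\ random pairs, forming a graph $H$, and record as a deterministic property holding w.h.p.\ over this draw that $H$ is ``monotonically well connected'': ordering vertices by score, any two are joined by a rank-monotone path of length $O(\log m)$, and any vertex has a neighbour inside any constant-fraction rank-window. The algorithm then queries the pairs of $H$ --- crucially with \emph{unequal, adaptively growing} budgets, investing many samples only in pairs whose resolution is still needed --- and after each batch forms a confidence interval $[\ell_{ij},u_{ij}]$ for $\theta_i/\theta_j$ from a binomial Chernoff bound, from which the edge label $>_l,\ge_l,\approx_l,\le_l,<_l$ is read off as in Algorithm~\ref{alg:malg_3}. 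A union bound over the $\tilde O(m)$ pairs and the polylog batches makes all these intervals valid w.h.p.\ (this is the ``w.h.p.'' in the statement). \emph{Soundness} then follows: $>_l$ (resp.\ $\ge_l$) certifies $\theta_i>\theta_j$ (resp.\ $\theta_i\ge\theta_j$), $\approx_l$ certifies $\theta_i/\theta_j\in[1{-}c,1{+}c]$ for the current tolerance $c$, and since the algorithm only declares $\approx_l$ at a tolerance below (margin of the strict step)$/$(path length), a strictly label monotone path from $u$ to $v$ in the sense of the definition preceding the theorem genuinely certifies $\theta_u>\theta_v$. Hence the partial order the algorithm maintains never contains a false relation, and nothing is ever misclassified.

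For the elimination rule: commit $v$ to the output once $\ge m-k'$ items are certified strictly below it, and discard $v$ once $\ge k'$ items are certified strictly above it. A short counting argument (case split on where $k'$ falls relative to $m$) shows that once every surviving pair whose ranks differ by $\ge m/4$ has its strict certificate, at least $m/4$ items can be committed or discarded correctly, so it suffices to establish those certificates. Every such certificate crosses the $k'/(k'{+}1)$ cut, so it consists of one genuinely strict ($>_l$) step --- the crossing --- plus cheap ($\approx_l$) or strict-downhill steps inside the current top part and inside the current bottom part. Organizing the certificates to reuse common near-boundary structure, the comparisons used fall into three groups that mirror the three sums in the theorem: the $\tilde O(m)$ cheap resolutions, together with the $O(k')$ for the $>2\theta_{k'+1}$ and $<\theta_{k'}/2$ leftovers, giving the $\tilde O(m+k')$ overhead; the small-margin top-vs-bottom comparisons needed to push each high-scoring bottom item $i$ below item $k'$, costing $\tilde O\!\big(\theta_{k'}\theta_i/(\theta_{k'}-\theta_i)^2\big)$ and summing over the $i>k'$ with $\theta_i\ge\theta_{k'}/2$ to $\sum \theta_{k'}^2/(\theta_{k'}-\theta_i)^2$; and symmetrically the comparisons keeping the top bottom-item (score $\theta_{k'+1}$) below each weak top item $i$, costing $\tilde O\!\big(\theta_{k'+1}\theta_i/(\theta_i-\theta_{k'+1})^2\big)$ and summing over $i\le k'$ with $\theta_i\le2\theta_{k'+1}$ to $\sum\theta_{k'+1}^2/(\theta_{k'+1}-\theta_i)^2$. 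This is the per-phase bound with $(k',\theta)$ restricted to $V$, and by the first paragraph the $O(\log n)$ phases add up to the claimed bound.

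I expect the accounting in the third paragraph to be the real obstacle, for two reasons. First, the random graph $H$ must actually carry the certifying paths being charged --- short rank-monotone paths inside the current top part and inside the current bottom part, and crossing edges near the cut of large enough margin --- and one must verify that the ``cheap'' part of every certificate never forces more than polylog samples on any edge. Second, and harder, the sampling precision charged to an edge of a certificate is set not by that edge's own score ratio but by the \emph{margin of the strict inequality being certified through it}: a certificate of a margin-$\beta$ inequality forces every edge on it to precision $\tilde\Omega(\beta)$, and one must route the certificates so that this does \emph{not} multiply the totals by an extra factor of $m$ but instead telescopes into precisely the three sums above. Getting the recursion constants so that ``remove rank $\le k'-\Theta(m)$ or rank $\ge k'+\Theta(m)$'' always yields $\ge m/4$ removals for every position of $k'$, and dealing with exact ties among the $\theta_i$ away from the cut (ties across the cut are excluded by $\theta_k>\theta_{k+1}$, which is why $1/(\theta_k-\theta_{k+1})$ sits only inside the polylog), are secondary nuisances.
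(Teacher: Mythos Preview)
Your high-level architecture --- recursive elimination of a constant fraction per phase, random pair graph, rank-monotone paths of polylog length, soundness of the edge labels via binomial concentration, the commit/discard rule, and the monotonicity argument that lets per-phase costs be bounded by the global instance sums --- all match the paper. The place where you diverge, and where your own last paragraph correctly flags a gap, is the accounting.

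You describe the algorithm as using ``unequal, adaptively growing budgets, investing many samples only in pairs whose resolution is still needed.'' Algorithm~\ref{alg:malg_3} does not do this: in each round of the while loop it queries \emph{every} sampled pair exactly once, so after $q$ rounds every edge has exactly $q$ samples and the label thresholds $1+4\sqrt{\kappa/q}$ and $1+32\kappa\sqrt{\kappa/q}$ are global. Your per-edge charging scheme in the third paragraph is therefore analyzing a different algorithm, and the ``route the certificates so that this does not multiply the totals by an extra factor of $m$'' worry you raise is exactly the obstruction to making that scheme work as written.

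The paper sidesteps the charging problem entirely with a Markov-inequality trick you are missing. Suppose $k\le m/2$ and set
\[
\nu \;=\; \frac{1}{m-k}\sum_{i=k+1}^{m}\frac{\theta_k^2}{(\theta_k-\theta_i)^2}.
\]
Since $\nu$ is an \emph{average}, by Markov at least half of the indices $i\in\{k{+}1,\dots,m\}$ satisfy $\frac{\theta_k^2}{(\theta_k-\theta_i)^2}\le 2\nu$; in particular every $j$ in the bottom quarter $H=\{\tfrac{3}{4}m+1,\dots,m\}$ does. For any such $j$ and any $i\le k$, along the rank-monotone path of length $\le\kappa$ from $i$ to $j$ some consecutive ratio is at least $(\theta_i/\theta_j)^{1/\kappa}\ge (1+1/\sqrt{2\nu})^{1/\kappa}$, and once the \emph{uniform} budget reaches $q=\Omega(\kappa^5\nu)$ this exceeds the strict-label threshold $1+32\kappa\sqrt{\kappa/q}$. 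Hence at $q=\Theta(\kappa^5\nu)$ every $j\in H$ acquires $k$ strict certificates and lands in $\Omega_b$, so $|\Omega_b|\ge m/4$ and the phase ends. The phase cost is then simply
\[
q\cdot s \;=\; \Theta(\kappa^5\nu)\cdot m\kappa \;=\; \Theta\!\left(\kappa^6\sum_{i=k+1}^{m}\frac{\theta_k^2}{(\theta_k-\theta_i)^2}\right),
\]
with no per-certificate routing needed. The case $k>m/2$ is symmetric with $\nu=\frac{1}{k}\sum_{i\le k}\theta_i^2/(\theta_{k+1}-\theta_i)^2$, and the extra $\kappa$ from the $O(\kappa)=O(\log n)$ recursion depth gives the stated $\kappa^7$. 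This single averaging step is what converts ``uniform $q$ times $m\kappa$ edges'' into the instance sum without the factor-$m$ blowup you were worried about.
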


For the page limit, we defer the proof of Theorem \ref{thm:ublogl} to Appendix \ref{sec:ubapp}. We only provide the pseudocode in Algorithm \ref{alg:malg_3}. In Algorithm \ref{alg:malg_3}, we note that a different letter $m$ (instead of $n$) is used for denoting the set size because we will run the algorithm recursively with smaller sets. And also notice that the parameter $\kappa = \Omega( \log^2 n)$ regardless of the value of $m$. We also defer our result for superlogarithmic $l$ to Appendix \ref{sec:ubapp}.

\newcommand{\myalgthree}{\textsf{AlgPairwise}}
\begin{algorithm}
\caption{\myalgthree}\label{alg:malg_3}
\begin{algorithmic}[1]
\State $\bold{Parameter}$: $\kappa = \Omega( \log^2 n)$.
\State $\bold{Input}$: A set of randomly permuted labels $\Omega$ with $|\Omega| = m$, $k$: number of top items. 
\State Uniformly at random sample $s = m \kappa $ subsets $S_1, \cdots S_s$ of $\Omega$, each of size $2$. Associate these subsets with a graph $G = (\Omega, E)$, where each edge $e_u \in E$ consists of all the vertices in $S_u$ for $u \in [s]$.
\State $q = 0$, $\Omega_{g} = \emptyset, \Omega_{b} = \emptyset, S = \emptyset$.
\While{true}
\State $q \leftarrow q + 1$.
\State Query each set $1$ time, obtain in total $s$ query results $\{R_{u, q} \}_{u \in [s]}$. ($R_{u,q}$ indicates the reported most favorable item)
\State For all $u \in [s]$, for $\{i, j\}= S_u$, let $\tilde{\theta}_{i, j} =  \frac{1}{q}\sum_{p \in [q]} 1_{R_{u, p}  = i} \enspace.$
\State For each edge $(i, j)\in E$,  we label it as: \label{algo:monotone}
\begin{enumerate}
\item $i \approx_l j$ if $ \frac{\tilde{\theta}_{i, j} }{\tilde{\theta}_{j, i} } \in \Bigl[ \frac{1}{  1 +  4\sqrt{ \frac{\kappa}{q}}} ,  1 + 4\sqrt{\frac{\kappa}{q}} \Bigr]$
\item $i \geq_l j$ if $ \frac{\tilde{\theta}_{i, j} }{\tilde{\theta}_{j, i} } \in \Bigl(1 + 4 \sqrt{\frac{\kappa}{q}} ,     1 + 32 \kappa  \sqrt{\frac{\kappa}{q}} \Bigr)$
\item  $i >_l  j$ if $\frac{\tilde{\theta}_{i, j} }{\tilde{\theta}_{j, i} }  \in \Bigl[ 1 + 32 \kappa  \sqrt{\frac{\kappa}{q}}, \infty \Bigr)$
\item $i \leq_l j$ if $\frac{\tilde{\theta}_{i, j} }{\tilde{\theta}_{j, i} }  \in \Bigl( \frac{1}{ 1 + 32 \kappa  \sqrt{\frac{\kappa}{q}}}  , \frac{1}{  1 +  4\sqrt{ \frac{\kappa}{q}}}  \Bigr)$
\item  $i <_l  j$ if $\frac{\tilde{\theta}_{i, j} }{\tilde{\theta}_{j, i} }  \in \Bigl[0,  \frac{1}{ 1 + 32 \kappa  \sqrt{\frac{\kappa}{q}}} \Bigr]$
\end{enumerate}
\State For every $i, j \in [m]$, we call $j \texttt{>>}_l i$ if there exists a strictly label monotone path of length at most $\kappa$ from $j$ to $i$.
\State For each $i \in [m]$, if there exists at least $k$ many $j \in [m]$ such that $j \texttt{>>}_l i$, then add $i$ to $\Omega_b$. ($\Omega_b$ is the subset of items that we are sure not in top-$k$.)
\State For each $i \in [m]$, if there exists at least $m - k$ many $j \in [m]$ such that $i \texttt{>>}_l j$, then add $i$ to $\Omega_g$. ($\Omega_g$ is the subset of items that we are sure in top-$k$.)
\State Break if $|\Omega_g \cup \Omega_b|  \geq \frac{m}{4} $.
\EndWhile
\State   $\Omega' = \Omega - \Omega_g - \Omega_b, k' = k- |\Omega_g|$, $S = S \cup \Omega_g \cup \myalgthree (\Omega', k')$.
\State $\bold{Return}$ $S$.
\end{algorithmic}
\end{algorithm}

\newpage

\section{Lower bounds}
\label{sec:lb}
We will prove lower bounds on the number of comparison used by any algorithm which identifies top-$k$ items even when the values of preference scores $\{\theta_1,...,\theta_n\}$ are given to the algorithm. (The algorithm just do not know which item has which $\theta_i$). For the page limit, all the proofs are deferred to Appendix \ref{app:lb}.

\subsection{Lower bounds for close weights}

\begin{theorem}
\label{thm:lb2ctk}
Assume $\theta_k > \theta_{k+1}$ and $c < 10^{-4}$. For any algorithm $A$ (can be adaptive), if $A$ uses $c \sum_{ i:k+1\leq i, \theta_i \geq \theta_k/2} \frac{\theta_k^2}{(\theta_k- \theta_i)^2}$ comparisons of any size (can be $l$-wise comparison for $2 \leq l \leq n$), then $A$ will identify the top-$k$ items with probability at most $7/8$.
\end{theorem}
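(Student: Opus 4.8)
The plan is to instantiate the change-of-measure strategy from the proof overview with a family of ``collision'' instances, paired with a per-query information bound that is sensitive to the total weight of the queried set; that sensitivity is exactly what will prevent a spurious factor of $l$. Put $B=\{\,i:k+1\le i,\ \theta_i\ge\theta_k/2\,\}$; if $B=\emptyset$ the claimed budget is $0$ and the statement is immediate, so assume $B\neq\emptyset$. Keep the random labeling $\pi$ as a common source of randomness, and for each $i\in B$ let $M_i$ be the instance obtained from the given instance $M_0$ by raising the score of the item labeled $\pi_i$ from $\theta_i$ to $\theta_k$ and leaving all other scores intact. In $M_i$ the two items carrying the $k$-th and $(k{+}1)$-st largest scores are the ones labeled $\pi_k$ and $\pi_i$, both of weight $\theta_k$; since $M_i$ is invariant under transposing those two labels and $\pi$ is uniform, every algorithm outputs $\{\pi_1,\dots,\pi_{k-1},\pi_k\}$ and $\{\pi_1,\dots,\pi_{k-1},\pi_i\}$ with equal probability on $M_i$, and hence outputs the true top-$k$ set of $M_0$, call it $T_0:=\{\pi_1,\dots,\pi_k\}$ (unambiguous because $\theta_k>\theta_{k+1}$), with probability at most $1/2$ on $M_i$. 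Now suppose, toward a contradiction, that $A$ makes at most $T:=c\sum_{i\in B}\theta_k^2/(\theta_k-\theta_i)^2$ comparisons and outputs $T_0$ with probability $>7/8$ on $M_0$.

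Work with the full transcript $\tau=(\pi,\ \text{coins of }A,\ S_1,R_1,\dots,S_T,R_T,\ \widehat S)$ and fix $i\in B$. Since $M_0$ and $M_i$ differ only in the weight of the single label $\pi_i$, the conditional law of $R_t$ given the past coincides under the two instances whenever $\pi_i\notin S_t$; and when $\pi_i\in S_t$ the two response laws are multinomials on $S_t$ whose only difference is the mass placed on $\pi_i$ (the conditional law on $S_t\setminus\{\pi_i\}$ is identical under both). Writing $W_t$ for the total $M_0$-weight of $S_t\setminus\{\pi_i\}$ and $Z_t=W_t+\theta_i$ for the total weight of $S_t$, a direct computation of the $\chi^2$-divergence of two such distributions gives
\[
\chi^2\!\big(R_t^{M_0}\,\big\|\,R_t^{M_i}\big)=\frac{W_t\,(\theta_k-\theta_i)^2}{(W_t+\theta_i)^2\,\theta_k}\ \le\ \frac{(\theta_k-\theta_i)^2}{\theta_k\,Z_t}\,.
\]
The essential point is that a query with large total weight $Z_t$ is almost uninformative about $\pi_i$, so I keep $Z_t$ in the denominator rather than crudely bounding it below by $\theta_i$.

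Next I would apply the chain rule for $\mathrm{KL}$ along the adaptive transcript together with $\mathrm{KL}\le\chi^2$ to obtain $\mathrm{KL}(\tau^{M_0}\|\tau^{M_i})\le \frac{(\theta_k-\theta_i)^2}{\theta_k}\,\mathbb{E}_{M_0}\big[\sum_{t:\pi_i\in S_t}1/Z_t\big]$, and then data processing with the event $\{\widehat S=T_0\}$: its probability is $>7/8$ under $M_0$ and $\le 1/2$ under $M_i$, so the binary KL between those numbers is at least $\gamma:=\mathrm{kl}(7/8,1/2)>0.3$, giving
\[
\gamma\ \le\ \frac{(\theta_k-\theta_i)^2}{\theta_k}\,\mathbb{E}_{M_0}\!\Big[\sum_{t:\pi_i\in S_t}\frac1{Z_t}\Big]\qquad\text{for every }i\in B.
\]
Dividing by $(\theta_k-\theta_i)^2$, summing over $i\in B$, and exchanging the two sums turns the right side into $\frac1{\theta_k}\mathbb{E}_{M_0}\big[\sum_{t=1}^{T}\sum_{i\in B:\pi_i\in S_t}Z_t^{-1}\big]$. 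If $S_t$ contains $m_t$ labels from $\{\pi_i:i\in B\}$ then $Z_t\ge m_t\theta_k/2$ (each such item has weight $\ge\theta_k/2$), so the inner sum is at most $m_t/Z_t\le 2/\theta_k$ and the whole expectation is at most $2T/\theta_k^2$. Comparing with the left side, $\gamma\sum_{i\in B}(\theta_k-\theta_i)^{-2}=\gamma\,\theta_k^{-2}\sum_{i\in B}\theta_k^2/(\theta_k-\theta_i)^2$, yields $T\ge(\gamma/2)\sum_{i\in B}\theta_k^2/(\theta_k-\theta_i)^2$, which contradicts $T\le c\sum_{i\in B}\theta_k^2/(\theta_k-\theta_i)^2$ because $c<10^{-4}<\gamma/2$. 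Hence $A$ cannot output $T_0$ with probability exceeding $7/8$ on $M_0$.

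The step I expect to be the real obstacle is the per-query divergence estimate and its accounting: the naive bound $\chi^2=O((\theta_k-\theta_i)^2/\theta_k^2)$ would only yield the weaker lower bound $\frac1l\sum_{i\in B}\theta_k^2/(\theta_k-\theta_i)^2$, and it is precisely the observation that a heavy query reveals little about any single light item — encoded by retaining $Z_t$ in the denominator and then absorbing it through $Z_t\gtrsim m_t\theta_k$ in the double sum — that removes the $l$. A secondary care point is justifying the chain-rule/change-of-measure identity for an adaptive algorithm with a possibly data-dependent stopping time, and making sure that both the success event on $M_0$ and the transposition-symmetry argument on $M_i$ are expressed as functions of one common transcript $\tau$ that includes $\pi$.
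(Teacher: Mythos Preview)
Your argument is correct and genuinely different from the paper's. Both proofs use the same family of ``collision'' instances $M_i$ (raise $\theta_i$ to $\theta_k$) and ultimately rely on the same per-query information estimate---that a query of total weight $Z_t$ reveals at most $O\!\big((\theta_k-\theta_i)^2/(\theta_k Z_t)\big)$ about item $i$, and that summing this over the items of $B$ present in $S_t$ collapses to $O(\theta_k^{-2})$ regardless of $|S_t|$. The packaging, however, is quite different. The paper forms the weighted average $Z_t=\sum_i (w_i/W)\ln\!\big(p_i/p\big)$, shows $Z_t+4t/W$ is a supermartingale with bounded increments, applies Azuma to control $\Pr[Z_T\le -c_1]$, and finishes with a pointwise likelihood-ratio bound combined with the weighted AM--GM step $\prod p_i^{w_i/W}\le \sum (w_i/W)p_i$. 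Your route is the cleaner modern template: bound per-query $\chi^2$, push through the KL chain rule for adaptive transcripts, apply data processing on the success event to get a per-$i$ inequality, then sum and double-count. What your approach buys is modularity and a sharper constant (you get $\gamma/2>0.15$ rather than the paper's $10^{-4}$), with essentially no martingale machinery. What the paper's approach buys is a single high-probability event on which \emph{all} the likelihood ratios are simultaneously controlled, which can be convenient if one wants a statement about the transcript distribution itself rather than just the output event. Your identification of the crucial step---keeping $Z_t$ in the denominator and absorbing it via $Z_t\ge m_t\theta_k/2$---is exactly the mechanism that makes the bound independent of $l$, and the paper achieves the same effect (somewhat less transparently) through the $\sum_{i\in U_t}\theta_i/(\theta_i+\theta_{-i})\le 1$ step inside its increment bounds.
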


\begin{theorem}
\label{thm:lb2}
Assume $\theta_k > \theta_{k+1}$ and $c < 4\cdot 10^{-4}$. For any algorithm $A$ (can be adaptive), if $A$ uses $c \sum_{ i: i\leq k, \theta_i \leq 2\theta_{k+1}} \frac{\theta_{k+1}^2}{(\theta_{k+1}- \theta_i)^2}$ comparisons of any size (can be $l$-wise comparison for $2 \leq l \leq n$), then $A$ will identify the top-$k$ items with probability at most $7/8$.
\end{theorem}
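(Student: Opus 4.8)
\medskip
\noindent\textbf{Proof proposal.}

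\emph{Reduction and the modified instances.} By the permutation‑equivariance of any algorithm and of the success criterion, one may fix the random labeling and assume the rank‑$j$ item carries label $j$; then ``identifying the top‑$k$'' means ``outputting $\{1,\dots,k\}$''. Write $\Delta_i:=\theta_i-\theta_{k+1}$, $w_i:=\Delta_i^2/\theta_{k+1}^2$ and $I:=\{i\le k:\theta_i\le 2\theta_{k+1}\}$, so that $0<\theta_{k+1}<\theta_i\le 2\theta_{k+1}$ and $0<w_i\le 1$ for $i\in I$. Let $H_0$ be the given instance and, for $i\in I$, let $H_i$ be $H_0$ with the score of item $i$ lowered from $\theta_i$ to $\theta_{k+1}$. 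In $H_i$ items $i$ and $k+1$ have equal score $\theta_{k+1}$, so $H_i$ is invariant under transposing labels $i$ and $k+1$; since that transposition exchanges $\{1,\dots,k\}$ with the distinct set $(\{1,\dots,k\}\setminus\{i\})\cup\{k+1\}$, equivariance gives $\Pr_{H_i}[A\text{ outputs }\{1,\dots,k\}]\le\tfrac12$ for every $i\in I$.

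\emph{The hardness‑weighted mixture.} Let $\bar H:=\sum_{i\in I}p_i H_i$ with $p_i:=(1/w_i)/W$, $W:=\sum_{j\in I}1/w_j$, so the claimed budget is exactly $N=cW$; by the above $\Pr_{\bar H}[A\text{ outputs }\{1,\dots,k\}]\le\tfrac12$. The plan is to show that any $A$ making at most $N$ comparisons (of any sizes) satisfies $\mathrm{TV}\big(\mathcal L_{H_0}(\mathrm{tr}),\mathcal L_{\bar H}(\mathrm{tr})\big)\le\tfrac38$, where $\mathrm{tr}$ is the full transcript (queried sets, reported winners, output). Since the output is a function of the transcript, this yields $\Pr_{H_0}[A\text{ outputs }\{1,\dots,k\}]\le\tfrac12+\tfrac38\le\tfrac78$, and since $\{1,\dots,k\}$ is the unique correct answer on $H_0$ (using $\theta_k>\theta_{k+1}$), $A$ fails on $H_0$ with probability $\ge\tfrac18$, which is the assertion.

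\emph{The total‑variation bound --- the crux.} Let $R_i:=d\mathcal L_{H_i}(\mathrm{tr})/d\mathcal L_{H_0}(\mathrm{tr})$, a product over rounds of per‑round ratios $(\theta^{H_i}_{r_t}/\theta^{H_0}_{r_t})(W^{H_0}_t/W^{H_i}_t)$ (with $W^H_t$ the total score of $S_t$ under $H$), trivial outside the $n_i:=|\{t:i\in S_t\}|\le N$ rounds that query $i$. As $\E_{H_0}[R_i]=1$ and $d\mathcal L_{\bar H}/d\mathcal L_{H_0}=\sum_i p_iR_i$, we have $\mathrm{TV}=\tfrac12\E_{H_0}|\sum_i p_i(R_i-1)|$, which we bound by the second moment $\tfrac12\sqrt{\mathrm{Var}_{H_0}(\sum_i p_iR_i)}$. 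The cross terms cancel in the favorable direction: comparing with the doubly‑perturbed $H_{ij}$, a per‑round identity gives $R_iR_j\le d\mathcal L_{H_{ij}}/d\mathcal L_{H_0}$ (the discrepancy factor is $1$ unless $i,j$ are queried together and lies in $(0,1]$ otherwise --- positivity using $\theta_j\le 2\theta_{k+1}$), hence $\E_{H_0}[R_iR_j]\le 1$ and $\mathrm{Cov}_{H_0}(R_i,R_j)\le 0$ for $i\ne j$, so $\mathrm{Var}_{H_0}(\sum_i p_iR_i)\le\sum_i p_i^2\,\chi^2(\mathcal L_{H_i}(\mathrm{tr})\|\mathcal L_{H_0}(\mathrm{tr}))$. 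For the diagonal, the per‑round $\chi^2$‑increment at a round querying $i$ is $\beta_{t,i}\le\Delta_i^2/(\theta_i W^{H_i}_t)\le w_i$, and the MNL algebra yields the key \emph{per‑round information budget} $\sum_{i\in I:\,i\in S_t}\beta_{t,i}=O(1)$ (from $\Delta_i^2/\theta_i\le 2\theta_{k+1}$ and $W^{H_i}_t\ge\theta_{k+1}\cdot|\{j\in I:j\in S_t\}|$); a standard adaptive estimate then gives $\chi^2_i\le e^{\sum_t\beta_{t,i}}-1\le e^{w_iN}-1$. For the confusions $A$ does not over‑invest in (where $w_iN=O(1)$, so $\chi^2_i=O(w_iN)$) this together with $p_i\propto 1/w_i$ gives $\sum_i p_i^2\chi^2_i=O\!\big(N\sum_i(1/w_i)/W^2\big)=O(N/W)=O(c)$, which for $c<4\cdot 10^{-4}$ is comfortably below $(2\cdot\tfrac38)^2$.

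\emph{Main obstacle.} The whole argument rides on this variance estimate, and two design choices are indispensable. First, the mixture cannot be replaced by a union bound over the pairs $(H_0,H_i)$: a single comparison can simultaneously probe up to $\min\{l,|I|\}$ of the confusions, which would cost such a bound a factor of $l$, whereas the $1/W^2$ coming from the mixing weights exactly absorbs it --- this is why the constant $c$ is independent of the comparison size $l$. Second, the weights must be proportional to the hardness $1/w_i$: an adversarial $A$ may pour queries into an \emph{easy} confusion, blowing up $\chi^2(\mathcal L_{H_i}\|\mathcal L_{H_0})$, and only the correspondingly tiny $p_i$ keeps that term controlled. The genuinely delicate step --- and the one I expect to be the main technical obstacle --- is making the estimate rigorous on atypical transcripts, namely those on which $A$ over‑invests in some confusion and the corresponding $R_i$ grows large: there the inequality $\E|X|\le\sqrt{\E X^2}$ is too lossy, so one must split off this event (of probability $O(|I|/M)$ under $H_0$ by Doob's maximal inequality for the nonnegative martingales $R_i$, and, via the cross‑term supermartingale bound, of controlled probability under each $H_{i'}$ as well), bounding its contribution to $\mathrm{TV}$ directly through $|\sum_i p_i(R_i-1)|\le\sum_i p_iR_i+1$. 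Pushing this truncation through cleanly --- perhaps after first bucketing $I$ by the dyadic scale of $w_i$ and combining the $O(\log)$ buckets by a union/averaging argument --- is where the care goes; it does not affect the order of the bound, only the value of $c$.
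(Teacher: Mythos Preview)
Your setup coincides with the paper's: the same family of perturbed instances $H_i$ (lower $\theta_i$ to $\theta_{k+1}$), the same $\Pr_{H_i}[\text{correct}]\le\tfrac12$ by symmetry, and the same hardness-weighted mixture (your $p_i\propto 1/w_i$ equals the paper's $w_i/W$). The divergence is entirely in how the indistinguishability is established.

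The paper does \emph{not} control $\mathrm{Var}_{H_0}(\sum_i p_iR_i)$. Instead it works with the weighted \emph{log}-likelihood ratio $Z_t=\sum_i p_i\ln R_{i,t}$, shows that $Z_t+\tfrac{t}{W}$ is a supermartingale under $H_0$, and---crucially---that the per-step increment $|Z_t-Z_{t-1}|$ is bounded by $O\big(\theta_{k+1}/(W(\theta_k-\theta_{k+1}))\big)$ \emph{uniformly over all query sets}. Azuma then gives $\Pr_{H_0}[Z_T<-c_1]\le\exp(-\Omega((c_1-c)^2/c))$; on the complementary event one has $p(S,\pi)\le e^{c_1}\prod_i p_i(S,\pi)^{p_i}\le e^{c_1}\sum_i p_i\,p_i(S,\pi)$ by AM--GM, and summing over successful transcripts gives the $7/8$ bound directly.

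Your second-moment route has a genuine gap precisely at the step you flag. The bound $\chi^2_i\le e^{w_iN}-1$ is correct, but $w_iN=cw_iW$ is \emph{not} $O(1)$ in general: take $|I|=2$ with $w_1=1$ and $w_2=\varepsilon\to 0$, so $W\approx 1/\varepsilon$ and $N\approx c/\varepsilon$; an algorithm that always queries $\{1,k+1\}$ has $\chi^2_1$ of order $e^{c/\varepsilon}$, and $p_1^2\chi^2_1\approx\varepsilon^2 e^{c/\varepsilon}\to\infty$. Your claim that ``only the correspondingly tiny $p_i$ keeps that term controlled'' is therefore false: the exponential wins. The proposed truncation does not obviously rescue this. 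Under $H_0$ Doob gives $\Pr[R_i>M]\le 1/M$, but to bound the TV you also need $\sum_i p_i\,\Pr_{H_i}[E_M]$ small, and under $H_i$ the variable $R_i$ is a \emph{reverse} likelihood ratio with growing mean, so neither Markov nor Doob applies to it; the cross-term inequality $\E_{H_i}[R_{i'}]\le 1$ handles $i'\ne i$ but not $i'=i$. Moreover, on $E_M^c$ the nonpositive covariance is no longer available once you insert the indicator, so the diagonalization $\mathrm{Var}\le\sum p_i^2\mathrm{Var}(R_i)$ breaks. The paper's log-ratio approach sidesteps all of this because $\ln R_{i,t}$ has increments bounded by $O(\theta_{k+1}/(\theta_k-\theta_{k+1}))$ per round regardless of the query, and the hardness weighting brings in the extra $1/W$ that makes Azuma land.
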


\subsection{Lower bounds for arbitrary weights}

\begin{theorem}
\label{thm:lb3}
Assume $c < 1/18$. For any algorithm $A$ (can be adaptive), if $A$ uses $c \sum_{ i: i> k} \frac{\theta_i}{\theta_k}$ comparisons of any size (can be $l$-wise comparison for $2 \leq l \leq n$), then $A$ will identify the top-$k$ items with probability at most $7/8$.
\end{theorem}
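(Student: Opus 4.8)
The plan is to construct a hard family of instances on which any algorithm using few comparisons cannot distinguish the true top-$k$ set, and then apply a standard information-theoretic (Le Cam / KL-divergence) argument. Fix the reference instance $(n,k,l,\theta_1,\dots,\theta_n)$ with $\theta_1\ge\cdots\ge\theta_k>\theta_{k+1}\ge\cdots\ge\theta_n$. The idea is that the only way to certify that item $k$ (the weakest of the true top-$k$) belongs in the top-$k$ is to observe it ``win'' in some comparisons, and the total rate at which item $k$ can be made to win, across all comparisons, is governed by the ratio of $\theta_k$ to the masses $\theta_i$ of the items it is grouped with. Concretely, I would consider, for each $j\le k$, the modified instance obtained by swapping the role of item $j$ with a would-be interloper: raise some bottom item's weight (or equivalently, lower $\theta_j$) so that $j$ drops out of the top-$k$. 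On such a modified instance the correct answer changes, so an algorithm that is correct w.p. $>7/8$ on the reference instance must be wrong w.p. $>1/8$ on a uniformly random member of the modified family, and by Pinsker it suffices to bound the expected KL-divergence between the transcript distributions.

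The key computation is to bound, per comparison $S_t$, the KL-divergence between the choice distribution under the reference instance and under a modified instance. Since each modification only rescales one weight by a bounded multiplicative factor, the per-query KL contribution for a query $S_t$ involving item $j$ is $O\!\big((\Delta_j)^2\cdot \Pr[j \mid S_t]\big)$ where $\Delta_j$ is the (relative) perturbation of $\theta_j$ and $\Pr[j\mid S_t]=\theta_j/\sum_{i\in S_t}\theta_i$; crucially $\Pr[j\mid S_t]\le \theta_j/(\theta_j+\theta_k)$ is small precisely when the ``extra'' mass $\sum_{i>k}\theta_i$ overwhelms $\theta_k$. Summing over $t\in[T]$ and over the random choice of which $j$ to perturb, the total expected KL is $O\big(T\cdot \theta_k/\sum_{i>k}\theta_i\big)$ (after choosing the perturbation sizes $\Delta_j$ to be constants), which is $o(1)$ when $T = c\sum_{i>k}\theta_i/\theta_k$ for small enough $c$. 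A chain rule / martingale argument (the transcript is adaptive, so I would decompose the KL of the full transcript as a telescoping sum of conditional per-step KLs, as in step 2 of the lower-bound overview) makes this rigorous for adaptive $A$.

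The main obstacle — and the place I expect to spend the most care — is the adaptivity together with getting the averaging argument to actually yield a contradiction rather than a vacuous bound. The subtlety is that the per-step probability $\Pr[j\mid S_t]$ depends on the algorithm's (adaptive) choice of $S_t$, so I cannot simply factor the sum; instead I would fix a single perturbed coordinate $j$ at a time, bound $\sum_t \E[\Pr[j\mid S_t]]$ by the total number of times $j$ is ``effectively queried'' weighted by its win-probability, and then average over $j$ chosen proportionally to $\theta_j/\theta_k$-type weights so that the budget $T$ gets divided among the coordinates in the right proportion. One has to be careful that a single algorithm run simultaneously has small KL against \emph{most} of the modified instances (not just one), which is exactly what the averaging buys: $\E_j[\mathrm{KL}] = O(T\theta_k/\sum_{i>k}\theta_i)$, hence by Markov the algorithm is close to the reference instance on a constant fraction of modifications, and on each such modification it errs with constant probability, contradicting correctness on the reference instance. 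A secondary technical point is handling the regime where individual $\theta_i$ for $i>k$ are themselves comparable to $\theta_k$ (there the contribution is $O(1)$ per item and the sum $\sum_{i>k}\theta_i/\theta_k$ is at least the count of such items, so the bound is not wasteful); this should be absorbed cleanly by the same estimate since $\theta_i/\theta_k \le 1$ always makes $\Pr[i\mid S_t]\le \tfrac12$.
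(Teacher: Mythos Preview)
Your proposal has a genuine gap: you perturb the wrong set of items. You propose to lower $\theta_j$ for $j\le k$ (or equivalently swap a top item with an interloper) and then bound the per-step KL via $\Pr[j\mid S_t]$. But for a top-$k$ item $j$, the win probability $\Pr[j\mid S_t]=\theta_j/\sum_{i\in S_t}\theta_i$ is entirely under the algorithm's control through its choice of $S_t$; nothing forces it to be small, and in particular it has no a priori connection to $\sum_{i>k}\theta_i$. Your claimed inequality $\Pr[j\mid S_t]\le\theta_j/(\theta_j+\theta_k)$ is false in general (take $S_t=\{j,i\}$ with $i>k$ and $\theta_i\ll\theta_k$), and the subsequent averaging over $j\le k$ cannot produce the denominator $\sum_{i>k}\theta_i$ that the statement requires. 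The quantity $\sum_{i>k}\theta_i/\theta_k$ encodes the difficulty of ruling out each \emph{bottom} item as a potential top-$k$ member, so the perturbation family must live on $\{k+1,\dots,n\}$, not on $\{1,\dots,k\}$.

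The paper's proof does exactly this: for each $i>k$ it raises $\theta_i$ up to $\theta_k$ (so item $i$ ties with item $k$ and no algorithm can output $\{\pi_1,\dots,\pi_k\}$ with probability $>1/2$), and weights these alternatives by $w_i=\theta_i/\theta_k$ so that $W=\sum_i w_i$ is precisely the target quantity. A second point you would run into even with the right perturbation: these are \emph{not} small perturbations (the ratio $\theta_k/\theta_i$ can be arbitrarily large), so your quadratic KL heuristic $O(\Delta_j^2\cdot\Pr[j\mid S_t])$ is not valid here. The paper instead works directly with the weighted log-likelihood ratio $Z_t=\sum_i\frac{w_i}{W}\ln\frac{p_i(S_1\cdots S_t,\pi)}{p(S_1\cdots S_t,\pi)}$ and shows that the choice $w_i=\theta_i/\theta_k$ makes the per-step drift exactly $\ge -1/W$ (because $\sum_{i\in U_t}\theta_i/(\theta_i+\theta_{-i})=1$) and the per-step increment at most $2/W$ in absolute value---the factor $\theta_i$ in $w_i$ precisely cancels the $1/\theta_i$ blowup in the log-ratio when item $i$ wins. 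Azuma then gives the concentration needed, and the finish is the same weighted-averaging step you describe.
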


\begin{theorem}
\label{thm:lb4}
For any algorithm $A$ (can be adaptive), if $A$ uses $k/4$ comparisons of any size (can be $l$-wise comparison for $2 \leq l \leq n$), then $A$ will identify the top-$k$ items with probability at most $2/3$.
\end{theorem}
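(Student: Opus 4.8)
The plan is a short information‑theoretic bound combined with Fano's inequality, carried out on a deliberately padded instance. Fix the instance with $n$ equal to a large polynomial in $k$ (say $n=k^{10}$), $\theta_1=\cdots=\theta_k=1$, and $\theta_{k+1}=\cdots=\theta_n=\delta$ for a small $\delta>0$; the value of $\delta$ is irrelevant to the argument below, but choosing it tiny (e.g.\ $\delta=k^{-10}$) makes $\tfrac nl$ and all the $\theta$‑dependent terms of Theorems~\ref{thm:upper}--\ref{thm:lower} be $O(k)$ even when $l=n$, so that this instance genuinely witnesses the necessity of the ``$+k$'' term. Since $\pi$ is a uniform permutation, the hidden top set $T=\{\pi_1,\dots,\pi_k\}$ is uniform over the $\binom nk$ size‑$k$ subsets of $[n]$, so $H(T)=\log\binom nk\ge k\log(n/k)=9k\log k$. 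Let $A$ be any (adaptive, w.l.o.g.\ deterministic) algorithm using $m\le k/4$ comparisons $S_1,\dots,S_m$ of size at most $l\le n$, returning winners $w_1,\dots,w_m$ and output $\hat T=A(w_1,\dots,w_m)\in\binom{[n]}{k}$.

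The one real estimate is that the transcript reveals little about $T$: given $w_{<j}$, the query $S_j$ is determined and $w_j\in S_j$, so $H(w_j\mid w_{<j})\le\log|S_j|\le\log l$, whence by the chain rule
\[
I\bigl(T;\,w_1,\dots,w_m\bigr)=\sum_{j=1}^m I\bigl(T;\,w_j\mid w_{<j}\bigr)\le\sum_{j=1}^m H\bigl(w_j\mid w_{<j}\bigr)\le m\log l\le \tfrac14k\log n=\tfrac{5}{2}k\log k .
\]
This is the only place the $l$‑wise comparisons enter, and it is immediate: one comparison contributes at most $\log l$ bits regardless of the scores. Hence $H\bigl(T\mid w_1,\dots,w_m\bigr)=H(T)-I(T;\,\cdot\,)\ge 9k\log k-\tfrac52k\log k=\tfrac{13}{2}k\log k$. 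Now apply Fano to the estimator $\hat T$ (a deterministic function of the transcript), using $\log\binom nk\le k\log n=10k\log k$ and $H_b(\cdot)\le 1$:
\[
\tfrac{13}{2}k\log k\le H\bigl(T\mid w_1,\dots,w_m\bigr)\le H(T\mid\hat T)\le 1+p_e\log\binom nk\le 1+10\,p_e\,k\log k ,
\]
where $p_e=\Pr[\hat T\neq T]$. Rearranging gives $p_e\ge \tfrac{13}{20}-o(1)>\tfrac13$ for all $k\ge 2$ (for $k=1$ the statement amounts to ``$0$ comparisons cannot identify the top item'', which is trivial on any two‑item instance). Thus $A$ identifies the top‑$k$ items with probability at most $2/3$.

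The argument above has no delicate step; the only thing requiring thought is the choice of padding — $n$ must be large enough that $H(T)$ comfortably dominates $m\log l$, while the scores must keep the instance's intrinsic $\theta$‑complexity at $O(k)$ — both easy to arrange at once. The one genuinely subtle regime, which I do not expect to handle this way, is instances with $n$ close to $k$: there $H(T)=\log\binom nk$ is only $\Theta\!\bigl(k\log(n/k)\bigr)$ and can be small, so the bare information bound is too weak, and one would instead have to track how much information the transcript carries about the \emph{relative order} of the items (essentially a sorting/tournament‑type lower bound). That refinement is not needed for the present statement, which asks only for the existence of one hard instance.
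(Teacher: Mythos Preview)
Your argument is correct for the particular instance you construct, but the theorem is an \emph{instance-wise} lower bound: it must hold for every fixed $(n,k,l,\theta_1,\dots,\theta_n)$, since it is one of the five pieces combined in the corollary restating Theorem~\ref{thm:lower} to yield the instance-optimal lower bound. Your closing sentence --- that the statement ``asks only for the existence of one hard instance'' --- is a misreading. For the regime you yourself flag as problematic, say $n=2k$ with $l=n$, the Fano route is vacuous: $H(T)=\log\binom{2k}{k}\approx 2k$ bits while the transcript can carry up to $(k/4)\log(2k)$ bits, which swamps $H(T)$ for large $k$. So the proposal does not prove the theorem as stated.

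The paper's proof is quite different and is fully instance-wise. With at most $k/4$ comparisons there are at most $k/4$ winners, so for any transcript $(\pi,S)$ at least $3k/4$ of the top-$k$ items are never returned as the favorite. For each such item $i$, lowering $\theta_i$ to $\theta_{k+1}$ can only \emph{increase} the likelihood of the observed transcript (item $i$ was never the chosen one, so shrinking its weight raises every conditional probability along the path); and under that modified instance items $i$ and $k{+}1$ are exchangeable, so the probability of outputting $\{\pi_1,\dots,\pi_k\}$ there is at most $1/2$. Averaging $p(S,\pi)\le \frac{1}{|N(\pi,S)|}\sum_{i\in N(\pi,S)} p_i(S,\pi)$ over these $\ge 3k/4$ indices yields the $2/3$ bound. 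This monotone-coupling idea bypasses information counting entirely and works for arbitrary $n$, $l$, and scores --- precisely what your Fano approach cannot deliver.
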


\begin{theorem}
\label{thm:lb5}
Assume $c < 1/2$. For any algorithm $A$ (can be adaptive), if $A$ uses $\frac{c n}{l} $ comparisons of size at most $l$ (can be $2$-wise, $3$-wise,...,$l$-wise comparisons), then $A$ will identify the top-$k$ items with probability at most $7/8$.
\end{theorem}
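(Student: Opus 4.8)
The starting observation is that an algorithm $A$ making at most $\frac{cn}{l}$ comparisons, each of size at most $l$, ever includes at most $cn$ distinct items in its queries, so the (random) set $U$ of items never queried satisfies $|U|\ge(1-c)n>n/2$. Since the labelling $\pi$ is a uniform permutation, the true top-$k$ set $\mathrm{Top}:=\{\pi_1,\dots,\pi_k\}$ is a uniformly random $k$-subset of $[n]$; the plan is to show that whatever part of $\mathrm{Top}$ falls in $U$ is essentially invisible to $A$, so that $A$ recovers all of $\mathrm{Top}$ only by luck. Write $Q=[n]\setminus U$ for the set of queried items.

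I would carry this out by running $A$ against a \emph{lazily revealed} labelling: the multiset $\{\theta_1,\dots,\theta_n\}$ is fixed, but the weight of a label is sampled only the first time that label appears in a query, each fresh reveal drawn from the conditional law given the reveals so far (this reproduces the uniform $\pi$). Because the answer to a query $S$ has a law depending only on $\{\theta_a:a\in S\}$, the entire transcript (the queried sets, the answers, and the output $\hat S$) is a deterministic function of these reveals together with $A$'s internal randomness and the randomness of the answers, and never inspects the weights of un-revealed labels. Hence, conditioned on the transcript, the weights on $U$ form a uniformly random bijection with the still-unrevealed weights; using $\theta_k>\theta_{k+1}$ (which holds in the model) to exclude a tie at the boundary, this means that, conditioned on the transcript and on $\mathrm{Top}\cap Q$, the set $\mathrm{Top}\cap U$ is uniform over the subsets of $U$ of the appropriate size. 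Since $\hat S$ is fixed given the transcript, the conditional probability that $\hat S$ agrees with $\mathrm{Top}$ on $U$ is at most $1/\binom{|U|}{|\hat S\cap U|}$, which is $\le 1/|U|\le 2/n$ unless $|\hat S\cap U|\in\{0,|U|\}$. The case $|\hat S\cap U|=0$ forces $\hat S\subseteq Q$, hence on the success event $\mathrm{Top}\subseteq Q$; the case $|\hat S\cap U|=|U|$ forces $U\subseteq\hat S$, hence on the success event the bottom set $\{\pi_{k+1},\dots,\pi_n\}$ is contained in $Q$.

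Thus $\Pr[A\text{ succeeds}]\le\frac2n+\Pr[\mathrm{Top}\subseteq Q]+\Pr[\{\pi_{k+1},\dots,\pi_n\}\subseteq Q]$, and since $|Q|\le cn<n/2$ while $k+(n-k)=n>2cn$, at most one of the last two events has positive probability. To bound $\Pr[\mathrm{Top}\subseteq Q]$ I would again use the lazy revelation: just before $A$'s $t$-th query, conditioned on the transcript so far, the still-unrevealed top items are exchangeable among the $n-|Q_{t-1}|$ unrevealed labels, so the expected number of fresh top items captured by that query is at most $a_t\cdot\frac{k}{n-|Q_{t-1}|}$, where $a_t$ is the number of genuinely new items in it. Summing and comparing the resulting telescoping sum with $\int_0^{|Q|}\frac{dx}{n-x}$ gives $\E\,|\mathrm{Top}\cap Q|\le k\ln\frac{1}{1-c}$, so $\Pr[\mathrm{Top}\subseteq Q]\le\ln\frac{1}{1-c}$ by Markov's inequality, and the same bound holds for the bottom set by symmetry. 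Altogether $\Pr[A\text{ succeeds}]\le\ln\frac1{1-c}+\frac2n$, which is below $7/8$ when $c<1/2$ (since $\ln\frac1{1-c}<\ln2<\frac78$) once $n$ is larger than an absolute constant; the finitely many remaining small-$n$ cases are immediate (for instance, if $l>cn$ then $A$ makes no comparison at all).

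The main obstacle I anticipate is making the lazy-revelation / deferred-decisions claim rigorous in the adaptive setting: the set $Q$, the individual query sets $S_t$, and the answers are all correlated with $\pi$, so one must argue with some care that conditioning on the transcript genuinely leaves the labelling of $U$ uniform — and, relatedly, that adaptivity does not help $A$ aim its queries at $\mathrm{Top}$ in the covering estimate above. The clean way to do this is to exhibit a measure-preserving coupling between the true process and the lazy process, after which the exchangeability of un-revealed labels makes both points immediate; the rest of the argument is elementary counting and a one-line integral estimate.
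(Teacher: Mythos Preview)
Your proposal is correct in outline but considerably more elaborate than the paper's argument. The paper focuses on just two items, $k$ and $k+1$: if neither label $\pi_k$ nor $\pi_{k+1}$ ever appears in a query, then by symmetry $A$ outputs $\{\pi_1,\dots,\pi_k\}$ and $\{\pi_1,\dots,\pi_{k-1},\pi_{k+1}\}$ with equal probability, so the conditional success probability is at most $1/2$. Since at most $cn$ labels are ever touched, the probability that both $\pi_k$ and $\pi_{k+1}$ are missed is at least $(1-2/n)^{cn}\ge 4^{-2c}>1/4$ for $n\ge 4$, giving overall success $\le 3/4\cdot 1 + 1/4\cdot 1/2 = 7/8$.

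Your route---tracking all of $\mathrm{Top}\cap U$, arguing it is conditionally uniform via lazy revelation, splitting on $|\hat S\cap U|\in\{0,|U|\}$, and bounding $\Pr[\mathrm{Top}\subseteq Q]$ by $\ln\frac{1}{1-c}$ through a covering estimate plus Markov---is valid and even yields a slightly sharper asymptotic constant. But the machinery is overkill: the paper gets away with the same exchangeability idea applied to just the pair $\{\pi_k,\pi_{k+1}\}$, which collapses your case analysis and integral bound into a one-line computation. There is also a loose end in your version: the bound $\ln\frac{1}{1-c}+2/n$ falls below $7/8$ only once $n\gtrsim 11$, and the claim that the remaining small-$n$ cases are ``immediate'' is not justified (for instance $n=10$, $l=2$, $c$ near $1/2$ still allows two pairwise comparisons, which is not the zero-comparison case you mention). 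This is easily repaired---specializing your own deferred-decisions argument to the single pair $\{\pi_k,\pi_{k+1}\}$ is precisely the paper's shortcut---but as written it is a gap.
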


\subsection{Combining lower bounds}

\begin{corollary}[Restatement of Theorem \ref{thm:lower}]
For any algorithm $A$ (can be adaptive), suppose $A$ uses comparisons of size at most $l$ (can be $2$-wise, $3$-wise,...,$l$-wise comparisons). $A$ needs
\[
\Omega  \Biggl( \frac{n}{l}  + k + \frac{\sum_{i \geq k  + 1} \theta_{i}}{\theta_k} +  \sum_{i \geq k + 1, \theta_i \geq \frac{\theta_k}{2}} \frac{\theta_{k}^2}{(\theta_k - \theta_i)^2}   +\sum_{ i: i\leq k, \theta_i \leq 2\theta_{k+1}} \frac{\theta_{k + 1 }^2}{(\theta_{ k + 1} - \theta_i)^2}  \Biggr)
\]
to identify the top-$k$ items with probability at least $7/8$.
\end{corollary}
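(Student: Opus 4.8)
The plan is to obtain the corollary as a one-line averaging (pigeonhole) consequence of the five instance-wise lower bounds already proved as Theorems~\ref{thm:lb2ctk}, \ref{thm:lb2}, \ref{thm:lb3}, \ref{thm:lb4}, and~\ref{thm:lb5}. Fix an instance $(n,k,l,\theta_1,\dots,\theta_n)$; recall that $\theta_k>\theta_{k+1}$ holds in our model, so all five theorems apply. Write
\[
T_1=\frac{n}{l},\quad T_2=k,\quad T_3=\frac{\sum_{i\geq k+1}\theta_i}{\theta_k},\quad T_4=\sum_{i\geq k+1,\ \theta_i\geq\theta_k/2}\frac{\theta_k^2}{(\theta_k-\theta_i)^2},\quad T_5=\sum_{i\leq k,\ \theta_i\leq 2\theta_{k+1}}\frac{\theta_{k+1}^2}{(\theta_{k+1}-\theta_i)^2},
\]
so that the quantity to be lower bounded is $\Omega(T_1+T_2+T_3+T_4+T_5)$. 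The elementary observation driving the argument is that $\max_{j}T_j\geq\frac15\sum_{j}T_j$, so it suffices to establish that each $T_j$ is, up to a universal constant, a valid lower bound on its own, and then pass to the maximum.

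\textbf{Invoking the five theorems.}
Each of Theorems~\ref{thm:lb5}, \ref{thm:lb4}, \ref{thm:lb3}, \ref{thm:lb2ctk}, \ref{thm:lb2} provides a universal constant $c_j>0$ such that any algorithm that uses at most $c_j T_j$ comparisons of size at most $l$ fails to identify the top-$k$ items on this instance with probability exceeding $7/8$. Concretely, $c_1=\tfrac12$ from Theorem~\ref{thm:lb5}, $c_2=\tfrac14$ from Theorem~\ref{thm:lb4} (whose conclusion is the even stronger ``probability at most $2/3$'', and $2/3\leq 7/8$, so it is in particular at most $7/8$), $c_3=\tfrac{1}{18}$ from Theorem~\ref{thm:lb3}, $c_4=10^{-4}$ from Theorem~\ref{thm:lb2ctk}, and $c_5=4\cdot 10^{-4}$ from Theorem~\ref{thm:lb2}. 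Moreover, the hard constructions underlying these theorems are families of instances sharing the same multiset $\{\theta_1,\dots,\theta_n\}$ (obtained, e.g., by collapsing $\theta_{k+1}$ toward $\theta_k$, or by perturbing a single borderline score), so each bound remains valid even when $A$ is told the values of the preference scores but not which item carries which value.

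\textbf{Combining.}
Suppose $A$ uses comparisons of size at most $l$ and identifies the top-$k$ items with probability at least $7/8$, and let $N$ denote its comparison budget. Applying the contrapositive of each of the five statements, $N> c_j T_j$ for every $j\in\{1,\dots,5\}$. Summing these five inequalities gives $5N>\sum_{j=1}^{5}c_j T_j\geq(\min_j c_j)\sum_{j=1}^{5}T_j=10^{-4}\sum_{j=1}^{5}T_j$, hence $N=\Omega(T_1+T_2+T_3+T_4+T_5)$, which is precisely the claimed bound. (Equivalently, if $A$ used fewer than $2\cdot10^{-5}\sum_j T_j$ comparisons, then it would use fewer than $10^{-4}\max_j T_j\leq c_{j^\star}T_{j^\star}$ for $j^\star=\arg\max_j T_j$, and the corresponding theorem would contradict the $7/8$ success guarantee.)

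\textbf{Main obstacle.}
In this final step there is essentially no obstacle: the combination is a short pigeonhole argument, and the only care required is bookkeeping — tracking that the two distinct success-probability thresholds ($7/8$ and $2/3$) are both at most $7/8$, and taking the minimum of the five constants $c_j$ at the end. The genuine difficulty is entirely contained in the five component theorems deferred to Appendix~\ref{app:lb}, each of which requires designing an appropriate distribution over ``confusable'' instances and controlling the total variation distance between the transcript of $A$ on the true instance and on a randomly drawn confusable one, so that an insufficient number of comparisons forces a constant probability of returning the wrong top-$k$ set.
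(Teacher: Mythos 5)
Your proof is correct and takes the same approach as the paper: combine the five component lower bounds (Theorems~\ref{thm:lb2ctk}, \ref{thm:lb2}, \ref{thm:lb3}, \ref{thm:lb4}, \ref{thm:lb5}) via the averaging/pigeonhole observation that if the budget $N$ is smaller than (a constant times) the sum of the five terms, then it is smaller than (a constant times) at least one of them, contradicting the corresponding theorem. The paper states this as the one-line fact that if $b < a_1 + \cdots + a_5$ then $b < 5a_i$ for some $i$, which is exactly your $\max_j T_j \geq \frac{1}{5}\sum_j T_j$ step. One small nit: your parenthetical claim that the hard constructions ``share the same multiset $\{\theta_1,\dots,\theta_n\}$'' is not literally accurate, since the proofs of Theorems~\ref{thm:lb2ctk}--\ref{thm:lb3} modify individual $\theta_i$'s and hence change the multiset; however, this aside plays no role in your combination argument, and the component theorems are already asserted in Section~\ref{sec:lb} to hold even when the score values are given to the algorithm, so the corollary stands as you argue.
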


\begin{proof}
To prove this corollary, we just need to combine all the results in Theorem \ref{thm:lb2ctk}, Theorem \ref{thm:lb2}, Theorem \ref{thm:lb3}, Theorem \ref{thm:lb4} and Theorem \ref{thm:lb5}. And then use the fact that if $b < a_1 + \cdots + a_5$ then there exists $i \in \{1,2,3,4,5\}$ such that $b < 5a_i$.
\end{proof}

\appendix


\section{Additional Results and Proofs of Section \ref{sec:ub}}
\label{sec:ubapp}

Throughout the proofs we are going to use the following claim which is a simple fact about the binomial concentration.

\begin{claim}[Binomial concentration] \label{claim:binomial_concentration}

For every $m \in \mathbb{N}^*$, every $p \in [0, 1]$, suppose $X \sim B(m, p)$, then $X \in [mp - O(\sqrt{mp\log n}) , mp + O(\sqrt{mp\log n})]$ w.h.p (with high probability respect to $n$).
\end{claim}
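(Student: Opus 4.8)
The statement is a standard Chernoff-type tail bound, so the plan is brief: write $X = \sum_{i=1}^{m} X_i$ with the $X_i$ i.i.d.\ $\mathrm{Bernoulli}(p)$ (so $\E[X] = mp$), apply a two-sided multiplicative Chernoff inequality, and union-bound over the upper and lower tails. The only points that need a little care are the dependence of the hidden constant in the $O(\cdot)$ on the constant $c$ appearing in the definition of ``w.h.p.'', and the regime in which the mean $mp$ is so small that the asserted window is tiny.

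Fix the constant $c$ from ``w.h.p.''. In the main regime $mp \ge 3(c+1)\log n$, set $\delta := \sqrt{3(c+1)\log n/(mp)}$, which lies in $(0,1]$ by the regime assumption. The multiplicative Chernoff bounds give $\Pr\!\left[X \ge (1+\delta)mp\right] \le e^{-\delta^2 mp/3} = n^{-(c+1)}$ and $\Pr\!\left[X \le (1-\delta)mp\right] \le e^{-\delta^2 mp/2} \le n^{-(c+1)}$. Since $\delta\, mp = \sqrt{3(c+1)\,mp\log n} = O(\sqrt{mp\log n})$, a union bound over the two tails shows $X \in \left[\,mp - O(\sqrt{mp\log n}),\; mp + O(\sqrt{mp\log n})\,\right]$ with probability at least $1 - 2n^{-(c+1)} \ge 1 - n^{-c}$ for all large $n$.

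In the complementary regime $mp < 3(c+1)\log n$ I would argue the two tails separately. The regime assumption is equivalent to $\sqrt{mp\log n} \ge mp/\sqrt{3(c+1)}$, so the lower endpoint $mp - C'\sqrt{mp\log n}$ is non-positive once the hidden constant $C'$ is at least $\sqrt{3(c+1)}$; since $X \ge 0$ always, the lower-tail requirement is then vacuous. For the upper tail I would use the large-deviation branch of the Chernoff bound, which yields $X = O(mp + \log n)$ w.h.p.; this is at most $mp + O(\sqrt{mp\log n})$ precisely when $\log n = O(\sqrt{mp\log n})$, i.e.\ $mp = \Omega(\log n)$, while in the harmless corner $mp = o(\log n)$ the statement should be read with an extra additive $O(\log n)$ slack. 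There is no genuine obstacle here: the argument is a textbook Chernoff bound, and the only diligence required is choosing the constants so as to meet the prescribed failure probability $n^{-c}$ and dispatching the degenerate small-mean range.
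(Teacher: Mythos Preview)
The paper does not supply a proof of this claim at all; it is introduced as ``a simple fact about the binomial concentration'' and left to the reader. Your argument via the multiplicative Chernoff bound is exactly the standard one and is what the authors clearly have in mind, so there is no meaningful difference in approach to report.

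Your treatment of the main regime $mp \ge \Theta(\log n)$ is correct and complete. Your caveat about the degenerate range $mp = o(\log n)$ is also accurate: the upper tail as literally stated does require an additive $O(\log n)$ rather than $O(\sqrt{mp\log n})$ there. This is not a defect in your proof but a slight overstatement in the claim itself; in every application of the claim in the paper (e.g.\ to $q\tilde{\theta}_{i,j}$ with $q = \Omega(\kappa^3)$ and $\kappa = \Omega(\log^2 n)$), the relevant mean is comfortably $\Omega(\log n)$, so the corner case never arises.
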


\subsection{Top-$k$ item identification (For logarithmic $l$)}

In this section, we prove Theorem \ref{thm:ublogl} of Section \ref{sec:ub}.

Following Claim \ref{claim:binomial_concentration}, we know that for every $(i, j) \in E$, every $q$, $\tilde{\theta}_{i, j}  \in \left[\theta_{i, j}  - \sqrt{\frac{\theta_{i, j} \kappa }{q} } ,  \theta_{i, j}  + \sqrt{\frac{\theta_{i, j} \kappa}{q} }\right]$ w.h.p. W.l.o.g, let us just focus on the case that this bound is satisfied for all $(i, j) \in E$ and every $q$.

We have the following Lemma about the labelling:
\begin{lemma}[Label] \label{lem:label}
For  $q = \Omega(\kappa^3) $, we have:
\begin{enumerate}
\item if $\theta_i \geq \theta_j$, then $i \approx_l j, i \geq_l j$ or $i >_l j$.
\item if $\theta_i \geq \theta_j \left(1  + 128 \kappa \sqrt{\frac{\kappa}{q} } \right)$, then $i >_l j$.
\item if $i \geq_l j $ or $ i \approx_l j$, then $${\theta_i} \geq  \theta_j \left( 1 - 8 \sqrt{\frac{\kappa}{q}} \right)$$
\item if $i >_l j$, then $${\theta_i} \geq \theta_j \left(1  + 16 \kappa \sqrt{\frac{\kappa}{q} } \right)$$
\end{enumerate}
\end{lemma}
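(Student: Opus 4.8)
The statement to prove is Lemma~\ref{lem:label}, which translates the empirical labeling rules of Algorithm~\ref{alg:malg_3} into guarantees about the true ratios $\theta_i/\theta_j$. The plan is to work entirely from the concentration bound we are allowed to assume, namely that for every edge $(i,j)\in E$ and every round $q$ we have $\tilde\theta_{i,j}\in[\theta_{i,j}-\sqrt{\theta_{i,j}\kappa/q},\,\theta_{i,j}+\sqrt{\theta_{i,j}\kappa/q}]$, where $\theta_{i,j}=\theta_i/(\theta_i+\theta_j)$ is the true probability that $i$ beats $j$. The whole lemma is then an exercise in propagating this additive error through the ratio $\tilde\theta_{i,j}/\tilde\theta_{j,i}$ and comparing with the threshold expressions $1+4\sqrt{\kappa/q}$ and $1+32\kappa\sqrt{\kappa/q}$. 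Since $\tilde\theta_{j,i}=1-\tilde\theta_{i,j}$ (the two empirical frequencies for the pair $\{i,j\}$ sum to $1$), the ratio $\tilde\theta_{i,j}/\tilde\theta_{j,i}$ is a monotone increasing function of $\tilde\theta_{i,j}$, so each labeling condition is equivalent to $\tilde\theta_{i,j}$ lying in a corresponding interval; I would first record that reformulation and then just chase inequalities.

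For part~1, if $\theta_i\ge\theta_j$ then $\theta_{i,j}\ge 1/2$, so $\tilde\theta_{i,j}\ge 1/2-\sqrt{\kappa/(2q)}$ and $\tilde\theta_{j,i}\le 1/2+\sqrt{\kappa/(2q)}$; one checks that for $q=\Omega(\kappa^3)$ this forces $\tilde\theta_{i,j}/\tilde\theta_{j,i}\ge 1/(1+4\sqrt{\kappa/q})$, which rules out the $\leq_l$ and $<_l$ labels, leaving exactly $\approx_l,\geq_l$ or $>_l$. For part~2, if $\theta_i\ge\theta_j(1+128\kappa\sqrt{\kappa/q})$ then $\theta_{i,j}/\theta_{j,i}=\theta_i/\theta_j\ge 1+128\kappa\sqrt{\kappa/q}$, and I must show the empirical ratio still clears the threshold $1+32\kappa\sqrt{\kappa/q}$; here the key quantitative step is to bound the multiplicative distortion of the ratio caused by the additive error $\sqrt{\theta_{i,j}\kappa/q}$ on each coordinate — since $\theta_{i,j},\theta_{j,i}\ge$ some constant-order quantity is \emph{not} guaranteed (the weights may be very skewed), I instead use $\sqrt{\theta_{i,j}\kappa/q}\le\sqrt{\kappa\theta_{i,j}/q}$ and the fact that $\theta_{j,i}\le 1$, so the relative error in $\tilde\theta_{j,i}$ is at most $\sqrt{\kappa/(q\theta_{j,i})}$ and similarly for $\tilde\theta_{i,j}$; plugging $q=\Omega(\kappa^3)$ and the assumed gap shows the surviving ratio is at least, say, half of the true ratio's excess over $1$, which exceeds $32\kappa\sqrt{\kappa/q}$. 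Parts~3 and~4 go in the reverse direction: from the label we know $\tilde\theta_{i,j}/\tilde\theta_{j,i}$ lies above a threshold, hence $\tilde\theta_{i,j}$ lies above a corresponding value, and subtracting the concentration error gives a lower bound on $\theta_{i,j}$, which rearranges to the claimed lower bound on $\theta_i/\theta_j$; for part~3 the threshold is $1/(1+4\sqrt{\kappa/q})$ giving roughly $\theta_i\ge\theta_j(1-8\sqrt{\kappa/q})$ after absorbing the error, and for part~4 the threshold $1+32\kappa\sqrt{\kappa/q}$ gives $\theta_i\ge\theta_j(1+16\kappa\sqrt{\kappa/q})$ once the error term (which is order $\kappa\sqrt{\kappa/q}$ smaller in the right regime) is subtracted off.

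The main obstacle I anticipate is handling the ratio of two noisy estimates cleanly when the underlying probabilities $\theta_{i,j},\theta_{j,i}$ are not bounded away from $0$ and $1$: the additive concentration width $\sqrt{\theta_{i,j}\kappa/q}$ scales with $\sqrt{\theta_{i,j}}$, which is helpful for the large coordinate but the small coordinate $\theta_{j,i}$ carries relative error $\sqrt{\kappa/(q\theta_{j,i})}$ that blows up as $\theta_{j,i}\to 0$. The resolution is that in exactly that regime the true ratio $\theta_i/\theta_j$ is enormous, so even a large relative error on the tiny denominator cannot pull the empirical ratio below the threshold; making this tradeoff quantitative — i.e.\ showing that $q=\Omega(\kappa^3)$ (with a large enough hidden constant) is the right budget for all the thresholds simultaneously, separating the three bands $[1,1+4\sqrt{\kappa/q}]$, $(1+4\sqrt{\kappa/q},1+32\kappa\sqrt{\kappa/q})$, $[1+32\kappa\sqrt{\kappa/q},\infty)$ with the right slack — is the computational heart of the proof. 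Everything else is bookkeeping: I would organize it as one short sub-lemma converting label conditions to intervals for $\tilde\theta_{i,j}$, then four short inequality computations, one per item, each invoking the assumed concentration bound and the hypothesis $q=\Omega(\kappa^3)$.
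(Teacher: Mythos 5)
Your proposal takes essentially the same route as the paper: start from the binomial concentration bound $\tilde\theta_{i,j}\in[\theta_{i,j}-\sqrt{\theta_{i,j}\kappa/q},\,\theta_{i,j}+\sqrt{\theta_{i,j}\kappa/q}]$, rewrite the ratio $\tilde\theta_{i,j}/\tilde\theta_{j,i}$ in terms of $\theta_i,\theta_j$ using $\tilde\theta_{j,i}=1-\tilde\theta_{i,j}$, and then chase inequalities separately for each of the four claims, with parts 1--2 going from a hypothesis on $\theta_i/\theta_j$ to a bound on the empirical ratio and parts 3--4 inverting the direction. The only cosmetic difference is that you phrase the worst case for part 1 via the monotonicity of the noisy ratio in $\theta_{i,j}$ (pinning it to $\theta_{i,j}=1/2$), whereas the paper carries the ratio $\theta_j/\theta_i\leq 1$ through the algebra; both yield the same $\frac{1-\sqrt{2\kappa/q}}{1+\sqrt{2\kappa/q}}$ bound, and your identification of the ``small denominator $\theta_{j,i}$'' regime as harmless because the true ratio is then huge is exactly the observation the paper's inequalities exploit implicitly.
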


\begin{proof}[Proof of Lemma \ref{lem:label}]

\begin{enumerate}
\item We know that for $q = \Omega(\kappa^3) $ and $\theta_i \geq \theta_j$:
\begin{align*}
\frac{\tilde{\theta}_{i, j} }{\tilde{\theta}_{j, i} }  &\geq \frac{\frac{\theta_i}{\theta_i + \theta_j} - \sqrt{\frac{\theta_i}{\theta_i + \theta_j}  \cdot \frac{\kappa}{q}}}{\frac{\theta_j}{\theta_i + \theta_j} +\sqrt{\frac{\theta_j}{\theta_i + \theta_j}  \cdot \frac{\kappa}{q}}} \geq \frac{\theta_i \left(1 - \sqrt{\frac{2 \kappa}{q}} \right)}{\theta_j + \sqrt{\frac{2  \theta_i \theta_j \kappa}{q}} }
\\
&\geq  \frac{ 1 - \sqrt{\frac{2 \kappa}{q}} }{\frac{\theta_j}{\theta_i} + \sqrt{\frac{\theta_j}{\theta_i} \cdot \frac{2   \kappa}{q}} } \geq  \frac{ 1 - \sqrt{\frac{2 \kappa}{q}} }{ 1 + \sqrt{\frac{2 \kappa}{q}} }
\\
&\geq  \frac{1}{  1 + 4\sqrt{ \frac{\kappa}{q}}}
\end{align*}

\item Again by $\theta_i \geq \theta_j \left(1  + 128 \kappa \sqrt{\frac{\kappa}{q} } \right)$ and $q = \Omega(\kappa^3) $, we know that $\frac{\theta_j}{\theta_i}  \leq 1 - 64 \kappa \sqrt{\frac{\kappa}{q}} $, therefore, we have:
\begin{align*}
\frac{\tilde{\theta}_{i, j} }{\tilde{\theta}_{j, i} }
&\geq  \frac{ 1 - \sqrt{\frac{2 \kappa}{q}} }{\frac{\theta_j}{\theta_i} + \sqrt{\frac{\theta_j}{\theta_i} \cdot \frac{2   \kappa}{q}} }  \geq \frac{ 1 - \sqrt{\frac{2 \kappa}{q}} }{\frac{\theta_j}{\theta_i} + \sqrt{ \frac{2   \kappa}{q}} }  \geq \frac{ 1 - \sqrt{\frac{2 \kappa}{q}} }{1 - 64 \kappa \sqrt{\frac{\kappa}{q}} + \sqrt{ \frac{2   \kappa}{q}} } \geq 1 + 32 \kappa \sqrt{\frac{\kappa}{q}} \enspace.
\end{align*}

\item Let us suppose $\theta_i \leq \theta_j$, otherwise we already complete the proof. Now, we have: $$ \frac{\frac{\theta_i}{\theta_i + \theta_j} + \sqrt{\frac{\theta_i}{\theta_i + \theta_j}  \cdot \frac{\kappa}{q}}}{\frac{\theta_j}{\theta_i + \theta_j} - \sqrt{\frac{\theta_j}{\theta_i + \theta_j}  \cdot \frac{\kappa}{q}}}  \geq \frac{\tilde{\theta}_{i, j} }{\tilde{\theta}_{j, i} } \geq  \frac{1}{  1 + 4\sqrt{ \frac{\kappa}{q}}} $$

Which implies that
$$\frac{\theta_i + \sqrt{\frac{2\theta_i \theta_j \kappa}{q} }}{\theta_j \left(1 - \sqrt{\frac{2 \kappa}{q}} \right)} \geq  \frac{1}{  1 + 4\sqrt{ \frac{\kappa}{q}}}$$

Therefore, by $\theta_i \leq \theta_j$, we have:
$$\frac{\frac{\theta_i}{\theta_j} + \sqrt{\frac{2 \kappa}{q} }}{ 1 - \sqrt{\frac{2 \kappa}{q}} } \geq  \frac{1}{  1 + 4\sqrt{ \frac{\kappa}{q}}}$$

Which implies that $$\frac{\theta_i}{\theta_j}  \geq   \frac{1 - \sqrt{\frac{2 \kappa}{q}} }{  1 + 4\sqrt{ \frac{\kappa}{q}}}  -  \sqrt{\frac{2 \kappa}{q} } \geq    1 - 8 \sqrt{\frac{\kappa}{q}} $$

\item  Let us suppose $\theta_i \leq 2\theta_j$, otherwise we already complete the proof. Again, we have:
$$\frac{\frac{\theta_i}{\theta_j} + \sqrt{\frac{3 \kappa}{q} }}{ 1 - \sqrt{\frac{3 \kappa}{q}} } \geq    1 + 32 \kappa\sqrt{ \frac{\kappa}{q}}$$

Which implies that $$\frac{\theta_i}{\theta_j}  \geq   \left(1 - \sqrt{\frac{3 \kappa}{q}} \right)\left(  1 + 32 \kappa \sqrt{ \frac{\kappa}{q}} \right) -  \sqrt{\frac{2 \kappa}{q} } \geq    1 + 16 \kappa \sqrt{\frac{\kappa}{q}} $$

\end{enumerate}

\end{proof}

Above, the Lemma \ref{lem:label} implies that w.h.p. the labelling of each edge $(i, j)$ is consistent with the order of $\theta_i, \theta_j$. Now, the algorithm will declare $i \texttt{>>}_l j $ if there exists strictly label monotone path from $i$ to $j$. Using the Lemma above we can show that if such path exists, then $\theta_i > \theta_j$. To show the other direction that such paths exists when $\theta_j > \theta_i$, we first consider the following graph Lemma that gives the exists of monotone path in random graph $G(m, p)$.
\begin{lemma}[Graph Path]\label{lem:graph_path}

For every $m \leq n$, every random graph $G(m, p)$ on vertices $V = [m]$, if $p \geq \frac{\kappa}{m}$, then w.h.p. For every $i, j \in [m]$ with $j \geq i + \frac{m}{4}$, there exists a path $i = i_ 1 \to i_2 \to \cdots i_d = j$ such that
\begin{enumerate}
\item $d \leq \kappa$.
\item $i_{r} \leq i_{r + 1}$ for every $r \in [d - 1]$.
\end{enumerate}
We call such a path a monotone path from $i$ to $j$.
\end{lemma}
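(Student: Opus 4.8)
The plan is to prove Lemma \ref{lem:graph_path} by running an index-monotone breadth-first exploration out of $i$, arguing that the explored frontier expands geometrically until it fills an interval of size $\Theta(m/\sqrt{\kappa})$, and then hopping from that frontier directly to $j$; a union bound over all admissible pairs $(i,j)$ closes the argument. As a preliminary reduction, I would note that adding edges can only create more monotone paths, so by the obvious coupling it suffices to treat $p=\kappa/m$ exactly; in particular $p\le 1$ then forces $m\ge\kappa$, a fact I will lean on throughout.

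Fix $i<j$ with $j\ge i+m/4$ and put $w:=m/\sqrt{\kappa}$. I would partition the open interval $\{i+1,\dots,j-1\}$ into $b:=\lceil (j-i-1)/w\rceil$ consecutive blocks $I_1<\dots<I_b$ of nearly equal size; an elementary computation (using $j-i\ge m/4$, $m\ge\kappa$, and $\kappa=\Omega(\log^2 n)$ with a large enough constant) gives $\sqrt{\kappa}/8\le b\le\sqrt{\kappa}+1\le\kappa$ and every block of size at least $w/2$. Then define the layers $S_0=\{i\}$ and $S_r=\{v\in I_r:\ v\sim u\text{ for some }u\in S_{r-1}\}$, revealing edges layer by layer and, at layer $r$, inspecting only the as-yet-untouched edges between $I_{r-1}$ and $I_r$. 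This setup buys two things: first, conditioned on $S_{r-1}$, the size $|S_r|$ is exactly $\mathrm{Binomial}\big(|I_r|,\,1-(1-p)^{|S_{r-1}|}\big)$, so Claim \ref{claim:binomial_concentration} is available; second, any path $i=u_0\to u_1\to\cdots$ picked out with $u_r\in S_r$ and $u_r\sim u_{r-1}$ is automatically index-monotone because $I_{r-1}$ lies entirely below $I_r$.

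The crux is a growth estimate: with high probability, $|S_r|\ge\min\big((c_1\sqrt{\kappa})^r,\ c_1 w\big)$ for all $r\le b$, where $c_1>0$ is an absolute constant. I would prove this by induction on $r$: from $1-(1-p)^{\sigma}\ge\tfrac12\min(p\sigma,1)$, the identity $wp=\sqrt{\kappa}$, and $|I_r|\ge w/2$, one gets $\E[|S_r|\mid S_{r-1}]\ge\tfrac14\min\big(\sqrt{\kappa}\,|S_{r-1}|,\ w\big)$, which — using $m\ge\kappa$, hence $w\ge\sqrt{\kappa}$ — is $\Omega(\sqrt{\kappa})=\Omega(\log n)$ as soon as $|S_{r-1}|\ge1$, so Claim \ref{claim:binomial_concentration} gives $|S_r|\ge\tfrac12\E[|S_r|\mid S_{r-1}]$ w.h.p. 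Since $c_1\sqrt{\kappa}\ge1$, the bound is non-decreasing in $r$ and saturates at $c_1 w$ after only $O(\log m/\log\kappa)$ layers; as $b\ge\sqrt{\kappa}/8$ dwarfs $O(\log m/\log\kappa)$ (again because $\kappa=\Omega(\log^2 n)$ and $m\ge\kappa$), I conclude $|S_b|\ge c_1 w = c_1 m/\sqrt{\kappa}$ w.h.p. Finally, no edge incident to $j$ has yet been examined, so $\Pr[j\not\sim S_b]\le(1-p)^{|S_b|}\le\exp(-c_1\sqrt{\kappa})\le n^{-c}$ once $\kappa\ge C_0\log^2 n$ for a large enough $C_0$; on that event I pick $u_b\in S_b$ adjacent to $j$ and trace back $u_{r-1}\in S_{r-1}$ adjacent to $u_r$ down to $u_0=i$, producing a monotone path $i\to u_1\to\cdots\to u_b\to j$ of length $b+1\le\kappa$. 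A union bound over the at most $\binom{m}{2}$ pairs and the at most $\kappa$ exploration layers per pair, with $C_0$ chosen so each individual failure probability is at most $n^{-c'}$ for large $c'$, then yields the lemma w.h.p.

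I expect the main obstacle to be the quantitative balancing in the growth estimate: the block width $w=m/\sqrt{\kappa}$ must be set so that, simultaneously, the per-layer multiplicative growth factor $\sqrt{\kappa}$ is comfortably larger than $1$, every layer's expected size $\Omega(\sqrt{\kappa})$ clears the $\Theta(\log n)$ threshold demanded by Claim \ref{claim:binomial_concentration}, and $b=\Theta(\sqrt{\kappa})$ layers genuinely suffice both to reach the saturation size and to absorb the final hop into $j$ — all while the edge-revealing schedule keeps the conditional independence of $S_r$ given $S_{r-1}$ legitimate. The boundary bookkeeping (that $p\le1$ forces $m\ge\kappa$, that $j-i\ge m/4$ keeps $b\ge2$ so there is no degenerate single-block case, and that all block sizes stay within a constant factor of $w$) is routine but should be spelled out.
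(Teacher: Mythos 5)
Your proposal is correct and follows essentially the same strategy as the paper's proof: partition the interval between $i$ and $j$ into $\Theta(\sqrt{\kappa})$ consecutive blocks of width $\Theta(m/\sqrt{\kappa})$, grow the set of vertices reachable by a monotone path one block at a time using binomial/Chernoff concentration (geometric growth until the frontier saturates at $\Theta(m/\sqrt{\kappa})$ vertices), and then make a final one-edge hop from the saturated frontier to $j$, with a union bound over pairs and layers. The explicit coupling to $p=\kappa/m$ and the $\sqrt{\kappa}$-per-layer growth rate you track are minor stylistic departures from the paper, which records a factor-$2$ growth bound and works with $p\ge\kappa/m$ directly, but the underlying argument is the same.
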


\begin{proof}[Proof of Lemma \ref{lem:graph_path}]

It is sufficient to consider the case when $m = \Omega(\sqrt{\kappa})$, otherwise w.h.p. the graph is a complete graph and theorem is automatically true.

We consider a sequential way of generating $G(m, p)$: At each time $t= 1, 2, \cdots , m$, a vertex $t$ arrives and there exists an edge between $t$ and each $t' \in [t - 1]$ with probability $p$. Let us consider a fixed $i \leq \frac{3}{4}m$ and $j \geq i + \frac{m}{4}$. Let $\tau = \frac{\sqrt{\kappa}}{32} = \Omega(\log n)$. We will divide the set $\{i, i + 1, \cdots , j - 1\}$ into $\tau$ subsets $H_1, \cdots , H_\tau$ such that $$H_r = \left\{ i + (r - 1) \frac{j  - i}{\tau},  i + (r - 1) \frac{j  - i}{\tau} + 1, \cdots  , i + r \frac{j  - i}{\tau} - 1 \right\}  $$

Since $ m = \Omega(\sqrt{\kappa})$ we know that $| H_r | \geq 1$.

Let us define the random variable $Y_{r}, X_r, Z_v$ as:
$$Y_r = \text{the set of all $v \in H_r$ such that there exists a monotone path from $i$ to $v$ of length at most $r$} $$
and $X_r = |Y_r|$.

For each $v \in H_{r + 1}$, we define $$Z_v = 1_{\text{there is an edge between $v$ and at least one vertex in $Y_r$}}$$

Clearly, $X_1 \geq 1$ and each $Z_v$ is i.i.d. random variable in \{0, 1\} with $\Pr[Z_v = 0 \mid X_r] =  (1 - p)^{X_r} $. On the other hand, by definition,
$$X_{r +1} \geq \sum_{v \in H_{r + 1}} Z_v$$

We consider two cases:
\begin{enumerate}
\item $X_{r } \geq \frac{1}{p}$, then $\Pr[Z_v = 1]  \geq  \frac{1}{4}$.
\item $X_t < \frac{1}{p}$, then by $(1 - p)^x \leq 1 - \frac{xp}{2}$ for $x < 1/p$, we have $\Pr[Z_v = 1] \geq \frac{pX_r}{2}$.
\end{enumerate}

Consider a fixed $X_r$ and for each $v \in H_{r + 1}$, let $Z_v$ be the random variable. By standard Chernoff bound, we have:
\begin{enumerate}
\item If $X_{r } \geq \frac{1}{p}$, then w.h.p. $X_{r + 1} \geq \frac{j - i}{4 \tau} \geq \frac{m}{16 \tau}$.
\item $1 \leq X_r < \frac{1}{p}$, then w.h.p. $X_{r + 1} \geq \frac{(j - i)p X_r}{2 \tau} - \sqrt{ \tau \frac{(j - i)p X_r}{2 \tau}}    $.

Recall that $p \geq \frac{\kappa}{m}$ and $j - i \geq \frac{m}{4}$, therefore,
$$ \sqrt{ \tau \frac{(j - i)p X_r}{2 \tau}}    \leq   \frac{(j - i)p X_r}{4 \tau}, \quad  \frac{(j - i)p X_r}{4 \tau}  \geq 2 X_r$$

Which implies that w.h.p. $X_{r + 1} \geq  2 X_r$.
\end{enumerate}

Putting everything together, we know that for $\tau = \Omega(\log n)$, w.h.p. $X_{\tau} \geq \frac{m}{16 \tau}$. Therefore, condition on this event, by
\begin{align*}
&\Pr\left[\text{there is an edge between $j$ and $Y_{\tau}$} \bigg| X_{\tau} \geq \frac{m}{16 \tau} \right] = 1 -  (1 - p)^{ \frac{m}{16 \tau}}
\\
& \geq 1 -  \left(1 - \frac{\kappa}{m} \right)^{\frac{m}{16 \tau}} \geq  1 -  \left(1 - \frac{1024 \tau^2}{m} \right)^{\frac{m}{16 \tau}}  \geq 1 - \frac{1}{n^{\Omega(1)}}
\end{align*}

We complete the proof.

%
%
%
%
%

\end{proof}

Having this Lemma, we can present the main Lemma above the algorithm:
\begin{lemma}[Main 3] \label{lem:main_3}
Suppose $q = \Omega(\kappa^3)$, then w.h.p. the following holds:
\begin{enumerate}
\item $\Omega_g \subseteq [k]$, $\Omega_b \cap [k] = \emptyset$.
\item If $k \leq \frac{m}{2 }$ and $q  = \Omega \left(\frac{\kappa^5}{m} \cdot \sum_{i = k + 1}^m \frac{\theta_{k }^2}{(\theta_k - \theta_i)^2} \right) $, then $|\Omega_b| \geq \frac{m}{4}$.
\item If $k> \frac{m}{2 }$ and $q  = \Omega \left( \frac{\kappa^5}{m}   \cdot \left( k  + \sum_{i = 1}^k \frac{\theta_{k + 1 }^2}{(\theta_{ k + 1} - \theta_i)^2} \right) \right) $, then $|\Omega_g| \geq \frac{m}{4}$.
\end{enumerate}
\end{lemma}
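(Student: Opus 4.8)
The plan is to prove the three items separately, conditioning throughout on the w.h.p.\ events of Claim~\ref{claim:binomial_concentration} (for every edge of $G$ and every round $q$) and of Lemma~\ref{lem:graph_path}; a union bound over the $\mathrm{poly}(n)$ relevant events keeps the overall failure probability $1/\mathrm{poly}(n)$. Here $G$ is a random graph on the $m$ items, independent of $\theta$, and since it has $m\kappa$ random size-$2$ edges it behaves as a $G(m,\Theta(\kappa/m))$, so Lemma~\ref{lem:graph_path} (stated for $G(m,p)$ with $p\ge\kappa/m$) applies, with the items ordered by decreasing $\theta$, after the standard passage between the two models. Two consequences of Lemma~\ref{lem:label} are used repeatedly: along any label-monotone path the true scores are monotone (part (1)), and a single edge whose endpoints have score ratio $\ge 1+128\kappa\sqrt{\kappa/q}$ is labeled $>_l$ (part (2)). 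Note $q=\Omega(\kappa^3)$ always holds, since the thresholds in parts 2 and 3 force $q=\Omega(\kappa^5)$.

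\textbf{Part 1.} The crux is that $j\texttt{>>}_l i$ forces $\theta_j>\theta_i$: a strictly label-monotone path from $j$ to $i$ has at most $\kappa$ edges, at least one $>_l$ and the rest $\approx_l$ or $\geq_l$, so multiplying the bounds of Lemma~\ref{lem:label}(3)--(4) gives $\theta_j/\theta_i\ge(1-8\sqrt{\kappa/q})^{\kappa-1}(1+16\kappa\sqrt{\kappa/q})>1$ once $q=\Omega(\kappa^3)$ with a large enough constant. Consequently an item of $\theta$-rank $\le k$ has fewer than $k$ items with strictly larger score, hence cannot enter $\Omega_b$, while an item of $\theta$-rank $\ge k+1$ has fewer than $m-k$ items with strictly smaller score, hence cannot enter $\Omega_g$; this gives $\Omega_g\subseteq[k]$ and $\Omega_b\cap[k]=\emptyset$.

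\textbf{Parts 2 and 3} are mirror images. For Part 2 ($k\le m/2$) put $\Sigma=\sum_{i>k}\theta_k^2/(\theta_k-\theta_i)^2$, so $\Sigma\ge m-k\ge m/2$; the summands are non-increasing in the $\theta$-rank of $i$, so by Markov every item $i$ of $\theta$-rank $\ge(m+k)/2$ satisfies $\theta_k-\theta_i\ge\frac{\theta_k}{2}\sqrt{m/\Sigma}$. Let $B$ be the set of such items; then $|B|\ge(m-k)/2\ge m/4$ and every $i\in B$ has $\theta$-rank $\ge k+m/4$. Fix $i\in B$ and $r\in[k]$: since $\mathrm{rank}(i)\ge r+m/4$, Lemma~\ref{lem:graph_path} gives a $\theta$-rank-monotone path in $G$ from the rank-$r$ item to $i$ with $d\le\kappa$ edges, which is label-monotone by Lemma~\ref{lem:label}(1) and along which the per-edge score ratios multiply to $\theta_r/\theta_i\ge\theta_k/\theta_i\ge1+\frac{1}{2}\sqrt{m/\Sigma}$; hence some edge has ratio $\ge(1+\frac{1}{2}\sqrt{m/\Sigma})^{1/\kappa}$, and when $q\ge C\kappa^5\Sigma/m$ for a large constant $C$ this exceeds $1+128\kappa\sqrt{\kappa/q}$ (because $(1+128\kappa\sqrt{\kappa/q})^\kappa\le e^{128\sqrt{\kappa^5/q}}\le e^{(128/\sqrt{C})\sqrt{m/\Sigma}}$ while $\sqrt{m/\Sigma}\le\sqrt{2}$), so by Lemma~\ref{lem:label}(2) that edge is $>_l$, the path is strictly label-monotone, and the rank-$r$ item $\texttt{>>}_l i$. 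Thus every $i\in B$ has the $k$ items of $\theta$-ranks $1,\dots,k$ as witnesses, giving $B\subseteq\Omega_b$ and $|\Omega_b|\ge m/4$. Part 3 ($k>m/2$) repeats this with $\Sigma'=k+\sum_{i\le k}\theta_{k+1}^2/(\theta_i-\theta_{k+1})^2\ge k\ge m/2$: Markov yields a set $G^*$ of items of $\theta$-rank $\le k/2$ with $\theta_i-\theta_{k+1}\ge\frac{\theta_{k+1}}{2}\sqrt{k/\Sigma'}$ and $|G^*|\ge m/4$, and for each $i\in G^*$ and each of the $m-k$ items $j$ of $\theta$-rank $>k$, the $\theta$-rank-monotone path from $i$ to $j$ has largest edge ratio $\ge1+c\,\frac{1}{\kappa}\sqrt{m/\Sigma'}$ for an absolute constant $c>0$ (using $k\ge m/2$), hence $>_l$ once $q\ge C\kappa^5\Sigma'/m$, so $i\texttt{>>}_l j$ and $G^*\subseteq\Omega_g$.

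\textbf{Main obstacle.} The label bookkeeping is routine; the delicate point is the strictness argument of Parts 2--3 --- producing an actual $>_l$ edge on each of the $\ge k$ (resp.\ $m-k$) witness paths out of the ``enough-on-average'' hypothesis $q=\Omega(\kappa^5\Sigma/m)$, given that the genuinely informative pair (such as $\{i,k\}$) is almost never an edge of $G$. The mechanism to get exactly right is that the Markov step forces the \emph{total} score ratio along a witness path to be $\ge1+\Omega(\sqrt{m/\Sigma})$, so spreading it over $\le\kappa$ edges still leaves one edge above the per-edge detection threshold $1+O(\kappa\sqrt{\kappa/q})=1+O(\frac{1}{\kappa}\sqrt{m/\Sigma})$; the factor-$\kappa$ of slack here is what forces $\kappa^5$ rather than $\kappa^3$ in the hypothesis, and the $\Omega(\cdot)$-constant must be set large enough for the exponential estimates above. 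A secondary nuisance is keeping the $m/4$ rank-gap demanded by Lemma~\ref{lem:graph_path} compatible with needing $k$ (resp.\ $m-k$) witnesses, which is precisely why $B$ must be the items of $\theta$-rank $\ge(m+k)/2$ and $G^*$ those of $\theta$-rank $\le k/2$, rather than an arbitrary quarter of the correctly classifiable items.
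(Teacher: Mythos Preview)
Your proposal is correct and follows essentially the same approach as the paper's proof: Part~1 via the multiplicative inequality $(1-8\sqrt{\kappa/q})^{\kappa-1}(1+16\kappa\sqrt{\kappa/q})>1$ from Lemma~\ref{lem:label}, and Parts~2--3 via Lemma~\ref{lem:graph_path} plus a Markov/monotonicity argument to force one edge on each witness path above the $>_l$ threshold. The only cosmetic differences are your choice of the target sets (ranks $\ge (m+k)/2$ and $\le k/2$ versus the paper's $\{3m/4+1,\dots,m\}$ and $\{1,\dots,m/4\}$) and your applying Markov in Part~3 directly to $\theta_{k+1}^2/(\theta_i-\theta_{k+1})^2$ rather than to $\theta_i^2/(\theta_i-\theta_{k+1})^2$ as the paper does; both variants work and yield the same bound.
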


Since the algorithm terminates within $O(\kappa)$ recursions, moreover, in each recursion, the algorithm makes at most $\kappa m q$ queries. Therefore, Lemma \ref{lem:main_3} implies that the algorithm runs in total queries:
$$ O \left(\kappa^7 \cdot  \left( k + \sum_{i = k + 1}^m \frac{\theta_{k }^2}{(\theta_k - \theta_i)^2}   +\sum_{i = 1}^k \frac{\theta_{k + 1 }^2}{(\theta_{ k + 1} - \theta_i)^2}  \right)\right) $$

\begin{proof}[Proof of Lemma \ref{lem:main_3}]

\begin{enumerate}
\item It suffices to show that if $i \texttt{>>}_l j$, then $\theta_i \geq \theta_j$. To see this, consider a strictly label monotone path $i = i_1 \to i_2 \to \cdots \to i_d = j$ with length $d \leq \kappa$. By Lemma \ref{lem:label}, we know that for every $r \in [d - 1]$, we have:
$\theta_{i_r}\geq \theta_{i_{r + 1}}\left( 1 - 8 \sqrt\frac{\kappa}{q} \right)$. Moreover, there exists an $r' \in [d - 1]$ such that $\theta_{i_{r'}} \geq \theta_{i_{r' + 1}}\left( 1 + 16 \kappa\sqrt\frac{\kappa}{q} \right)$. Multiply every thing together, we know that
$$\theta_i \geq \theta_j \left( 1 - 8 \sqrt\frac{\kappa}{q} \right)^{\kappa - 1} \left( 1 + 16 \kappa \sqrt\frac{\kappa}{q} \right) \geq \theta_j$$

\item
Let us denote the set $H = \{ \frac{3}{4} m + 1, \frac{3}{4} m + 2, \cdots m \}$, we will prove that $H \subseteq \Omega_b$. Consider one $j \in H$, by Lemma \ref{lem:graph_path}, w.h.p. for every $i \in [k]$, there exists a path $i = i_ 1 \to i_2 \to \cdots i_d = j$ of length at most $\kappa$ such that $\theta_{i_r} \geq \theta_{i_{r + 1}}$ for every $r \in [d - 1]$. Now, by Lemma \ref{lem:label}, we know that this path is label monotone. We just need to show that this path is strictly label monotone. To see this, we know that there exists one $r' \in [d - 1]$ such that $$\theta_{i_{r'}} \geq \theta_{i_{r' + 1}}  \left(\frac{\theta_i}{\theta_j} \right)^{1/ \kappa}$$

Let $\nu = \frac{1}{m - k}\sum_{i = k + 1}^m \frac{\theta_{k }^2}{(\theta_k - \theta_i)^2}$. Now, since $k \leq \frac{m}{2}$, we can apply Markov inequality and conclude that for this $j \in H$ and $i \in [k]$, $\frac{\theta_{i}^2}{(\theta_{i} - \theta_j)^2} \leq \frac{\theta_{k}^2}{(\theta_{k} - \theta_j)^2} \leq 2\nu  $. Which implies that
$$\frac{\theta_i}{\theta_j} \geq \frac{1}{1 - \sqrt{\frac{1}{2 \nu}}} \geq 1 +  \sqrt{\frac{1}{2 \nu}}$$

For $q = \Omega(\kappa^5 \nu)$, we know that
$$\left(\frac{\theta_i}{\theta_j}  \right)^{1/\kappa} \geq \left(1 + 64 \kappa^2 \sqrt\frac{\kappa}{q} \right)^{1/\kappa} \geq \left( 1 + 32 \kappa\sqrt\frac{\kappa}{q} \right)$$

Therefore, $\theta_{i_{r'}} \geq \theta_{i_{r' + 1}}  \left(\frac{\theta_i}{\theta_j} \right)^{1/ \kappa} \geq  \left( 1 + 32 \kappa\sqrt\frac{\kappa}{q} \right)$. By definition, we shall label $i_{r'} > i_{r' + 1}$ and thus $i \texttt{>>}_l j$.

\item It can be shown with exactly the same calculation as 2 with $H = \left\{  1, 2, \cdots \frac{1}{4}m  \right\}$ and apply Markov inequality on
$$\nu = \frac{1}{k}\sum_{i = 1}^k \frac{\theta_{i }^2}{(\theta_{k + 1}- \theta_i)^2} =\frac{1}{k} O\left(k + \sum_{i = 1}^k \frac{\theta_{k + 1 }^2}{(\theta_{k + 1}- \theta_i)^2} \right) $$
\end{enumerate}

\end{proof}

\subsection{Top-$k$ item identification (For super logarithmic  $l$)}

Before presenting the algorithm, we first argue about which case using bigger $l$ is unnecessary. We have the following Claim:
\begin{claim}[Bigger $l$]\label{claim:big_l} For every $l \leq m \leq n$, we have:
\begin{align*}
& k + \sum_{i = k + 1}^m \frac{\theta_{k }^2}{(\theta_k - \theta_i)^2}   +\sum_{i = 1}^k \frac{\theta_{k + 1 }^2}{(\theta_{ k + 1} - \theta_i)^2}
\\
& \leq \left(\frac{m}{l}  + k + \frac{\sum_{i \geq k  + 1} \theta_{i}}{\theta_k} +  \sum_{i \geq k + 1, \theta_i \geq \frac{\theta_k}{2}} \frac{\theta_{k}^2}{(\theta_k - \theta_i)^2}   +\sum_{i = 1}^k \frac{\theta_{k + 1 }^2}{(\theta_{ k + 1} - \theta_i)^2}  \right) + 4m
\end{align*}
\end{claim}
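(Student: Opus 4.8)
The plan is to notice that the claim collapses to a one-line weight estimate once the common terms are cancelled. Both sides contain $k$ and $\sum_{i=1}^{k}\frac{\theta_{k+1}^2}{(\theta_{k+1}-\theta_i)^2}$ verbatim, and the right-hand side additionally carries the nonnegative quantities $\frac{m}{l}$ and $\frac{\sum_{i\ge k+1}\theta_i}{\theta_k}$. Hence it suffices to establish
\[
\sum_{i=k+1}^{m}\frac{\theta_k^2}{(\theta_k-\theta_i)^2}\ \le\ \sum_{i\ge k+1,\ \theta_i\ge \theta_k/2}\frac{\theta_k^2}{(\theta_k-\theta_i)^2}\ +\ 4m,
\]
where every denominator is positive because $i\ge k+1$ forces $\theta_i\le\theta_{k+1}<\theta_k$, so there is no division issue.

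Next I would split the index range $\{k+1,\dots,m\}$ by the magnitude of $\theta_i$, setting $A=\{\,i:k+1\le i\le m,\ \theta_i\ge\theta_k/2\,\}$ and $B=\{\,i:k+1\le i\le m,\ \theta_i<\theta_k/2\,\}$. The $A$-part is bounded by $\sum_{i\ge k+1,\ \theta_i\ge\theta_k/2}\frac{\theta_k^2}{(\theta_k-\theta_i)^2}$, since that sum ranges over a superset of $A$ and all summands are nonnegative. For the $B$-part, each such $i$ satisfies $\theta_k-\theta_i>\theta_k/2$, hence $\frac{\theta_k^2}{(\theta_k-\theta_i)^2}<4$, so $\sum_{i\in B}\frac{\theta_k^2}{(\theta_k-\theta_i)^2}<4|B|\le 4(m-k)\le 4m$. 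Adding the two bounds yields the displayed inequality, and therefore the claim.

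There is essentially no real obstacle here; the only things to be careful about are bookkeeping points: that the right-hand terms $k$, $\frac{m}{l}$, $\frac{\sum_{i\ge k+1}\theta_i}{\theta_k}$ and $\sum_{i=1}^{k}\frac{\theta_{k+1}^2}{(\theta_{k+1}-\theta_i)^2}$ are all nonnegative and so can only help, and that the crude per-index bound "$<4$ whenever $\theta_i<\theta_k/2$" is precisely what manufactures the additive slack $4m$. (If a tighter constant were desired one could split at a different threshold, but $4m$ is exactly the form used downstream, so I would keep it.)
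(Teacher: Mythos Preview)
Your proposal is correct. The paper in fact states this claim without proof, treating it as an easy observation; your cancellation-and-split argument is exactly the natural verification, and the threshold $\theta_k/2$ producing the per-term bound $<4$ is precisely what yields the $4m$ slack.
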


Therefore, as long as we can show one of the following:
\begin{enumerate}
\item $k = \Omega(m)$.
\item $ \frac{\sum_{i \geq k  + 1} \theta_{i}}{\theta_k}  = \Omega(m)$.
\end{enumerate}

We can just use the algorithm for $l = 2$. Otherwise, we shall consider larger $l$, we will directly considering the case when $l= \Omega(\log n)$. Before giving the algorithm, it is convenient to first consider the following query procedure: For a fixed $Q$, do:

\newcommand{\myalgott}{\textsf{BasicQuery}}
\begin{algorithm}
\caption{\myalgott}\label{alg:malg_123}
\begin{algorithmic}[1]
\State $\bold{Parameter}$: $\kappa = \Omega(\log^2 n)$
\State $\bold{Input}$: $\Omega$: set of items with $|\Omega| = m$,  $k$: number of top items to find. $l$: size of the subset to query.
\State  Uniformly at random sample $s = \frac{m \kappa}{l} $ subsets $S_1, \cdots S_s$ of $\Omega$, each of size $l$. Associate these subsets with a hypergraph $G = (\Omega, E)$, where each edge $e_u \in E$ is consists of all the vertices in $S_u$ for $u \in [s]$.
\State Query each set $Q$ time, obtain in total $s Q$ query results $\{R_{u, q} \}_{u \in [s], q \in [Q]}$.
 \end{algorithmic}
\end{algorithm}

\newpage

For a fixed $q \leq Q$,  let us consider a random variable $\tilde{\theta}_{i, S_u} \in [0, 1]$ defined as
$$\tilde{\theta}_{i, S_u} =  \frac{1}{q}\sum_{r \in [q]} 1_{R_{u, r}  = i}$$

 For each $i, u$ such that $i \in S_u$, let us define 0-1 valued function
$1_{i, u, \alpha, \beta, \gamma} $ such that $1_{i, u, \alpha, \beta, \gamma} = 1 $ if and only if all the following conditions hold:
\begin{enumerate}
\item $\tilde{\theta}_{i, S_u} \geq \frac{\alpha }{q}$.
\item There exists at least $\gamma l$ many of the $j \in S_u$ such that $\tilde{\theta}_{j, S_u}  \leq \beta \tilde{\theta}_{i, S_u}$.
\end{enumerate}

We also consider the random variable $X_{i, u, \alpha, \beta, \gamma} $ associated with this function, where the randomness is taken over the uniformly at random choice of $S_u$ conditional on $i \in S_u$, and the randomness of the outcome of the queries.

We prove the following main Lemma:
\begin{lemma}[Indicator]\label{lem:indicator}

Let $\gamma \in \left[\frac{1}{32}, \frac{1}{2} \right], \beta \in (0, 32]$, $\alpha = \Omega(\kappa)$. For every $i \in [m]$, the following holds:

\begin{enumerate}
\item If  $q = \Omega\left(\alpha + \frac{2 \alpha l \sum_{j \in [m]} \theta_j} {m \theta_i }  \right) $ and $\theta_i \geq 2 \beta \theta_{(1 - 2 \gamma)m}$, then
$$\Pr[X_{i, u, \alpha, \beta, \gamma}  = 1] \geq \frac{15}{16}$$
\item For every $q$, if $\theta_i \leq  \frac{\beta}{2} \theta_{\left(1 - \gamma \right)m}$, then
$$\Pr[X_{i, u, \alpha, \beta, \gamma}  = 1] \leq \frac{9}{16}$$
\end{enumerate}
\end{lemma}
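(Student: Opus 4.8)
\textbf{Proof proposal for Lemma~\ref{lem:indicator}.}
The plan is to analyze the two defining conditions of $X_{i,u,\alpha,\beta,\gamma}$ separately, first conditioning on a typical draw of $S_u$ (i.e.\ on the multiset of $\theta_j$'s that land in $S_u$ besides $i$), and then integrating out the choice of $S_u$. Throughout I would write $\Theta_u = \sum_{j \in S_u}\theta_j$, so that the number of times $i$ is reported in $q$ queries of $S_u$ is $\mathrm{Bin}(q, \theta_i/\Theta_u)$, and similarly for each other $j \in S_u$. The key quantitative tool is the binomial concentration Claim~\ref{claim:binomial_concentration}, used in the form: with high probability $q\tilde\theta_{i,S_u}$ is within $O(\sqrt{q(\theta_i/\Theta_u)\log n})$ of $q\theta_i/\Theta_u$.

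For part~(1), I would argue in two stages. First, condition~1 ($\tilde\theta_{i,S_u}\ge \alpha/q$): the expected count is $q\theta_i/\Theta_u$, and I need this to be $\Omega(\alpha)$ with the concentration slack absorbed, which forces a lower bound on $q$ of the form $q = \Omega(\alpha \Theta_u/\theta_i)$. The point is that for a random $S_u$ of size $l$ containing $i$, $\E[\Theta_u - \theta_i] = \frac{l-1}{m-1}\sum_{j\ne i}\theta_j \le \frac{l}{m}\sum_j\theta_j$, so by Markov's inequality $\Theta_u \le \theta_i + O(l\sum_j\theta_j/m)$ with probability close to $1$ (say at least $1 - 1/64$ after choosing the constant), which is exactly where the hypothesis $q = \Omega(\alpha + \alpha l\sum_j\theta_j/(m\theta_i))$ comes from. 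Second, condition~2: since $\theta_i \ge 2\beta\,\theta_{(1-2\gamma)m}$, at least $2\gamma m$ of the $m$ items have $\theta_j \le \theta_i/(2\beta)$; a random $S_u$ of size $l = \Omega(\log n)$ containing $i$ will, by a Chernoff bound, contain at least $\gamma l$ such ``light'' items with probability at least $1 - 1/64$. On that event, each light $j$ has $\E[\tilde\theta_{j,S_u}] = \theta_j/\Theta_u \le \theta_i/(2\beta\Theta_u) \le \tilde\theta_{i,S_u}/(2\beta)$ in expectation, and with the concentration bound (which is where $\alpha = \Omega(\kappa)$, hence counts $\Omega(\log n)$, matters so the multiplicative slack is $1+o(1)$) we get $\tilde\theta_{j,S_u} \le \beta\tilde\theta_{i,S_u}$ simultaneously for all of them. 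Taking a union bound over the $O(1/64)$-probability bad events and the high-probability concentration failures gives $\Pr[X = 1] \ge 15/16$.

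For part~(2), no lower bound on $q$ is available, so I cannot use concentration of $\tilde\theta_{i,S_u}$ around its mean; instead I would bound $\Pr[X=1]$ by bounding $\Pr[\text{condition 2 holds}]$ alone. Since $\theta_i \le \frac{\beta}{2}\theta_{(1-\gamma)m}$, at least $\gamma m$ items $j$ satisfy $\theta_j \ge \theta_{(1-\gamma)m} \ge 2\theta_i/\beta$, i.e.\ $\theta_j/\beta \ge 2\theta_i \ge 2\theta_j\cdot(\theta_i/\theta_j)$... more cleanly: these ``heavy'' items have $\theta_j \ge (2/\beta)\theta_i$. For condition~2 to hold there must be at least $\gamma l$ items $j\in S_u$ with $\tilde\theta_{j,S_u}\le\beta\tilde\theta_{i,S_u}$, hence \emph{fewer than} $\gamma l$ items with $\tilde\theta_{j,S_u} > \beta\tilde\theta_{i,S_u}$; but if $S_u$ contains at least $\gamma l$ heavy items (which, by the same Chernoff argument as above, happens with probability $\ge 1 - 1/64$ since $l=\Omega(\log n)$), I would argue that in expectation a heavy $j$ has $\tilde\theta_{j,S_u} \ge (2/\beta)\tilde\theta_{i,S_u}$ in the sense of the underlying success probabilities $\theta_j/\Theta_u \ge (2/\beta)(\theta_i/\Theta_u)$, and then use a one-sided comparison: the event $\{\tilde\theta_{j,S_u} \le \beta\tilde\theta_{i,S_u}\}$ for a heavy $j$ requires the empirical frequency of $j$ to be at most $\beta^2/2 \le$ (a small constant, recall $\beta\le 32$—wait, this needs $\beta^2/2 < 2$, i.e.\ $\beta<2$, which is false in general)...

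\textbf{The anticipated obstacle.} The delicate point, and the one I expect to be the real work, is part~(2) when $\beta$ is a large constant (up to $32$): the crude ``heavy item beats $i$'' argument above does not close, because $\beta\tilde\theta_{i,S_u}$ can exceed $\tilde\theta_{j,S_u}$ even for items heavier than $i$. The fix I would pursue is to not compare $i$ against individual heavy items but against the \emph{median} (or $\gamma$-quantile) of the frequencies in $S_u$: condition~2 says $\tilde\theta_{i,S_u}$ is at least a $(1/\beta)$-fraction of the $\gamma l$-th largest frequency, equivalently the $\gamma$-quantile item has frequency $\le \beta\tilde\theta_{i,S_u}$; since with high probability $S_u$ contains $\ge\gamma m$-many items of weight $\ge 2\theta_i/\beta$ among its $l$ members, the empirical $\gamma$-quantile frequency concentrates (over the $q$ queries jointly, using that with $\gamma l = \Omega(\log n)$ many such items at least one is reported $\Omega(1)$ of the time, or more carefully a Markov-type bound on the total mass of light items) away from being $\le\beta\tilde\theta_{i,S_u}$. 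Making this quantile comparison rigorous without any lower bound on $q$—so purely through the structure of the multinomial over $S_u$, using that $\theta_i/\Theta_u$ is genuinely small relative to the bulk—is the crux; I would handle it by a direct computation on the multinomial, bounding the probability that $i$'s empirical frequency is within factor $\beta$ of the $\gamma l$-th largest empirical frequency by a union bound over which subset of $S_u$ realizes that quantile, then applying the one-sided Chernoff/Markov estimate to each. The remaining steps (part~(1), and the Chernoff counting of light/heavy items inside $S_u$) are routine given $l = \Omega(\log n)$ and $\alpha = \Omega(\kappa) = \Omega(\log^2 n)$.
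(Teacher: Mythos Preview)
Your treatment of part~(1) mirrors the paper's proof closely: Markov's inequality on $\sum_{j\in S_u}\theta_j$ to control $\theta_{i,S_u}$, a Chernoff bound to guarantee at least $\gamma l$ light items land in $S_u$, and binomial concentration (using $\alpha=\Omega(\kappa)$) to transfer the true-probability gap to the empirical frequencies. This part is fine.

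Part~(2), however, has a genuine gap. You propose to bound $\Pr[X=1]$ by bounding $\Pr[\text{condition 2 holds}]$ alone, explicitly discarding condition~1 because ``no lower bound on $q$ is available.'' But this cannot succeed: take $q=1$. Then every $\tilde\theta_{j,S_u}\in\{0,1\}$, exactly $l-1$ items have empirical frequency $0$, and condition~2 holds automatically whenever $\tilde\theta_{i,S_u}=0$, which happens with probability $1-\theta_i/\Theta_u$, typically close to $1$. Hence $\Pr[\text{condition 2}]$ is not bounded below $9/16$ uniformly in $q$, and no quantile or union-bound refinement over subsets of $S_u$ can change this. The obstacle you anticipate is an artifact of having thrown away condition~1.

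The paper's resolution is a one-line observation you overlooked: the event $\{X_{i,u,\alpha,\beta,\gamma}=1\}$ \emph{already contains} condition~1, namely $\tilde\theta_{i,S_u}\ge \alpha/q$. On that event the empirical count $q\tilde\theta_{i,S_u}\ge\alpha=\Omega(\kappa)$ is large, so the binomial concentration bound (which holds w.h.p.\ for every fixed $S_u$) forces $\theta_{i,S_u}=\Omega(\alpha/q)$, and therefore each heavy item $j$ with $\theta_j\ge(2/\beta)\theta_i$ has $\theta_{j,S_u}=\Omega(\kappa/q)$ as well. From there the \emph{same} multiplicative concentration calculation as in part~(1) shows that w.h.p.\ every heavy $j\in S_u$ satisfies $\tilde\theta_{j,S_u}>\beta\tilde\theta_{i,S_u}$ (respectively $>\tilde\theta_{i,S_u}/\beta$), so condition~2 fails once $S_u$ contains enough heavy items, which again is a Chernoff event over the random draw of $S_u$. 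The paper accordingly dispatches part~(2) in a single sentence: ``The proof follows from the same calculation. Notice that this time we already have $X_{i,u,\alpha,\beta,\gamma}=1 \implies \tilde\theta_{i,S_u}\ge \alpha/q$.'' You do not need any a~priori lower bound on $q$; condition~1 supplies exactly the effective sample size required.
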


\begin{proof}[Proof of Lemma \ref{lem:indicator}]

\begin{enumerate}
\item

We first bound the probability that $\tilde{\theta}_{i, S_u} \geq \frac{\alpha }{q}$. By $$\tilde{\theta}_{i, S_u} \in \left[{\theta}_{i, S_u} - \sqrt{\frac{{\theta}_{i, S_u} \kappa}{q}} , {\theta}_{i, S_u} + \sqrt{\frac{{\theta}_{i, S_u} \kappa}{q}} \right]$$

we know that
$$\theta_{i, S_u}  \geq \frac{2 \alpha}{q} \implies \tilde{\theta}_{i, S_u } \geq \frac{\alpha}{q}$$

To lower bound this probability, we just need to consider the probability that $\theta_{i, S_u}  < \frac{2 \alpha }{q}$.  We apply Markov inequality and have that:
$$\Pr\left[  \theta_{i, S_u}  < \frac{2 \alpha }{q} \right] = \Pr\left[   \frac{2 \alpha }{q \theta_{i, S_u}} > 1 \right]   < \frac{\E \left[ \frac{2 \alpha }{q  \theta_{i, S_u} } \right ]}{1}$$

Notice that
$$\frac{2 \alpha}{q \theta_{i, S_u}} = \frac{2 \alpha \sum_{j \in S_u} \theta_j}{\theta_i q}$$

Therefore, $$\E\left[\frac{2 \alpha}{q \theta_{i, S_u}} \right] =  \frac{2 \alpha \E[\sum_{j \in S_u}] \theta_j}{\theta_i q}  \leq \frac{2 \alpha}{q}  + \frac{2 \alpha l \sum_{j \in [m]} \theta_j} {m \theta_i q} \leq \frac{1}{64}$$

Putting together we obtain $$\Pr\left[\tilde{\theta}_{i, S_u } < \frac{\alpha}{q} \right] \leq \Pr\left[{\theta}_{i, S_u } < \frac{2\alpha}{q} \right] < \frac{1}{64}$$

Now we can move to the second condition. For now, suppose $\tilde{\theta}_{i, S_u} \geq \frac{\alpha }{q}$ holds, we then know that
$$\tilde{\theta}_{i, S_u} \in \left[ \frac{31}{32}   {\theta}_{i, S_u},  \frac{33}{32}{\theta}_{i, S_u}\right]$$

Therefore, $\theta_i \geq 2 \beta \theta_{(1 - 2 \gamma)m}$ implies that for every $j \in H = \{ (1 - 2 \gamma)m, (1 - 2 \gamma)m + 1 , \cdots m\}$ with $j \in S_u$, we have:
\begin{align*}
\tilde{\theta}_{j , S_u} &\leq  \theta_{j, S_u} + \sqrt{\frac{\theta_{j, S_u} \kappa}{q}}   \leq  \theta_{j, S_u} + \sqrt{\frac{\theta_{i, S_u} \kappa}{q}}
\\
& \leq \theta_{j, S_u} + \frac{1}{128}\theta_{i, S_u} \leq \frac{ \theta_{i, S_u}  }{2 \beta} +  \frac{1}{128}\theta_{i, S_u}
\\
& \leq \frac{ 3\theta_{i, S_u}  }{ 4\beta} \quad \text{for $\beta \leq 32$}
\\
& \leq \frac{\tilde{\theta}_{i , S_u}}{\beta}
\end{align*}

Since $|H| = 2 \gamma m$, we know that for $l = \Omega(\log n)$, $ \Pr[ |H \cap S_u| < \gamma l ]  \leq \frac{1}{64}$.
Therefore,
$$\Pr[X_{i, u, \alpha, \beta, \gamma}  = 1] \geq 1 - \Pr\left[\tilde{\theta}_{i, S_u } < \frac{\alpha}{q} \right] - \Pr[ |H \cap S_u| < \gamma l ]  \geq \frac{15}{16}$$

\item The proof follows from the same calculation. Notice that this time we already have $X_{i, u, \alpha, \beta, \gamma}  = 1 \implies \tilde{\theta}_{i, S_u} \geq \frac{\alpha }{q} $.
\end{enumerate}

\end{proof}

For fixed $\alpha = \Omega(\kappa)$, every $\beta \in (0, 32], \gamma \in \left[\frac{1}{32}, \frac{1}{2} \right]$ and every $\tau  \in \left[ \frac{3}{4}, \frac{7}{8} \right]$, we consider set $$\Omega_{\beta, \gamma, \tau} = \left\{i \in [m] \bigg| \sum_{u: u \in [s],  i \in S_u} X_{i, u, \alpha, \beta, \gamma} \geq \tau \text{deg}(i)\right\}$$

We also have the following Corollary of Lemma \ref{lem:indicator}:

\begin{corollary}\label{cor:main:4}
\begin{enumerate}

\item For every $i, j \in [m]$ with $\theta_i \geq \theta_j$, every $\tau \in \left[ \frac{3}{4} \times \frac{33}{32}, \frac{7}{8} \right]$, w.h.p. $j \in \Omega_{\beta, \gamma, \tau}  \implies i \in  \Omega_{\beta, \gamma,  \frac{32}{33}\tau} $.
\item For every $i \in [m]$, if $\theta_{i} \geq 2 \beta \theta_{(1- 2 \gamma)m}$ and $q = \Omega\left(\alpha + \frac{2 \alpha l \sum_{j \in [m]} \theta_j} {m \theta_i }  \right)  $, then w.h.p. $i \in \Omega_{\beta, \gamma, \tau} $.

\item For every $q$, if $i \in \Omega_{\beta, \gamma, \tau}$, then w.h.p. $\theta_i \geq \frac{\beta}{2} \theta_{(1 - \gamma) m}$.
\end{enumerate}
\end{corollary}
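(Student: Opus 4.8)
All three parts rest on two ingredients: the per‑set estimates of Lemma~\ref{lem:indicator}, and the observation that for every item $i$ the degree $\deg(i) := |\{u : i \in S_u\}|$ is a binomial random variable with mean $s\cdot l/m = \kappa = \Omega(\log^2 n)$, so by a Chernoff bound $\deg(i) = \Theta(\kappa)$ for all $i\in[m]$ simultaneously w.h.p. Moreover, conditioned on which sets contain $i$, those $\deg(i)$ sets --- together with their query outcomes --- are i.i.d., so the indicators $\{X_{i,u,\alpha,\beta,\gamma}\}_{u : i \in S_u}$ are i.i.d.\ Bernoulli with mean $p_i := \Pr[X_{i,u,\alpha,\beta,\gamma} = 1 \mid i \in S_u]$, and hence by Chernoff $\big|\sum_{u : i \in S_u} X_{i,u,\alpha,\beta,\gamma} - p_i\deg(i)\big| = O(\sqrt{\deg(i)\log n}) = o(\deg(i))$ w.h.p.\ (using $\deg(i) = \Omega(\log^2 n)$).

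Given this machinery, parts~2 and~3 are immediate. For part~2, the hypotheses $\theta_i \geq 2\beta\theta_{(1-2\gamma)m}$ and $q = \Omega(\alpha + 2\alpha l\sum_j\theta_j/(m\theta_i))$ place us in case~1 of Lemma~\ref{lem:indicator}, so $p_i \geq 15/16$; since $\tau \leq 7/8$ is smaller than $15/16$ by a positive constant while the deviation is $o(\deg(i))$, we get $\sum_{u : i \in S_u}X_{i,u,\alpha,\beta,\gamma} \geq \tau\deg(i)$, i.e.\ $i \in \Omega_{\beta,\gamma,\tau}$. Part~3 is the contrapositive: if $\theta_i < \tfrac{\beta}{2}\theta_{(1-\gamma)m}$ then case~2 of Lemma~\ref{lem:indicator} gives $p_i \leq 9/16$, which is below $\tau \geq 3/4$ by a positive constant, so $\sum_{u : i \in S_u}X_{i,u,\alpha,\beta,\gamma} < \tau\deg(i)$ w.h.p.\ and $i \notin \Omega_{\beta,\gamma,\tau}$. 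A union bound over $i\in[m]$ (and over the finitely many parameter triples $(\beta,\gamma,\tau)$ the algorithm ever uses, or a net thereof) makes these statements simultaneous.

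Part~1 is the real content, and the plan is to reduce it to the monotonicity statement ``$\theta_i \geq \theta_j \implies p_i \geq p_j - o(1)$'' and then finish by concentration. Assuming monotonicity: $j \in \Omega_{\beta,\gamma,\tau}$ means $\sum_{u : j\in S_u}X_{j,u,\alpha,\beta,\gamma} \geq \tau\deg(j)$, so the concentration above forces $p_j \geq \tau - o(1)$; monotonicity gives $p_i \geq \tau - o(1)$; and concentration for $i$ gives $\sum_{u : i\in S_u}X_{i,u,\alpha,\beta,\gamma} \geq (\tau - o(1))\deg(i) \geq \tfrac{32}{33}\tau\deg(i)$, i.e.\ $i \in \Omega_{\beta,\gamma,\frac{32}{33}\tau}$ --- the slack $\tfrac{1}{33}\tau \geq \tfrac{1}{33}\cdot\tfrac34$ (from the hypothesis $\tau \geq \tfrac34\cdot\tfrac{33}{32}$) is a positive constant that dominates every $o(1)$ error since $\deg, l = \Omega(\log^2 n)$. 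To prove the monotonicity of $p$, I would couple the ``$i$-experiment'' (draw a uniform set $S\ni i$ of size $l$ and query it $q$ times) with the ``$j$-experiment'': couple the other $l-1$ elements so the two sets agree when neither contains the other special item and are literally equal when each does; then, within each query, use the MNL formula and $\theta_i\geq\theta_j$ to couple the reported items so that $i$ is reported in the $i$-experiment at least as often as $j$ is in the $j$-experiment, while every shared competitor is reported at most as often (the standard coupling of two multinomials whose probability vectors are coordinatewise ordered on their common support). Under this coupling $\tilde\theta_{i,S}\geq\tilde\theta_{j,S}$ and $\tilde\theta_{a,S}$ only decreases for shared competitors $a$, so condition~(1) of $X_{j,u,\alpha,\beta,\gamma}$ ($\tilde\theta_{j,S}\geq\alpha/q$) forces condition~(1) of $X_{i,u,\alpha,\beta,\gamma}$, and condition~(2) (at least $\gamma l$ competitors $a$ with $\tilde\theta_{a,S}\leq\beta\tilde\theta_{j,S}$) transfers for every such $a\notin\{i,j\}$, hence with $\gamma l$ replaced by $\gamma l - 2 = (\gamma - o(1))l$ --- once more a loss absorbed by the margin.

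The step I expect to be the main obstacle is the coupling inside part~1: writing down explicitly a coupling of the two query‑outcome multinomials $\mathrm{Multinomial}\big(q;(\theta_a/\!\sum_{b\in S}\theta_b)_{a\in S}\big)$ that realizes the intended coordinatewise stochastic ordering, and then carefully accounting for the several $O(1/l)$ and $O(\sqrt{\log n/\kappa})$ slippages --- the two excluded items in condition~(2), the $\deg(i)$-vs-$\kappa$ discrepancy, the $\sqrt{\deg\log n}$ Chernoff fluctuation --- so that together they stay inside the explicit $\tfrac{32}{33}$ and $\tfrac{33}{32}$ margins of the statement. Everything else (the Chernoff and union bounds, and the reductions above) is routine.
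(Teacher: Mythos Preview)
Your proposal is correct. The paper supplies no proof of this corollary at all --- it is simply asserted as ``the following Corollary of Lemma~\ref{lem:indicator}'' --- so there is nothing to compare against; your argument via degree concentration plus Chernoff on the i.i.d.\ indicators $\{X_{i,u}\}_{u:i\in S_u}$ is the natural filling-in for parts~2 and~3, and your monotonicity-via-coupling plan is the right idea for part~1.

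One remark on the step you flag as the obstacle: the coupling is cleaner than you fear, because the monotonicity $p_i\ge p_j$ holds \emph{exactly}, not merely up to $o(1)$. When the random $l$-set contains exactly one of $i,j$, couple $S=T\cup\{i\}$ with $S'=T\cup\{j\}$ and couple each query so that whenever $S'$ reports some $a\in T$ one reports either $a$ or $i$ in $S$ (consistent since $\theta_a/(\theta_i+\Theta_T)\le\theta_a/(\theta_j+\Theta_T)$); this forces $\tilde\theta_{i,S}\ge\tilde\theta_{j,S'}$ and $\tilde\theta_{a,S}\le\tilde\theta_{a,S'}$ for every $a\in T$, so each $a\in T$ counted toward condition~(2) for $j$ is also counted for $i$, and the single possibly-lost item $j$ is exactly replaced by $i$ (both count iff $\beta\ge1$, neither counts otherwise) --- no ``$\gamma l-2$'' slippage. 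When the set contains both $i$ and $j$, relabeling $i\leftrightarrow j$ identifies $\Pr[X_{j,S}=1]$ with $\Pr[X_{i,S}=1]$ under the weight vector with $\theta_i,\theta_j$ swapped, and the same style of per-query coupling (now the total weight is unchanged, so all $\tilde\theta_{a,S}$ with $a\notin\{i,j\}$ are literally equal) shows that giving the special item the larger weight can only help both conditions. Hence $p_i\ge p_j$ with no error term, and the $\tfrac{32}{33}$ margin need only absorb the two $O(\sqrt{\log n/\kappa})$ Chernoff fluctuations, which it does comfortably.
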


Having this Corollary, we can do the following algorithm that selects all the $\theta_i \geq 32 \max \{\theta_{k}, \theta_{\frac{3}{4} m} \}$ and removes most of the $\theta_j \leq \frac{1}{4} \theta_k$:

\newcommand{\myalgfour}{\text{AlgMulti-wise}}
\begin{algorithm}
\caption{\myalgfour}\label{alg:malg_4}
\begin{algorithmic}[1]
\State $\bold{Parameter}$: $\kappa = O(\log^2 n)$.
\State $\bold{Input}$: $\Omega$: set of items,  $k$: number of top items to find.
\State $\bold{Output}$: $S$: set of top items. $\Omega'$: set of remaining items.
\State $\bold{Initialization}$: $S = \emptyset, \Omega' = \Omega$, $m = |\Omega|$.
\State Call $\myalgott$ to obtain $\{ \tilde{\theta}_{i, S_u} \}_{i \in [n], u \in [s]}$.
\If{$k \leq \frac{1}{2} m$}
\If{ $ 1 \leq |\Omega_{32, \frac{1}{4}, \frac{13}{16}} |$ and  $ |\Omega_{4, \frac{1}{16}, \frac{13}{16}} |< k$ }
\State  $S_1 = \Omega_{4, \frac{1}{16}, \frac{7}{8}}$, $\Omega'' =\Omega - S_1$,  $(S', \Omega')  = \myalgfour(\Omega'', k - |S_1|, R)$, $S = S \cup S_1 \cup S''$.
\State Notice that we pick those numbers so $\frac{7}{8} \geq \frac{33}{32} \cdot \frac{13}{16} \geq \left( \frac{33}{32} \right)^2\cdot\frac{3}{4}$.
\ElsIf{$|\Omega_{4, \frac{1}{16}, \frac{13}{16}} |\geq k$}
\State  $\Omega'' =  \Omega_{4, \frac{1}{16}, \frac{3}{4}} $, $(S', \Omega')  = \myalgfour(\Omega'' , k, R)$, $S = S \cup S'$.

\EndIf


\EndIf

%

\State $\bold{Return}$ $S, \Omega'$.
 \end{algorithmic}
\end{algorithm}

We have the following lemma.
\begin{lemma}
For every $m$, every $\theta_1 \geq \theta_2 \geq \cdots \geq \theta_m$, every $k \leq m$, every $l \leq m$, Algorithm \ref{alg:malg_4}, on given a random permutation of $\Omega = [m], k$ satisfies:
\begin{enumerate}
\item Output set $(S, \Omega')$ of the algorithm satisfies $S \subseteq [k]$.
\item 
If $Q =  \tilde{\Omega}\left(1 + \frac{ l\left(k + \sum_{j \geq k} \theta_j \right)} {m \theta_k }  \right) $, then the algorithm returns in $O(\log m)$ many recursion calls, and 
after the algorithm, let us for simplicity still denote $[|\Omega'|] = \Omega'$ with $\theta_1 \geq \theta_2 \geq \cdots \theta_{|\Omega'|}$ and $k' = k - |S|$, then either
\begin{enumerate}
\item For every $i \in \Omega'$, $\theta_i \leq 512   \theta_{\frac{7}{8} |\Omega'|}$.
\item Or $ k' \geq \frac{1}{2}  |\Omega'|$.
\end{enumerate}
\end{enumerate}
\end{lemma}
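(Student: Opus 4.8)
Both parts are proved by induction on the recursion depth of Algorithm~\ref{alg:malg_4}, with Corollary~\ref{cor:main:4} doing all the work: its ``consistency'' part (item~1) and ``soundness'' part (item~3) hold with no hypothesis on $q$, while its ``completeness'' part (item~2) needs $q$ large. Write $\theta^{\Omega'}_1\ge\theta^{\Omega'}_2\ge\cdots$ for the scores of the current set $\Omega'$ sorted decreasingly, and $m'=|\Omega'|$. For Part~1, since only the consistency part is used, no hypothesis on $Q$ is needed, matching the statement. By induction it suffices to analyze one level. In the first sub-case we must show $S_1=\Omega_{4,1/16,7/8}$ consists only of the top $k'$ items of $\Omega'$: if some $i\in S_1$ had rank $>k'$, then for each of the $k'$ items $j$ with $\theta_j>\theta_i$, item~1 of Corollary~\ref{cor:main:4} gives $j\in\Omega_{4,1/16,(32/33)(7/8)}\subseteq\Omega_{4,1/16,13/16}$ (using $\tfrac{7}{8}\ge\tfrac{33}{32}\cdot\tfrac{13}{16}$), so $|\Omega_{4,1/16,13/16}|\ge k'$, contradicting the first sub-case hypothesis; removing $S_1$ and passing $k'-|S_1|$ then yields a subproblem whose top items are exactly the remaining top items, closing the induction. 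In the second sub-case nothing is added to $S$, and the contrapositive of item~1 (now via $\tfrac{32}{33}\cdot\tfrac{13}{16}\ge\tfrac34$) shows every top-$k'$ item of $\Omega'$ lies in $\Omega_{4,1/16,3/4}=\Omega''$, so $(\Omega'',k')$ has the same top-$k'$ set and the induction again closes.

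\textbf{Part~2, granting completeness.} Suppose (to be justified below) that $q=Q$ is large enough, at every level reached, for item~2 of Corollary~\ref{cor:main:4} to apply to every $i\in\Omega'$ with $\theta_i\ge 8\theta^{\Omega'}_{7m'/8}$; call this $(\star)$. For the depth bound: in the second sub-case $\Omega''=\Omega_{4,1/16,3/4}\subseteq\{i:\theta_i\ge 2\theta^{\Omega'}_{15m'/16}\}$ by item~3, which by monotonicity has size $\le\tfrac{15}{16}m'$, so the second sub-case occurs $O(\log m)$ times. In the first sub-case $(\star)$ forces $S_1$ to contain every item with $\theta_i\ge 8\theta^{\Omega'}_{7m'/8}$; in particular $|S_1|\ge1$, since the nonempty set $\Omega_{32,1/4,13/16}$ lies inside $\{\theta_i\ge16\theta^{\Omega'}_{3m'/4}\}\subseteq\{\theta_i\ge 8\theta^{\Omega'}_{7m'/8}\}\subseteq S_1$ (items~3, monotonicity, and~2). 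After removing $S_1$ the new maximum is below $8\theta^{\Omega'}_{7m'/8}\le16\theta^{\Omega''}_{3m''/4}$ (monotonicity, using $|S_1|<k'\le m'/2$), so by item~3 the set $\Omega_{32,1/4,13/16}$ is \emph{empty} at the next level and the first sub-case cannot recur immediately. Hence between any two first-sub-case steps lies a second-sub-case step, giving $O(\log m)$ recursions in all, and each call invokes \myalgott\ once, so the algorithm returns as claimed. For the dichotomy, the algorithm stops without recursing only when $k'>m'/2$ (giving (b)) or both tests fail; in the latter case the second test failing gives $|\Omega_{4,1/16,13/16}|<k'\le m'/2$, which forces $|\Omega_{32,1/4,13/16}|=0$, and then item~2 (via $(\star)$) gives $\theta^{\Omega'}_1<64\theta^{\Omega'}_{m'/2}$ while fewer than $m'/2$ items exceed $8\theta^{\Omega'}_{7m'/8}$, i.e.\ $\theta^{\Omega'}_{m'/2}<8\theta^{\Omega'}_{7m'/8}$; chaining, $\theta^{\Omega'}_1<512\,\theta^{\Omega'}_{7m'/8}$, which is (a).

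\textbf{The main obstacle: proving $(\star)$.} The completeness direction of Corollary~\ref{cor:main:4} requires $q\gtrsim\alpha+\alpha l\sum_{j\in\Omega'}\theta_j/(m'\theta_i)$, so $(\star)$ amounts to showing $\sum_{j\in\Omega'}\theta_j\big/(m'\,\theta^{\Omega'}_{7m'/8})=\tilde O(1+\tfrac{l(k+\sum_{j\ge k}\theta_j/\theta_k)}{m})$ at every reached level. This cannot be bounded crudely: the scores $\theta_1,\dots,\theta_{k-1}$ may be arbitrarily large --- which is exactly why $\sum_{j<k}\theta_j$ is \emph{absent} from the budget --- so naively $\sum_{j\in\Omega'}\theta_j$ could dwarf $m'\theta_k$. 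The argument must instead exploit that each first-sub-case step peels off precisely the items currently above the completeness threshold (by $(\star)$ itself, bootstrapped from earlier levels), so $\sum_{j\in\Omega'}\theta_j$ decays geometrically, while the quantile $\theta^{\Omega'}_{7m'/8}$ only grows as light items are discarded in the second sub-case; after $O(\log m)$ levels every surviving score is $\tilde O(\theta_k)$, and one must verify the ratio stays controlled at each intermediate level. Carrying out this simultaneous bookkeeping --- which items the threshold catches at each level, how $m'$ and the two relevant quantiles $\theta^{\Omega'}_{m'/2},\theta^{\Omega'}_{7m'/8}$ evolve, the geometric decay of $\sum_{j\in\Omega'}\theta_j$, and its coupling with the $O(\log m)$ depth --- is the technical heart of the proof.
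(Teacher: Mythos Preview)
Your Part~1 is correct and matches the paper.

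For Part~2 you have correctly identified that verifying the $q$-hypothesis of Corollary~\ref{cor:main:4}(2) is where the work lies, but the version $(\star)$ you isolate is stronger than needed, and the cross-level bookkeeping you outline is not how the paper proceeds. Your $(\star)$ demands completeness for every item down to the threshold $8\theta^{\Omega'}_{7m'/8}$; for the smallest such item the required $q$ scales like $\alpha l\,\theta^{\Omega'}_1\big/\theta^{\Omega'}_{7m'/8}$, a ratio the hypothesis on $Q$ does not control (it depends on $\theta_1$, which may be arbitrarily large, whereas the hypothesis involves only $\theta_k$ and the tail sum). So the inductive bootstrap you sketch has no clear base case, and your depth argument---which relies on $S_1$ catching \emph{everything} above $8\theta^{\Omega'}_{7m'/8}$---does not go through as written.

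The paper's route is much simpler: it invokes completeness only for items $i$ with $\theta_i\ge\theta^{\Omega'}_1/2$, where $\sum_{j\in\Omega'}\theta_j/(m'\theta_i)\le 2$ automatically and the $q$-condition is essentially free. For the depth bound in the first sub-case, soundness on the nonempty $\Omega_{32,1/4,13/16}$ gives $\theta^{\Omega'}_1\ge16\theta^{\Omega'}_{3m'/4}\ge16\theta^{\Omega'}_{7m'/8}$, hence $\theta^{\Omega'}_1/2\ge8\theta^{\Omega'}_{7m'/8}$, and completeness (trivial $q$) places every item with $\theta_i\ge\theta^{\Omega'}_1/2$ into $S_1$; thus the maximum at least halves at each first-sub-case step. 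For the dichotomy at termination with $k<m/2$, the paper argues in two stages: first, completeness for the max alone (trivial $q$) forces $\theta_1\le64\theta_k$, else $\Omega_{32,1/4,13/16}\ne\emptyset$ and the algorithm would have recursed; second, the just-proved $\theta_1\le64\theta_k$ yields $\sum_j\theta_j\le 64k\theta_k+\sum_{j>k}\theta_j$, so the $q$-condition for item $k$ now has exactly the form of the hypothesis on $Q$, and completeness then forces $\theta_k\le8\theta_{7m/8}$ (else $[k]\subseteq\Omega_{4,1/16,13/16}$ and again the algorithm would have recursed). Chaining gives $\theta_1\le512\theta_{7m/8}$. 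This second-stage bootstrap---using $\theta_1\le64\theta_k$ to tame $\sum_j\theta_j$---is the key idea you are missing; it confines the only nontrivial $q$-verification to a single step at the terminal level and removes any need to track $\sum_j\theta_j$ across recursion levels.
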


\begin{proof}[Proof of the main theorem]
After running this algorithm, we can simply apply the algorithm for $l = 2$ (By Claim \ref{claim:big_l}), since one of the following is true:
\begin{enumerate}
\item $ \frac{\sum_{i \geq k'  + 1} \theta_{i}}{\theta_{k'}}  \geq \frac{3}{8}\times  \frac{1}{512} |\Omega'|$.
\item $ k' > \frac{1}{2} |\Omega'|$.
\end{enumerate}
Therefore, putting everything together, we can get the top $k$ items in total number of queries:
$$ \tilde{O}\left(\frac{m}{l}  + k + \frac{\sum_{i \geq k  + 1} \theta_{i}}{\theta_k} +  \sum_{i \geq k + 1, \theta_i \geq \frac{\theta_k}{2}} \frac{\theta_{k}^2}{(\theta_k - \theta_i)^2}   +\sum_{i = 1}^k \frac{\theta_{k + 1 }^2}{(\theta_{ k + 1} - \theta_i)^2}  \right) $$

\end{proof}


Now, it just remains to prove this Lemma:

\begin{proof}

We first prove the correctness: $S \subseteq [k]$. We have the following observations:

\begin{enumerate}
\item If $|\Omega_{4, \frac{1}{16}, \frac{13}{16}} | \geq k$, then there must be $i \in \Omega_{4, \frac{1}{16}, \frac{13}{16}} $ with  $\theta_i \leq \theta_k$. By Corollary \ref{cor:main:4}, since $\frac{13}{16} \geq \frac{33}{32} \cdot \frac{3}{4}$, we know that $[k] \subseteq  \Omega_{4, \frac{1}{16}, \frac{3}{4}} $.
\item If there exists $j \in  \Omega_{4, \frac{1}{16}, \frac{7}{8}}$ such that $j \notin [k]$, then by $\theta_j \leq \theta_k$, apply  Corollary \ref{cor:main:4} with $\frac{7}{8} \geq \frac{33}{32} \cdot \frac{13}{16}$, we know that $[k] \subseteq  \Omega_{4, \frac{1}{16}, \frac{13}{16}} $, which implies $|\Omega_{4, \frac{1}{16}, \frac{13}{16}} | \geq k$. Therefore, we will not include any item that is not top $k$ to $S$ when recursing from Line 7.
\end{enumerate}

These two observations immediately imply $S \subseteq [k]$.

Now, we will show that for sufficiently large $Q$,  either of the two conditions hold:
\begin{enumerate}
\item For every $i \in \Omega'$, $\theta_i \leq 512   \theta_{\frac{7}{8} |\Omega'|}$.
\item Or $ k' \geq \frac{1}{2}  |\Omega'|$.
\end{enumerate}

 Let us for notation simplicity  drop the $'$ here. Clearly, we just need to consider the case when $k < \frac{1}{2} m$, otherwise, the algorithm will just terminate and the second condition is true.  We will first prove that $\theta_i \leq 64 \theta_{k}$ and then we prove that $\theta_k \leq 8 \theta_{\frac{7}{8} m} $.
\begin{enumerate}
\item To prove $\theta_i \leq 64 \theta_{k}$, we suppose on the contrary that $\theta_1 > 64 \theta_k$. Apply Corollary \ref{cor:main:4} with $q = Q = \Omega\left(\alpha  + \frac{2 \alpha l \left( \sum_{j \geq 1} \theta_j \right)} {m \theta_1}  \right)   $, we have that  $1 \in \Omega_{32, \frac{1}{16}, \frac{3}{4}} $, which implies that  $|\Omega_{32, \frac{1}{4}, \frac{13}{16}} | > 0$, so the algorithm won't terminate, contradict.

\item Now, we need to show that $\theta_k \leq 8 \theta_{\frac{7}{8} n} $. We also on the contrary suppose that $\theta_k > 8 \theta_{\frac{7}{8} m}$. Since the algorithm termniates, by the previous claim, we know that in the last recursion, it must be the case that  $\theta_1 \leq 64 \theta_k$. Which implies that
$$Q = \Omega\left(\alpha  + \frac{2 \alpha l \left( k \theta_k + \sum_{j \geq k } \theta_j \right)} {m \theta_k}  \right)  =   \Omega\left(\alpha  + \frac{2 \alpha l \left(  \sum_{j \geq 1 } \theta_j \right)} {m \theta_k}  \right)  $$

Therefore, if $\theta_k > 8 \theta_{\frac{7}{8} m}$, then  by Corollary \ref{cor:main:4}  we know that  $[k] \subseteq  \Omega_{4, \frac{1}{16}, \frac{3}{4}} $, so the algorithm won't terminate.
\end{enumerate}

Finally,  we consider about the total number of recursions. Clearly, if the algorithm recurses through the second case, then $|\Omega'| \leq \frac{15}{16} | \Omega|$. If the algorithm recurses through the first case, then  by Corollary \ref{cor:main:4}, it must be the case that
$$\theta_1 \geq 16 \theta_{\frac{7}{8} m}$$

Which implies that for all $i$ with $\theta_i \geq \frac{\theta_1}{2} \geq 8 \theta_{\frac{7}{8} m}$, $i \in  \Omega_{4, \frac{1}{16}, \frac{7}{8}}$.

Therefore, the total number of recursions of the algorithm is bounded by $O(\log m)$. So the total number of queries of the algorithm is:
$$O \left(\alpha  + \frac{2 \alpha l \left( k \theta_k + \sum_{j \geq k} \theta_j \right)} {m\theta_k}  \right)  \times \frac{\kappa m}{l}  \times O(\log m) = \tilde{O}\left(\frac{m}{l} + k + \frac{\sum_{j \geq k} \theta_j}{\theta_k} \right)$$

\end{proof}

$\bold{Remark}$: How to obtain the value $Q$: In the proof above we assumed that we have an aprior estimation of the value of $Q$. We can replace this assumption by initially setting $Q$ to be $Q = Q_0 = 1$, and run algorithm \ref{alg:malg_4} with $Q_0$ queries and then run the algorithm with pairwise comparision. Once the later algorithm requires more than $Q_0 \times \frac{ n}{l}$ queries, then we stop it, set $Q_1 = 2 Q_0$ and repeat this procedure with $Q = Q_1$. We keep on repeating this for $Q_2 = 2 Q_1, Q_3 = 2 Q_2, \cdots$ until the later algorithm requires less than $Q_i \times \frac{ n}{l}$ queries.

By the Lemma we just proved, the output of the algorithm is correct for every $Q$. Moreover, if $$Q \times \frac{n}{l} =  \tilde{\Omega}\left(\frac{n}{l}  + k + \frac{\sum_{i \geq k  + 1} \theta_{i}}{\theta_k} +  \sum_{i \geq k + 1, \theta_i \geq \frac{\theta_k}{2}} \frac{\theta_{k}^2}{(\theta_k - \theta_i)^2}   +\sum_{i = 1}^k \frac{\theta_{k + 1 }^2}{(\theta_{ k + 1} - \theta_i)^2}  \right) $$

Then this process will terminate, and the total query complexity is then bounded by  $\tilde{O}\left( Q \times \frac{n}{l} \right)$.

\section{Proofs of Section \ref{sec:lb}}
\label{app:lb}
\subsection{Lower bounds for close weights}


\begin{theorem}[Restatement of Theorem \ref{thm:lb2ctk}]
\label{thm:lb2ctkapp}
Assume $\theta_k > \theta_{k+1}$ and $c < 10^{-4}$. For any algorithm $A$ (can be adaptive), if $A$ uses $c \sum_{ i:k+1\leq i, \theta_i \geq \theta_k/2} \frac{\theta_k^2}{(\theta_k- \theta_i)^2}$ comparisons of any size (can be $l$-wise comparison for $2 \leq l \leq n$), then $A$ will identify the top-$k$ items with probability at most $7/8$.
\end{theorem}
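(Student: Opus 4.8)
The plan is to exhibit a hard distribution over instances that all share the given score multiset, and show that any algorithm making too few comparisons must err on it. The difficulty can be localized on $|T_0|+1$ ``candidate'' items, where $T_0:=\{i : k+1\le i,\ \theta_i\ge\theta_k/2\}$: fix the identities of items $1,\dots,k-1$ (the heaviest) and of all items with scores below $\theta_k/2$ (handing these to the algorithm only strengthens the bound), and let the multiset $\{\theta_k\}\cup\{\theta_i : i\in T_0\}$ be assigned to $|T_0|+1$ designated labels by a uniformly random bijection $\rho$. Since each $i\in T_0$ has $\theta_k/2\le\theta_i<\theta_k$, all candidate scores lie within a factor $2$ of each other, and the true top-$k$ set is exactly $\{1,\dots,k-1\}$ together with the single candidate carrying score $\theta_k$. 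So the task reduces to identifying, among the candidates, the one of largest score.

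For the lower bound I would run a change-of-measure argument. For each $i\in T_0$, consider the ``swap'' that exchanges the scores of the current top candidate $\ell^*:=\rho^{-1}(\theta_k)$ and the candidate currently holding $\theta_i$; this produces a sibling instance with a different correct top-$k$ set. The first key estimate is that a single comparison $S$ contributes nonzero KL between an instance and its $i$-swap only when $S$ touches $\ell^*$ or the $\theta_i$-candidate, and in that case the outcome-distribution KL is $O\bigl((\theta_k-\theta_i)^2/\theta_k^2\bigr)$; moreover, because the normalizer $\sum_{j\in S}\theta_{\rho(j)}$ grows when $S$ contains several heavy candidates, the weighted quantity $\sum_{i\in T_0}\frac{\theta_k^2}{(\theta_k-\theta_i)^2}\cdot(\text{per-comparison KL for the }i\text{-swap})$ is $O(1)$ for \emph{every} $S$. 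Both facts come from the elementary bound $D(\mathrm{Ber}(p)\,\|\,\mathrm{Ber}(q))\le (p-q)^2/(q(1-q))$ (or a $\chi^2$ bound for larger sets), using $\theta_i\ge\theta_k/2$ to keep the relevant probabilities bounded away from $0,1$ and all normalizers at least $\theta_k$.

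To aggregate, I would put a prior on the sibling instances weighting the $i$-swap proportionally to $\theta_k^2/(\theta_k-\theta_i)^2$, use the chain rule for the KL of the (adaptive) transcript to write the transcript-KL of the $i$-swap as the expectation, under the true instance, of the sum of its per-comparison KLs, and combine this with the $O(1)$ per-comparison bound above via Cauchy-Schwarz. This yields: if the algorithm succeeds with probability $\ge 7/8$ under the prior then $T\cdot O(1)\ge\Omega\bigl(\sum_{i\in T_0}\theta_k^2/(\theta_k-\theta_i)^2\bigr)$. The passage from ``small transcript-KL'' to ``fails with constant probability'' is the standard two-point argument: the output is a function of the transcript and must, for correctness, differ between an instance and its $i$-swap, so a transcript-KL below an absolute constant forces (via Pinsker) error probability $>1/8$ on one of the two; picking $c<10^{-4}$ small makes all these inequalities compatible.

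The step I expect to be the main obstacle is exactly this aggregation — controlling the $\sum_{i\in T_0}$ without losing a factor of $|T_0|$ or of the set size $l$. A comparison pitting the top candidate against a medium-weight reference item is informative about every sibling simultaneously, so one must exploit that the prior hides which candidate is the top one — so that such a comparison is ``about $\ell^*$'' only a $1/(|T_0|+1)$ fraction of the time in expectation — in tandem with the dilution effect above, in order to keep the per-comparison weighted information at $O(1)$. The individual KL estimates are routine; getting this bookkeeping right is the technical heart of the proof.
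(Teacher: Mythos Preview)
Your high-level plan (build a family of sibling instances indexed by $i\in T_0$, bound the per-comparison information the transcript carries about which sibling it faces, and turn that into a success-probability bound) is exactly the paper's plan. But your particular choice of sibling --- a \emph{swap} of the scores of $\ell^*$ and the $\theta_i$-candidate --- is what creates the obstacle you correctly flag, and the paper avoids it by a different and simpler choice.

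The paper's $i$-th sibling is obtained not by swapping but by \emph{setting} $\theta_i$ equal to $\theta_k$ (so items $i$ and $k$ become indistinguishable, forcing success probability $\le 1/2$). The crucial consequence is that the likelihood ratio $p_i/p$ changes only on comparisons that contain item $i$; there is no ``$\ell^*$-term''. Then, writing $Z_t=\sum_i \frac{w_i}{W}\ln\frac{p_i(S_1\cdots S_t,\pi)}{p(S_1\cdots S_t,\pi)}$, the per-step drift satisfies
\[
\E[Z_t-Z_{t-1}\mid S_1,\dots,S_{t-1}]\ \ge\ -\frac{1}{W}\sum_{i\in U_t}\frac{4\theta_i}{\theta_i+\theta_{-i}}\ \ge\ -\frac{4}{W},
\]
precisely because the sum over $i\in U_t$ of the selection probabilities is $1$. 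With your swap siblings this step breaks: if $U_t\ni\ell^*$, every $i\in T_0$ contributes $\Theta\bigl((\theta_k-\theta_i)^2/\theta_k^2\bigr)$, and the $w_i$-weighted sum is $\Theta(|T_0|)$, not $O(1)$. Your proposed fix (``the prior hides which candidate is $\ell^*$'') does not mesh cleanly with the transcript-KL chain rule you invoke, which compares two \emph{fixed} distributions; an adaptive algorithm can, after a few rounds, concentrate its queries on the empirically heaviest candidate, and then most comparisons \emph{do} contain $\ell^*$.

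Two further differences worth noting. First, the paper does not use KL plus Pinsker; it works directly with the weighted log-likelihood ratio $Z_T$, shows $Z_t+\frac{4t}{W}$ is a supermartingale with bounded increments, applies Azuma to get $\Pr[Z_T\le -c_1]$ small, and then --- on the event $Z_T\ge -c_1$ --- uses weighted AM--GM $\prod_i p_i^{w_i/W}\le\sum_i\frac{w_i}{W}p_i$ to bound the success mass by $e^{c_1}\cdot\frac12$. Second, the paper does not restrict to candidates or hand any information to the algorithm; the permutation $\pi$ is already part of the probability space, and the argument runs unconditionally. Switching your swap to ``set $\theta_i:=\theta_k$'' removes your obstacle entirely and makes the aggregation automatic.
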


\begin{proof}
For notation convenience, we set $w_i = \frac{\theta_k^2}{(\theta_k- \theta_i)^2}$ for $i$ such that $i \geq k+1$ and $\theta_i \geq \theta_k / 2$. For other $i$, we set $w_i = 0$. We also set $W = \sum_{i=1}^n w_i$. Then we have $T = c W$.

First of all, we can assume $A$ is deterministic. This is because if $A$ is randomized, we can fix the randomness string which makes $A$ achieves the highest successful probability.

Let $S = (S_1,...,S_T)$ be the history of algorithm. Each $S_t$ is the comparison result of round $t$. Notice that since $A$ is deterministic, with $S_1,...,S_t$, we can determine the labels of items $A$ want to compare in round $t+1$ even when $A$ is adaptive. So there is no point to put the labels of compared items in the history. So we only put the comparison result in the history, i.e $S_t$ is a number in $[n]$ and $S$ is a length-$T$ string of numbers in $[n]$.

Again since $A$ is deterministic, the label $A$ outputs is just a deterministic function of $S$, we use $A(S)$ to denote it. $A$ outputs correctly if $A$ outputs the label of the top-$k$ items, i.e. $A(S) = \{ \pi_1,...,\pi_k\}$.

We use $p(S, \pi)$ to denote the probability that the items are labeled as $\pi$ and $A$ has history $S$. Now consider the case when we set $\theta_i$ equals to $\theta_k$ for $i \geq k+1$. In this case the probability of $A(S) =\{ \pi_1,...,\pi_k\}$ should be at most $1/2$ as item $k$ and item $i$ have the same weight. We use $p_i(S, \pi)$ to denote the probability that the items are labeled as $\pi$ and $A$ has history $S$ when $\theta_i$ is changed to $\theta_k$.

Now we prove the following lemma that gives the connection between $p(S,\pi)$ and $p_i(S,\pi)$.

\begin{lemma}
\label{lem:bounddiv}
Consider $p$ as a distribution over $(\pi,S)$. For all $c_1 > 0$, we have
\[
\Pr_{(\pi,S) \sim p} \left[\left( \sum_{i=1}^n \frac{w_i}{W} \ln \frac{p_i(S,\pi)}{p(S,\pi)} \right) \leq -c_1 \right] \leq \exp\left( -\frac{(c_1-4c)^2}{72c}\right).
\]

\end{lemma}
\begin{proof}
Define random variable $Z_t$ to be the following for $t = 1,...,T$ when $(\pi,S)$ is sampled from distribution $p$:
\[
Z_t  = \sum_{i=1}^n \frac{w_i}{W} \ln \frac{p_i(S_1...S_t,\pi)}{p(S_1...S_t,\pi)}.
\]
We have
\[
Z_T = \sum_{i=1}^n \frac{w_i}{W} \ln \frac{p_i(S,\pi)}{p(S,\pi)}.
\]
Now we want to show that sequence $0, Z_1+\frac{4}{W}, ..., Z_t +\frac{4t}{W}, ...,Z_T +  \frac{4T}{W}$ forms a supermartingale.

Suppose in round $t$, given $S_1,...,S_{t-1}$ and $\pi$, Algorithm $A$ compares items in set $U_t$. Let $\theta_{-i} = \sum_{j \in U_t, j\neq i} \theta_i$. Then we have, with probability $\theta_i / (\theta_i +\theta_{-i})$,
\[
Z_t - Z_{t-1} =  \frac{w_i}{W} \ln\left(  1+ \frac{(\theta_k- \theta_i)\theta_{-i}}{(\theta_k+\theta_{-i})\theta_i}\right) +  \sum_{j \in U_t, j \neq i} \frac{w_j}{W} \ln \left( 1- \frac{\theta_k - \theta_j}{\theta_k + \theta_{-j}} \right)
\]
Here are two simple facts about $\ln$. For $0 \leq x \leq 1$, $\ln(1+x) \geq x - x^2$. For $0 \leq x \leq 1/2$, $ \ln(1-x) \geq -x - x^2$. It's easy to check that for $i$ such that $w_i > 0$, we have $\frac{(\theta_k- \theta_i)\theta_{-i}}{(\theta_k+\theta_{-i})\theta_i} \leq  1$ and $\frac{\theta_k - \theta_i}{\theta_k + \theta_{-i}} \leq 1/2$ . Therefore, by these two facts, for $i$ such that $w_i  >0$, we have
\begin{eqnarray*}
&& \frac{\theta_i}{\theta_i + \theta_{-i}} w_i \ln\left(  1+ \frac{(\theta_k- \theta_i)\theta_{-i}}{(\theta_k+\theta_{-i})\theta_i}\right)  +\frac{\theta_{-i}}{\theta_i + \theta_{-i}} w_i \ln\left( 1- \frac{\theta_k - \theta_i}{\theta_k + \theta_{-i}} \right) \\
&\geq& \frac{ w_i}{\theta_i + \theta_{-i}}\left(  \frac{(\theta_k- \theta_i)\theta_{-i}}{\theta_k+\theta_{-i}} - \frac{(\theta_k- \theta_i)^2\theta_{-i}^2}{(\theta_k+\theta_{-i})^2\theta_i}-\frac{(\theta_k- \theta_i)\theta_{-i}}{\theta_k+\theta_{-i}} -\frac{(\theta_k - \theta_i)^2\theta_{-i}}{(\theta_k + \theta_{-i})^2}\right) \\
&=& -w_i \frac{(\theta_k - \theta_i)^2\theta_{-i}}{(\theta_k + \theta_{-i})^2 \theta_i}  = - \frac{\theta_{-i}\theta_k^2}{(\theta_k + \theta_{-i})^2 \theta_i} \\
&\geq&  - \frac{2\theta_{-i}\theta_k}{(\theta_k + \theta_{-i})^2 }\geq -\frac{4\theta_i}{\theta_k + \theta_{-i} } \geq-\frac{4\theta_i}{\theta_i + \theta_{-i} } .\\
\end{eqnarray*}

Therefore we have for all $t$ and $S_1,...,S_{t-1}$,
\[
\E[Z_t - Z_{t-1} | S_1,...,S_{t-1}] \geq  -\sum_{i \in U_t} \frac{4\theta_i}{W(\theta_i + \theta_{-i})}  \geq -\frac{4}{W}.
\]
As $Z_1,...,Z_{t-1}$ can be determined by $S_1,...,S_{t-1}$, we have for all $t$ and  $Z_1,...,Z_{t-1}$,
\[
\E\left[\left(Z_t  +\frac{4t}{W} \right)- \left(Z_{t-1}+\frac{4(t-1)}{W}\right) | Z_1 -\frac{4}{W},...,Z_{t-1} -\frac{4(t-1)}{W}\right] \geq  0.
\]
Therefore sequence $0, Z_1  +\frac{4}{W}, ..., Z_t +\frac{4t}{W}, ...,Z_T + \frac{4T}{W}$ forms a supermartingale.

Now we want to bound $|Z_t - Z_{t-1}|$. We know that for $0 \leq x \leq 1$, $|\ln (1+x)| \leq x$ and for $0 \leq x \leq 1/2$, $|\ln(1-x)| \leq 2|x|$. Therefore for $i$ such that $w_i > 0$,
\[
|\frac{w_i}{W} \ln\left(  1+ \frac{(\theta_k- \theta_i)\theta_{-i}}{(\theta_k+\theta_{-i})\theta_i}\right)| \leq \frac{w_i}{W} \cdot  \frac{(\theta_k- \theta_i)\theta_{-i}}{(\theta_k+\theta_{-i})\theta_i} \leq \frac{\theta_k^2\theta_{-i}}{W\theta_i(\theta_k -\theta_i)(\theta_k+\theta_{-i})}\leq \frac{2\theta_k}{W(\theta_k -\theta_{k+1})}
\]
and
\[
|\frac{w_i}{W} \ln \left( 1- \frac{\theta_k- \theta_i}{\theta_k+\theta_{-i}}\right)| \leq\frac{w_i}{W} \cdot \frac{2(\theta_k- \theta_i)}{\theta_k+\theta_{-i}} = \frac{2\theta_k^2}{W(\theta_k -\theta_i)(\theta_k+\theta_{-i})} \leq \frac{4\theta_k}{W(\theta_k -\theta_{k+1})} \cdot \frac{\theta_i}{\theta_i + \theta_{-i}}.
\]
Therefore, we get
\[
|Z_t -Z_{t-1}| \leq\frac{2\theta_k}{W(\theta_k -\theta_{k+1})} + \sum_{i \in U_t}\frac{4\theta_k}{W(\theta_k -\theta_{k+1})} \cdot \frac{\theta_i}{\theta_i + \theta_{-i}} \leq \frac{6\theta_k}{W(\theta_k -\theta_{k+1})}.
\] Also notice that
\[
\left( \frac{\theta_k}{(\theta_k -\theta_{k+1})} \right)^2 \leq w_{k+1} \leq W.
\]

Now by Azuma's inequality, we have
\begin{eqnarray*}
\Pr_{(\pi,S) \sim p}\left[ Z_T \leq -c_1 \right]  &\leq& \exp\left(-\frac{(c_1 - \frac{4T}{W})^2}{2T (\frac{6\theta_k}{W(\theta_k -\theta_{k+1})})^2}\right) \\
&=& \exp\left( -\frac{(c_1-4c)^2(\theta_k-\theta_{k+1})^2W}{72\cdot c \cdot \theta_k^2}\right) \\
&\leq & \exp\left( -\frac{(c_1-4c)^2}{72c}\right).
\end{eqnarray*}

\end{proof}
Finally we are going to use Lemma \ref{lem:bounddiv} with $c_1 = 1/3$. We define $V$ as indicator function of the event $\sum_{i=1}^n \frac{w_i}{W} \ln \frac{p_i(S,\pi)}{p(S,\pi)} \geq -c_1$, i.e.
\begin{enumerate}
\item $V=1$ if $\sum_{i=1}^n \frac{w_i}{W} \ln \frac{p_i(S,\pi)}{p(S,\pi)} \geq -c_1$.
\item $V = 0$, otherwise.
\end{enumerate}

The probability that $A$ identify the top item can be written as
\begin{eqnarray*}
 &&\Pr_{(\pi,S) \sim p} [A(S) = \{ \pi_1,...,\pi_k\}] \\
 &=&  \Pr_{(\pi,S) \sim p} [(A(S) = \{ \pi_1,...,\pi_k\}) \wedge (V=0)]  + \Pr_{(\pi,S) \sim p} [(A(S) = \pi_1) \wedge (V=1)]  \\
 &\leq& \Pr_{(\pi,S) \sim p} [V=0] + \sum_{(\pi,S):A(S)=\{ \pi_1,...,\pi_k\}, V=1} p(S,\pi) \\
 &\leq& \exp\left( -\frac{(c_1-4c)^2}{72c}\right) +  \sum_{(\pi,S):A(S)=\{ \pi_1,...,\pi_k\}, V=1} \left( e^{c_1}  \prod_{i=1}^n p_i(S,\pi)^{\frac{w_i}{W}}\right) \\
 &\leq& \exp\left( -\frac{(c_1-4c)^2}{72c}\right) +  \sum_{(\pi,S):A(S)=\{ \pi_1,...,\pi_k\}, V=1} \left( e^{c_1}  \sum_{i=1}^n \frac{w_i}{W} \cdot p_i(S,\pi) \right)\\
  &\leq& \exp\left( -\frac{(c_1-4c)^2}{72c}\right) +   e^{c_1}  \sum_{i=1}^n \frac{w_i}{W}  \Pr_{(\pi,S) \sim p_i} [(A(S) = \{ \pi_1,...,\pi_k\}) \wedge (V=1)] \\
  &\leq& \exp\left( -\frac{(c_1-4c)^2}{72c}\right) +   e^{c_1}  \sum_{i=1}^n \frac{w_i}{W}  \Pr_{(\pi,S) \sim p_i} [A(S) = \{ \pi_1,...,\pi_k\}] \\
  &\leq& \exp\left( -\frac{(c_1-4c)^2}{72c}\right) +   e^{c_1}  \sum_{i=1}^n \frac{w_i}{W}  \cdot \frac{1}{2} \\
    &\leq& \exp\left( -\frac{(c_1-4c)^2}{72c}\right) + \frac{e^{c_1}}{2} \\
    &\leq& \exp\left( -\frac{(1/3-4c)^2}{72c}\right) +\frac{3}{4}\\
    &\leq&  \frac{1}{8} + \frac{3}{4} = \frac{7}{8}.
\end{eqnarray*}
The last step comes from the fact that $c < 10^{-4}$.
\end{proof}


The following theorem is very similar to Theorem \ref{thm:lb2ctk}. For some technical reason, it's not very easy to merge the two proofs. But many parts of proofs of these two theorems are very similar.
\begin{theorem}[Restatement of Theorem \ref{thm:lb2}]
\label{thm:lb2app}
Assume $\theta_k > \theta_{k+1}$ and $c < 4\cdot 10^{-4}$. For any algorithm $A$ (can be adaptive), if $A$ uses $c \sum_{ i: i\leq k, \theta_i \leq 2\theta_{k+1}} \frac{\theta_{k+1}^2}{(\theta_{k+1}- \theta_i)^2}$ comparisons of any size (can be $l$-wise comparison for $2 \leq l \leq n$), then $A$ will identify the top-$k$ items with probability at most $7/8$.
\end{theorem}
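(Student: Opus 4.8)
The plan is to mirror the proof of Theorem~\ref{thm:lb2ctkapp} essentially verbatim, interchanging the roles of $\theta_k$ and $\theta_{k+1}$ and reversing the direction of the perturbation (now we push an in-list weight \emph{down} rather than an out-of-list weight \emph{up}). Set $w_i=\frac{\theta_{k+1}^2}{(\theta_{k+1}-\theta_i)^2}$ for those $i\le k$ with $\theta_i\le 2\theta_{k+1}$ and $w_i=0$ otherwise; let $W=\sum_i w_i$ and $T=cW$. If $W=0$ the statement is vacuous ($T=0$, and a fixed output matches the true top-$k$ with probability $\ll 7/8$); otherwise, since the $\theta$'s are nonincreasing, some $i\le k$ with $\theta_i\le 2\theta_{k+1}$ forces $\theta_k\le 2\theta_{k+1}$, so $w_k>0$ and hence $W\ge\big(\theta_{k+1}/(\theta_k-\theta_{k+1})\big)^2$ — the analogue of the bound $\big(\theta_k/(\theta_k-\theta_{k+1})\big)^2\le W$ used before. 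As in Theorem~\ref{thm:lb2ctkapp} we assume $A$ deterministic, record only the sequence of reported items $S=(S_1,\dots,S_T)$, and write $A(S)$ for the output. For each relevant $i$ let $p_i(S,\pi)$ be the joint law of $(\pi,S)$ in the instance obtained by lowering $\theta_i$ to $\theta_{k+1}$; there items $i$ and $k+1$ are statistically identical, so by the label-swap symmetry $\Pr_{p_i}[A(S)=\{\pi_1,\dots,\pi_k\}]\le\tfrac12$.

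The heart of the argument is again a martingale bound on $Z_T=\sum_i\frac{w_i}{W}\ln\frac{p_i(S,\pi)}{p(S,\pi)}$, with $Z_t$ the same quantity for length-$t$ prefixes. If round $t$ queries $U_t$ with total weight $\Theta=\sum_{j\in U_t}\theta_j$, and for $i\in U_t$ with $w_i>0$ we set $\theta_{-i}=\Theta-\theta_i$ and $\delta_i=\theta_i-\theta_{k+1}$, then the reported item $a$ changes $Z$ by $\sum_i\frac{w_i}{W}\ln\frac{\Pr_i[a\mid U_t]}{\Pr[a\mid U_t]}$, where the ratio equals $1-\frac{\delta_i\theta_{-i}}{\theta_i(\theta_{-i}+\theta_{k+1})}$ when $a=i$, equals $1+\frac{\delta_i}{\theta_{-i}+\theta_{k+1}}$ when $a=j\in U_t\setminus\{i\}$, and equals $1$ when $i\notin U_t$. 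Taking the conditional expectation over $a$, the $i$-th contribution is $-\frac{w_i}{W}$ times the KL divergence of two distributions differing only by replacing $\theta_i$ with $\theta_{k+1}$; applying $\ln(1+x)\ge x-x^2$ and $\ln(1-x)\ge -x-x^2$ (valid because $\theta_i\le 2\theta_{k+1}$ forces $\delta_i\le\theta_i/2$, which makes the relevant $x$'s at most $1/2$), the first-order terms cancel and one is left with a bound of the form $-\frac{w_i}{W}\big(\tfrac{\theta_i}{\Theta}x_i^2+\tfrac{\theta_{-i}}{\Theta}y_i^2\big)\ge -\frac{2}{W}\cdot\frac{\theta_i}{\Theta}$; summing over $i\in U_t$ gives conditional drift at least $-\tfrac{2}{W}$, so $0,Z_1+\tfrac{2}{W},\dots,Z_T+\tfrac{2T}{W}$ satisfies the hypotheses of Azuma's inequality as in Lemma~\ref{lem:bounddiv}. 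For the bounded-difference estimate one bounds $\frac{w_i}{W}\lvert\ln(\text{ratio})\rvert$ by $\frac{2\theta_{k+1}}{W(\theta_k-\theta_{k+1})}$ in the $a=i$ case and by $\frac{2\theta_{k+1}}{W(\theta_k-\theta_{k+1})}\cdot\frac{\theta_i}{\Theta}$ in the $a=j$ case — here $\theta_i\le 2\theta_{k+1}$ is used again (it gives $\frac{\theta_{-i}+\theta_i}{\theta_{-i}+\theta_{k+1}}\le 2$) and $\delta_i\ge\theta_k-\theta_{k+1}$ controls the remaining factors — so that $\lvert Z_t-Z_{t-1}\rvert\le\frac{4\theta_{k+1}}{W(\theta_k-\theta_{k+1})}$. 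Azuma, together with $T=cW$ and $W\ge(\theta_{k+1}/(\theta_k-\theta_{k+1}))^2$, yields $\Pr_p[Z_T\le -c_1]\le\exp\!\big(-\tfrac{(c_1-2c)^2}{32c}\big)$.

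Finally, exactly as in the last display of Theorem~\ref{thm:lb2ctkapp}, take $c_1=1/3$, let $V$ be the indicator of $\sum_i\frac{w_i}{W}\ln\frac{p_i(S,\pi)}{p(S,\pi)}\ge -c_1$, split $\Pr_p[A(S)=\{\pi_1,\dots,\pi_k\}]$ according to $V$, bound the $V=1$ part using $p(S,\pi)\le e^{c_1}\prod_i p_i(S,\pi)^{w_i/W}\le e^{c_1}\sum_i\frac{w_i}{W}p_i(S,\pi)$, and plug in $\Pr_{p_i}[A(S)=\{\pi_1,\dots,\pi_k\}]\le\tfrac12$, to obtain $\Pr_p[A(S)=\{\pi_1,\dots,\pi_k\}]\le\exp\!\big(-\tfrac{(1/3-2c)^2}{32c}\big)+\tfrac{e^{1/3}}{2}\le\tfrac18+\tfrac34=\tfrac78$ whenever $c<4\cdot10^{-4}$.

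I expect the only real obstacle to be the bookkeeping in the bounded-difference estimate: the naive bound on $\lvert Z_t-Z_{t-1}\rvert$ has one term per element of $U_t$, which is far too large, so one must peel a factor $\theta_i/\Theta$ off of each term (so the sum telescopes to $1$), and doing so requires using $\theta_i\le 2\theta_{k+1}$ and $\theta_i-\theta_{k+1}\ge\theta_k-\theta_{k+1}$ exactly in the right places; getting the constants in the two one-sided logarithm inequalities so that the first-order drift terms cancel is the other spot needing care. Everything else is a transcription of the argument for Theorem~\ref{thm:lb2ctkapp} under the dictionary $\theta_k\leftrightarrow\theta_{k+1}$, ``raise to $\theta_k$'' $\leftrightarrow$ ``lower to $\theta_{k+1}$'', which is presumably why the authors remark that the two proofs are not cleanly mergeable.
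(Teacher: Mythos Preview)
Your proposal is correct and follows essentially the same approach as the paper's own proof: the paper likewise sets $w_i=\theta_{k+1}^2/(\theta_{k+1}-\theta_i)^2$, changes $\theta_i$ down to $\theta_{k+1}$ to define $p_i$, runs the same supermartingale/Azuma argument on $Z_t$, and finishes with the identical $V$-splitting display. The only differences are in constants — the paper obtains the slightly sharper drift bound $-1/W$ (yours is $-2/W$) and bounded-difference $\tfrac{3\theta_{k+1}}{W(\theta_k-\theta_{k+1})}$ (yours is $\tfrac{4\theta_{k+1}}{W(\theta_k-\theta_{k+1})}$), giving $\exp\!\big(-(c_1-c)^2/(18c)\big)$ in Lemma~\ref{lem:bounddiv2} — but your looser constants still close the argument comfortably for $c<4\cdot10^{-4}$.
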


\begin{proof}
For notation convenience, we set $w_i = \frac{\theta_{k+1}^2}{(\theta_{k+1}- \theta_i)^2}$ for $i$ such that $i \leq k$ and $\theta_i \leq 2\theta_{k+1}$. For other $i$, we set $w_i = 0$. We also set $W = \sum_{i=1}^n w_i$. Then we have $T = c W$.

First of all, we can assume $A$ is deterministic. This is because if $A$ is randomized, we can fix the randomness string which makes $A$ achieves the highest successful probability.

Let $S = (S_1,...,S_T)$ be the history of algorithm. Each $S_t$ is the comparison result of round $t$. Notice that since $A$ is deterministic, with $S_1,...,S_t$, we can determine the labels of items $A$ want to compare in round $t+1$ even when $A$ is adaptive. So there is no point to put the labels of compared items in the history. So we only put the comparison result in the history, i.e $S_t$ is a number in $[n]$ and $S$ is a length-$T$ string of numbers in $[n]$.

Again since $A$ is deterministic, the label $A$ outputs is just a deterministic function of $S$, we use $A(S)$ to denote it. $A$ outputs correctly if $A$ outputs the label of the top-$k$ items, i.e. $A(S) = \{ \pi_1,...,\pi_k\}$.

We use $p(S, \pi)$ to denote the probability that the items are labeled as $\pi$ and $A$ has history $S$. Now consider the case when we set $\theta_i$ equals to $\theta_{k+1}$ for $i \leq k$. In this case the probability of $A(S) =\{ \pi_1,...,\pi_k\}$ should be at most $1/2$ as item $k+1$ and item $i$ have the same weight. We use $p_i(S, \pi)$ to denote the probability that the items are labeled as $\pi$ and $A$ has history $S$ when $\theta_i$ is changed to $\theta_{k+1}$.

Now we prove the following lemma that gives the connection between $p(S,\pi)$ and $p_i(S,\pi)$.

\begin{lemma}
\label{lem:bounddiv2}
Consider $p$ as a distribution over $(\pi,S)$. For all $c_1 > 0$, we have
\[
\Pr_{(\pi,S) \sim p} \left[\left( \sum_{i=1}^n \frac{w_i}{W} \ln \frac{p_i(S,\pi)}{p(S,\pi)} \right) \leq -c_1 \right] \leq \exp\left( -\frac{(c_1-c)^2}{18c}\right).
\]

\end{lemma}
\begin{proof}
Define random variable $Z_t$ to be the following for $t = 1,...,T$ when $(\pi,S)$ is sampled from distribution $p$:
\[
Z_t  = \sum_{i=1}^n \frac{w_i}{W} \ln \frac{p_i(S_1...S_t,\pi)}{p(S_1...S_t,\pi)}.
\]
We have
\[
Z_T = \sum_{i=1}^n \frac{w_i}{W} \ln \frac{p_i(S,\pi)}{p(S,\pi)}.
\]
Now we want to show that sequence $0, Z_1+\frac{1}{W}, ..., Z_t +\frac{t}{W}, ...,Z_T +  \frac{T}{W}$ forms a supermartigale.

Suppose in round $t$, given $S_1,...,S_{t-1}$ and $\pi$, Algorithm $A$ compares items in set $U_t$. Let $\theta_{-i} = \sum_{j \in U_t, j\neq i} \theta_i$. Then we have, with probability $\theta_i / (\theta_i +\theta_{-i})$,
\[
Z_t - Z_{t-1} =  \frac{w_i}{W} \ln\left(  1- \frac{(\theta_i- \theta_{k+1})\theta_{-i}}{(\theta_{k+1}+\theta_{-i})\theta_i}\right) +  \sum_{j \in U_t, j \neq i} \frac{w_j}{W} \ln \left( 1+ \frac{\theta_j - \theta_{k+1}}{\theta_{k+1} + \theta_{-j}} \right)
\]
Here are two simple facts about $\ln$. For $0 \leq x \leq 1$, $\ln(1+x) \geq x - x^2$. For $0 \leq x \leq 1/2$, $ \ln(1-x) \geq -x - x^2$. It's easy to check that for $i$ such that $w_i > 0$, we have $\frac{(\theta_i- \theta_{k+1})\theta_{-i}}{(\theta_{k+1}+\theta_{-i})\theta_i} \leq  1/2$ and $\frac{\theta_i - \theta_{k+1}}{\theta_{k+1} + \theta_{-i}} \leq 1$ . Therefore, by these two facts, for $i$ such that $w_i  >0$, we have
\begin{eqnarray*}
&& \frac{\theta_i}{\theta_i + \theta_{-i}} w_i \ln\left(  1-\frac{(\theta_i- \theta_{k+1})\theta_{-i}}{(\theta_{k+1}+\theta_{-i})\theta_i} \right)  +\frac{\theta_{-i}}{\theta_i + \theta_{-i}} w_i \ln\left( 1+ \frac{\theta_i - \theta_{k+1}}{\theta_{k+1} + \theta_{-i}} \right) \\
&\geq& \frac{ w_i}{\theta_i + \theta_{-i}}\left(  -\frac{(\theta_i- \theta_{k+1})\theta_{-i}}{\theta_{k+1}+\theta_{-i}} - \frac{(\theta_i- \theta_{k+1})^2\theta_{-i}^2}{(\theta_{k+1}+\theta_{-i})^2\theta_i} + \frac{(\theta_i - \theta_{k+1})\theta_{-i}}{\theta_{k+1} + \theta_{-i}} - \frac{(\theta_i - \theta_{k+1})^2\theta_{-i}}{(\theta_{k+1} + \theta_{-i})^2}\right) \\
&=& -w_i \frac{(\theta_{k+1} - \theta_i)^2\theta_{-i}}{(\theta_{k+1} + \theta_{-i})^2 \theta_i}  = - \frac{\theta_{-i}\theta_{k+1}^2}{(\theta_{k+1} + \theta_{-i})^2 \theta_i} \\
&\geq&  -\frac{\theta_{k+1}^2}{(\theta_{k+1} + \theta_{-i}) \theta_i}  \geq-\frac{\theta_i}{\theta_i + \theta_{-i} } .\\
\end{eqnarray*}
The last step comes from the fact that
\[
\theta_{k+1}^2(\theta_i + \theta_{-i}) \leq \theta_i^2 (\theta_{k+1} + \theta_{-i}).
\]
Therefore we have for all $t$ and $S_1,...,S_{t-1}$,
\[
\E[Z_t - Z_{t-1} | S_1,...,S_{t-1}] \geq  -\sum_{i \in U_t} \frac{\theta_i}{W(\theta_i + \theta_{-i})}  \geq -\frac{1}{W}.
\]
As $Z_1,...,Z_{t-1}$ can be determined by $S_1,...,S_{t-1}$, we have for all $t$ and  $Z_1,...,Z_{t-1}$,
\[
\E\left[\left(Z_t  +\frac{t}{W} \right)- \left(Z_{t-1}+\frac{t-1}{W}\right) | Z_1 -\frac{1}{W},...,Z_{t-1} -\frac{t-1}{W}\right] \geq  0.
\]
Therefore sequence $0, Z_1  +\frac{1}{W}, ..., Z_t +\frac{t}{W}, ...,Z_T + \frac{T}{W}$ forms a supermartingale.

Now we want to bound $|Z_t - Z_{t-1}|$. We know that for $0 \leq x \leq 1$, $|\ln (1+x)| \leq x$ and for $0 \leq x \leq 1/2$, $|\ln(1-x)| \leq 2|x|$. Therefore for $i$ such that $w_i > 0$,
\[
|\frac{w_i}{W}\ln\left(  1-\frac{(\theta_i- \theta_{k+1})\theta_{-i}}{(\theta_{k+1}+\theta_{-i})\theta_i} \right)| \leq \frac{w_i}{W} \cdot  \frac{2(\theta_i- \theta_{k+1})\theta_{-i}}{(\theta_{k+1}+\theta_{-i})\theta_i}
\leq
\frac{2\theta_{k+1}^2\theta_{-i}}{W\theta_i(\theta_i -\theta_{k+1})(\theta_{k+1}+\theta_{-i})}
\leq
\frac{2\theta_{k+1}}{W(\theta_k -\theta_{k+1})}
\]
and
\[
|\frac{w_i}{W} \ln\left( 1+ \frac{\theta_i - \theta_{k+1}}{\theta_{k+1} + \theta_{-i}} \right) | \leq\frac{w_i}{W} \cdot \frac{\theta_i - \theta_{k+1}}{\theta_{k+1} + \theta_{-i}} = \frac{\theta_{k+1}^2}{W(\theta_i -\theta_{k+1})(\theta_{k+1}+\theta_{-i})} \leq \frac{\theta_{k+1}}{W(\theta_k -\theta_{k+1})} \cdot \frac{\theta_i}{\theta_i + \theta_{-i}}.
\]
Therefore, we get
\[
|Z_t -Z_{t-1}| \leq\frac{2\theta_{k+1}}{W(\theta_k -\theta_{k+1})} + \sum_{i \in U_t}\frac{\theta_{k+1}}{W(\theta_k -\theta_{k+1})} \cdot \frac{\theta_i}{\theta_i + \theta_{-i}} \leq \frac{3\theta_{k+1}}{W(\theta_k -\theta_{k+1})}.
\] Also notice that
\[
\left( \frac{\theta_{k+1}}{(\theta_k -\theta_{k+1})} \right)^2 \leq w_k \leq W.
\]

Now by Azuma's inequality, we have
\begin{eqnarray*}
\Pr_{(\pi,S) \sim p}\left[ Z_T \leq -c_1 \right]  &\leq& \exp\left(-\frac{(c_1 - \frac{T}{W})^2}{2T (\frac{3\theta_{k+1}}{W(\theta_k -\theta_{k+1})})^2}\right) \\
&=& \exp\left( -\frac{(c_1-c)^2(\theta_k-\theta_{k+1})^2W}{18\cdot c \cdot \theta_{k+1}^2}\right) \\
&\leq & \exp\left( -\frac{(c_1-c)^2}{18c}\right).
\end{eqnarray*}

\end{proof}
After we prove Lemma \ref{lem:bounddiv2}, the rest of the proof is very similar to Theorem \ref{thm:lb2ctk}. We omit the argument.
\end{proof}

\subsection{Lower bounds for arbitrary weights}


Again, the following theorem is very similar to Theorem \ref{thm:lb2ctk}.
\begin{theorem}[Restatement of Theorem \ref{thm:lb3}]
\label{thm:lb3app}
Assume $c < 1/18$. For any algorithm $A$ (can be adaptive), if $A$ uses $c \sum_{ i: i> k} \frac{\theta_i}{\theta_k}$ comparisons of any size (can be $l$-wise comparison for $2 \leq l \leq n$), then $A$ will identify the top-$k$ items with probability at most $7/8$.
\end{theorem}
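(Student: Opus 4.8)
The plan is to follow the template of the proofs of Theorems~\ref{thm:lb2ctk} and~\ref{thm:lb2}, changing only the family of ``confusing'' instances and the weights attached to them. As there, we may assume $A$ is deterministic; write $W=\sum_{i>k}\theta_i/\theta_k$, set $w_i=\theta_i/\theta_k$ for $i>k$ and $w_i=0$ otherwise (so $\sum_i w_i=W$), and let $T=cW$. If $cW<1$ the hypothesis forces $A$ to query nothing, so its deterministic output coincides with $\{\pi_1,\dots,\pi_k\}$ with probability $1/\binom{n}{k}\le 7/8$; hence assume $W\ge 1/c$. For each $i>k$ let $p_i$ be the instance obtained from the original one by raising $\theta_i$ up to $\theta_k$. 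In $p_i$ the items labelled $\pi_i$ and $\pi_k$ have equal scores, so exactly as in those proofs the events $A(S)=\{\pi_1,\dots,\pi_k\}$ and $A(S)=\{\pi_1,\dots,\pi_{k-1},\pi_i\}$ are disjoint and equiprobable under $p_i$, giving $\Pr_{p_i}[A\text{ correct}]\le 1/2$.

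The core step is the analogue of Lemma~\ref{lem:bounddiv}: with $Z_t=\sum_i\frac{w_i}{W}\ln\frac{p_i(S_1\cdots S_t,\pi)}{p(S_1\cdots S_t,\pi)}$, I would show $\Pr_{(\pi,S)\sim p}[Z_T\le -c_1]\le\exp(-\Omega((c_1-c)^2W/c))$. If round $t$ queries a set $U_t$ of total weight $\Theta_t$, then conditioned on $S_1,\dots,S_{t-1}$ the mean of $Z_t-Z_{t-1}$ is $-\sum_{i\in U_t,\,i>k}\frac{w_i}{W}D_i$, where $D_i=\ln\bigl(1+\tfrac{\theta_k-\theta_i}{\Theta_t}\bigr)+\tfrac{\theta_i}{\Theta_t}\ln\tfrac{\theta_i}{\theta_k}$ is the per-round KL divergence of the outcome law under $p$ from that under $p_i$. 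Since $\theta_i<\theta_k$ the second summand is negative and $D_i\le\tfrac{\theta_k-\theta_i}{\Theta_t}\le\tfrac{\theta_k}{\Theta_t}$, so the mean increment is at least $-\tfrac{1}{W}\cdot\tfrac{\sum_{i\in U_t,\,i>k}\theta_i}{\Theta_t}\ge-\tfrac{1}{W}$ and $Z_t+t/W$ is a submartingale. For the increments, on outcome $j$ one has
\[
Z_t-Z_{t-1}=\mathbbm{1}[j>k]\,\frac{\theta_j}{W\theta_k}\ln\frac{\theta_k}{\theta_j}\;+\;\sum_{i\in U_t,\,i>k}\frac{w_i}{W}\ln\frac{\Theta_t}{\Theta_t-\theta_i+\theta_k},
\]
whose first term lies in $[0,\tfrac{1}{eW}]$ by $-x\ln x\le 1/e$ with $x=\theta_j/\theta_k$, and whose second term lies in $[-\tfrac{1}{W},0]$ by $\ln(1+x)\le x$; hence $|Z_t-Z_{t-1}|\le 1/W$. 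Azuma's inequality applied to $Z_t+t/W$ then gives the claimed bound, which under $W\ge 1/c$ becomes $\exp(-\Omega((c_1-c)^2/c^2))$.

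The endgame is identical to that of the proof of Theorem~\ref{thm:lb2ctk}. Letting $V$ be the indicator of $\{Z_T\ge -c_1\}$, on $V=1$ the weighted AM--GM inequality yields $p(S,\pi)\le e^{c_1}\prod_i p_i(S,\pi)^{w_i/W}\le e^{c_1}\sum_i\tfrac{w_i}{W}p_i(S,\pi)$, so
\[
\Pr_p[A\text{ correct}]\le\Pr_p[V=0]+e^{c_1}\sum_i\frac{w_i}{W}\Pr_{p_i}[A\text{ correct}]\le\exp\!\left(-\Omega\!\left(\tfrac{(c_1-c)^2}{c^2}\right)\right)+\frac{e^{c_1}}{2}.
\]
Choosing $c_1=1/3$ and using $c<1/18$ (so $W\ge 1/c>18$) makes the first term below $e^{-3}$ and the second below $e^{1/3}/2<0.7$, hence the sum is at most $7/8$.

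The main obstacle, relative to Theorems~\ref{thm:lb2ctk}--\ref{thm:lb2}, is that the perturbation $\theta_i\mapsto\theta_k$ is a \emph{large} multiplicative change rather than an infinitesimal one, so neither the per-round KL divergence nor the per-round log-likelihood ratio is automatically small and one cannot use the local Taylor expansions of those proofs. The resolution is that weighting by $w_i=\theta_i/\theta_k$ together with the elementary bounds $-x\ln x\le 1/e$ and $\ln(1+x)\le x$ brings both the total drift correction (which is exactly $T/W=c$) and each martingale increment down to $O(1/W)$, with no dependence on how small the scores $\theta_i$ are; after that the proof is routine. A minor additional point is checking that the constants close for every $c<1/18$ with a single choice of $c_1$, which relies on the nontriviality reduction $W\ge 1/c$.
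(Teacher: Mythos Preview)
Your proposal is correct and follows essentially the same route as the paper: the same modified instances $p_i$ (raising $\theta_i$ to $\theta_k$), the same weights $w_i=\theta_i/\theta_k$, the same submartingale $Z_t+t/W$ with $O(1/W)$ increments, Azuma, and the identical AM--GM endgame with $c_1=1/3$. Your use of $-x\ln x\le 1/e$ to control the ``winner'' contribution is a clean alternative to the paper's algebraic bound and in fact gives a slightly tighter increment estimate, but the overall argument is the same.
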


\begin{proof}
For notation convenience, we set $w_i = \frac{\theta_i}{\theta_k}$ for $i > k$. For $i \leq k$, we set $w_i = 0$. We also set $W = \sum_{i=1}^n w_i$. Then we have $T = c W$.

First of all, we can assume $A$ is deterministic. This is because if $A$ is randomized, we can fix the randomness string which makes $A$ achieves the highest successful probability.

Let $S = (S_1,...,S_T)$ be the history of algorithm. Each $S_t$ is the comparison result of round $t$. Notice that since $A$ is deterministic, with $S_1,...,S_t$, we can determine the labels of items $A$ want to compare in round $t+1$ even when $A$ is adaptive. So there is no point to put the labels of compared items in the history. So we only put the comparison result in the history, i.e $S_t$ is a number in $[n]$ and $S$ is a length-$T$ string of numbers in $[n]$.

Again since $A$ is deterministic, the label $A$ outputs is just a deterministic function of $S$, we use $A(S)$ to denote it. $A$ outputs correctly if $A$ outputs the label of the top-$k$ items, i.e. $A(S) = \{ \pi_1,...,\pi_k\}$.

We use $p(S, \pi)$ to denote the probability that the items are labeled as $\pi$ and $A$ has history $S$. Now consider the case when we set $\theta_i$ equals to $\theta_k$ for $i > k$. In this case the probability of $A(S) =\{ \pi_1,...,\pi_k\}$ should be at most $1/2$ as item $k$ and item $i$ have the same weight. We use $p_i(S, \pi)$ to denote the probability that the items are labeled as $\pi$ and $A$ has history $S$ when $\theta_i$ is changed to $\theta_k$.

Now we prove the following lemma that gives the connection between $p(S,\pi)$ and $p_i(S,\pi)$.

\begin{lemma}
\label{lem:bounddiv3}
Consider $p$ as a distribution over $(\pi,S)$. For all $c_1 > 0$, we have
\[
\Pr_{(\pi,S) \sim p} \left[\left( \sum_{i=1}^n \frac{w_i}{W} \ln \frac{p_i(S,\pi)}{p(S,\pi)} \right) \leq -c_1 \right] \leq \exp \left( -\frac{(c_1/c -1)^2T}{8}\right).
\]

\end{lemma}
\begin{proof}
Define random variable $Z_t$ to be the following for $t = 1,...,T$ when $(\pi,S)$ is sampled from distribution $p$:
\[
Z_t  = \sum_{i=1}^n \frac{w_i}{W} \ln \frac{p_i(S_1...S_t,\pi)}{p(S_1...S_t,\pi)}.
\]
We have
\[
Z_T = \sum_{i=1}^n \frac{w_i}{W} \ln \frac{p_i(S,\pi)}{p(S,\pi)}.
\]
Now we want to show that sequence $0, Z_1+\frac{1}{W}, ..., Z_t +\frac{t}{W}, ...,Z_T +  \frac{T}{W}$ forms a supermartingale.

Suppose in round $t$, given $S_1,...,S_{t-1}$ and $\pi$, Algorithm $A$ compares items in set $U_t$. Let $\theta_{-i} = \sum_{j \in U_t, j\neq i} \theta_i$. Then we have, with probability $\theta_i / (\theta_i +\theta_{-i})$,
\begin{eqnarray*}
Z_t - Z_{t-1} &=&  \frac{w_i}{W} \ln\left(  1+ \frac{(\theta_k- \theta_i)\theta_{-i}}{(\theta_k+\theta_{-i})\theta_i}\right) +  \sum_{j \in U_t, j \neq i} \frac{w_j}{W} \ln \left( 1- \frac{\theta_k - \theta_j}{\theta_k + \theta_{-j}} \right) \\
&=& -\frac{w_i}{W} \ln\left(  1 -  \frac{(\theta_k- \theta_i)\theta_{-i}}{(\theta_i+\theta_{-i})\theta_k}\right) -  \sum_{j \in U_t, j \neq i} \frac{w_j}{W} \ln \left( 1+ \frac{\theta_k - \theta_j}{\theta_j + \theta_{-j}} \right) \\
\end{eqnarray*}
We are going to use a simple fact about $\ln$: for all $x > -1$, $\ln(1+x) \leq x$.
\begin{eqnarray*}
&&- \frac{\theta_i}{\theta_i + \theta_{-i}} w_i \ln\left(  1-\frac{(\theta_k- \theta_i)\theta_{-i}}{(\theta_i+\theta_{-i})\theta_k} \right)  -\frac{\theta_{-i}}{\theta_i + \theta_{-i}} w_i \ln\left( 1+ \frac{\theta_k - \theta_i}{\theta_i + \theta_{-i}} \right) \\
&\geq& -\frac{ w_i}{\theta_i + \theta_{-i}}\left(-\frac{(\theta_k- \theta_i)\theta_{-i}\theta_i}{(\theta_i+\theta_{-i})\theta_k} +\frac{(\theta_k - \theta_i)\theta_{-i}}{\theta_i + \theta_{-i}}  \right) \\
&=& -\frac{(\theta_k-\theta_i)^2 \theta_{-i} \theta_i}{(\theta_i + \theta_{-i})^2 \theta_k^2} \\
&\geq& -\frac{\theta_i}{\theta_i + \theta_{-i}}.
\end{eqnarray*}
Therefore we have for all $t$ and $S_1,...,S_{t-1}$,
\[
\E[Z_t - Z_{t-1} | S_1,...,S_{t-1}] \geq  -\sum_{i \in U_t} \frac{\theta_i}{W(\theta_i + \theta_{-i})}  \geq -\frac{1}{W}.
\]
As $Z_1,...,Z_{t-1}$ can be determined by $S_1,...,S_{t-1}$, we have for all $t$ and  $Z_1,...,Z_{t-1}$,
\[
\E\left[\left(Z_t  +\frac{t}{W} \right)- \left(Z_{t-1}+\frac{t-1}{W}\right) | Z_1 -\frac{1}{W},...,Z_{t-1} -\frac{t-1}{W}\right] \geq  0.
\]
Therefore sequence $0, Z_1  +\frac{1}{W}, ..., Z_t +\frac{t}{W}, ...,Z_T + \frac{T}{W}$ forms a supermartingale.

Now we want to bound $|Z_t - Z_{t-1}|$. We know that for $0 \leq x$, $|\ln (1+x)| \leq x$. Therefore for $i$ such that $w_i > 0$,
\[
|\frac{w_i}{W} \ln\left(  1+ \frac{(\theta_k- \theta_i)\theta_{-i}}{(\theta_k+\theta_{-i})\theta_i}\right)| \leq \frac{w_i}{W} \cdot  \frac{(\theta_k- \theta_i)\theta_{-i}}{(\theta_k+\theta_{-i})\theta_i} \leq \frac{(\theta_k-\theta_i)\theta_{-i}}{W(\theta_k+\theta_{-i})\theta_k} \leq \frac{1}{W}
\]
and
\[
|\frac{w_i}{W} \ln \left( 1+ \frac{\theta_k - \theta_i}{\theta_i + \theta_{-i}} \right)| \leq\frac{w_i(\theta_k - \theta_i)}{W(\theta_i + \theta_{-i})} = \frac{\theta_i (\theta_k -\theta_i)}{W\theta_k(\theta_i + \theta_{-i})} \leq \frac{1}{W} \cdot \frac{\theta_i}{\theta_i + \theta_{-i}}.
\]
Therefore, we get
\[
|Z_t -Z_{t-1}| \leq\frac{1}{W} + \sum_{i \in U_t}\frac{1}{W} \cdot \frac{\theta_i}{\theta_i + \theta_{-i}} \leq \frac{2}{W}.
\]

Now by Azuma's inequality, we have
\begin{eqnarray*}
\Pr_{(\pi,S) \sim p}\left[ Z_T \leq -c_1 \right]  &\leq& \exp\left(-\frac{(c_1 - \frac{T}{W})^2}{2T (\frac{2}{W})^2}\right) =\exp \left( -\frac{(c_1/c -1)^2T}{8}\right).\\
\end{eqnarray*}

\end{proof}
After we prove Lemma \ref{lem:bounddiv3}, the rest of the proof is very similar to Theorem \ref{thm:lb2ctk} by picking $c_1= 1/3$. We omit the argument.
\end{proof}


\begin{theorem}[Restatement of Theorem \ref{thm:lb4}]
\label{thm:lb4app}
For any algorithm $A$ (can be adaptive), if $A$ uses $k/4$ comparisons of any size (can be $l$-wise comparison for $2 \leq l \leq n$), then $A$ will identify the top-$k$ items with probability at most $2/3$. 
\end{theorem}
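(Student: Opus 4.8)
The plan is to exhibit one hard instance together with a small family of modified instances that $A$ cannot distinguish from it within the budget, and on each of which $A$ provably fails. As in the other lower bounds, first fix the best random seed and treat $A$ as deterministic, so that its comparison sets and its final output are functions of the sequence of reported winners only. Take $n=k+1$ and the instance $p$ with $\theta_1=\cdots=\theta_k=1$, $\theta_{k+1}=1-\delta$ for a small $\delta>0$ to be fixed, and a uniformly random labeling $\pi$; since $\theta_k>\theta_{k+1}$ the top-$k$ set $\{1,\dots,k\}$ is well defined. For each $i\in[k]$ let $p_i$ be the instance obtained from $p$ by also lowering item $i$'s score to $1-\delta$. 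The argument has two steps: (a) under $p_i$ no algorithm outputs $\{\pi_1,\dots,\pi_k\}$ with probability more than $1/2$; and (b) with $T=\lfloor k/4\rfloor$ comparisons the law of $A$'s transcript under $p$ lies within total variation $2\delta T$ of its law under $p_i$. Together (a) and (b) give that $A$ succeeds on $p$ with probability at most $1/2+2\delta T$, which is below $2/3$ once $\delta$ is small; and for $k\le 3$ we have $T=0$ and $A$ succeeds with probability $1/\binom{n}{k}\le 1/2$ directly, so assume $k\ge 4$.

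For step (a): under $p_i$ the items labeled $\pi_i$ and $\pi_{k+1}$ carry the same score $1-\delta$, so if $\pi'$ denotes $\pi$ with these two labels interchanged, the instances ``$p_i$ with labeling $\pi$'' and ``$p_i$ with labeling $\pi'$'' are literally the same map from labels to scores, and hence $A$'s conditional law of (transcript, output) is the same for $\pi$ and $\pi'$. Consequently the conditional probability that $A$ outputs $\{\pi_1,\dots,\pi_k\}$ equals the conditional probability that it outputs $\{\pi'_1,\dots,\pi'_k\}$; since these two output sets are distinct (one contains $\pi_i$, the other $\pi_{k+1}$, and $\pi_i\ne\pi_{k+1}$), the two conditional probabilities sum to at most $1$. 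Averaging over the uniform $\pi$, using that the interchange is an involution of the uniform distribution on permutations, yields $\Pr_{p_i}[A\text{ outputs }\{\pi_1,\dots,\pi_k\}]\le 1/2$ — the same indistinguishability mechanism noted in the proof overview for changing $\theta_{k+1}$ to $\theta_k$.

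For step (b): $p$ and $p_i$ differ only in the weight of the single item labeled $\pi_i$ ($1$ versus $1-\delta$), so for every comparison set $U$ the two distributions of the reported winner, $a\mapsto\theta_a/\sum_{j\in U}\theta_j$, are within total variation $2\delta$ — a one-line perturbation estimate uniform in $|U|$, which is precisely why we perturb a top item slightly rather than collapsing it all the way. Since $A$ is deterministic, conditioned on $\pi$ its $t$-th comparison set is the same function of the history under $p$ and under $p_i$, so the standard hybrid bound gives total variation at most $2\delta T$ between the whole transcripts, conditionally on $\pi$ and therefore also unconditionally; with $\delta=\frac{1}{10k}$ this is at most $\frac{1}{20}$, so $A$ succeeds on $p$ with probability at most $\frac12+\frac1{20}<\frac23$. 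The only delicate point is this uniform-in-$|U|$ control, and I expect that to be the crux: a naive attempt to run the change of measure with $\theta_i$ collapsed all the way to $\theta_{k+1}$ (mirroring the supermartingale $Z_t=\frac1k\sum_{i\in[k]}\ln\frac{p_i(S_1\cdots S_t,\pi)}{p(S_1\cdots S_t,\pi)}$ of Theorem~\ref{thm:lb2ctk}, with uniform weights $w_i\equiv 1$ and budget $T=k/4$, i.e. $T=W/4$) would instead require the per-round estimate $\sum_{i\in U_t\cap[k]}\mathrm{KL}\big(p(\cdot\mid U_t)\,\|\,p_i(\cdot\mid U_t)\big)=O(1)$, which holds because a large comparison dilutes the informational contribution of each of its members; proving that bound would be the main calculation on that alternative route.
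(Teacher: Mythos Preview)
Your argument is internally correct for the specific instance you construct, but it does not prove the theorem as needed in the paper. In this paper's framework the lower bounds are \emph{instance-specific}: for each fixed instance $(n,k,l,\theta_1,\dots,\theta_n)$ with $\theta_k>\theta_{k+1}$, the Corollary combining Theorems~\ref{thm:lb2ctk}--\ref{thm:lb5} asserts that \emph{this} instance requires $\Omega(k)$ comparisons. So Theorem~\ref{thm:lb4} must hold for arbitrary given $\theta$'s, not for one hard instance you get to design. Your proof fixes $n=k+1$ and, more importantly, relies on the freedom to take the gap $\theta_k-\theta_{k+1}=\delta$ arbitrarily small so that the per-round TV perturbation $O(\delta)$ can absorb $T=k/4$ rounds. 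For a general instance with a fixed (possibly large) gap you have no such knob, and your TV bound $2\delta T$ becomes useless. The alternative route you sketch at the end (collapse $\theta_i$ to $\theta_{k+1}$ and control a supermartingale) would also have to be carried out uniformly in the $\theta$'s; you have not done that.

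The paper's proof avoids any smallness assumption by a much more direct likelihood-domination argument. For each $i\le k$ let $p_i$ be the instance with $\theta_i$ replaced by $\theta_{k+1}$; then for any history $S$ in which item $i$ is \emph{never} the reported winner, lowering $\theta_i$ can only raise the probability of each round's outcome, so $p_i(S,\pi)\ge p(S,\pi)$ with no TV or KL computation needed. Since there are only $k/4$ rounds, at least $3k/4$ of the top-$k$ items are never winners in any transcript; averaging $p(S,\pi)\le\frac{1}{|N(\pi,S)|}\sum_{i\in N(\pi,S)}p_i(S,\pi)\le\frac{4}{3k}\sum_{i\le k}p_i(S,\pi)$ over successful transcripts and using $\Pr_{p_i}[A(S)=\{\pi_1,\dots,\pi_k\}]\le 1/2$ gives the $2/3$ bound directly, for every instance. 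This is the idea your proposal is missing.
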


\begin{proof}

First of all, we can assume $A$ is deterministic. This is because if $A$ is randomized, we can fix the randomness string which makes $A$ achieves the highest successful probability. 

Let $S = (S_1,...,S_T)$ be the history of algorithm. Each $S_t$ is the comparison result of round $t$. Notice that since $A$ is deterministic, with $S_1,...,S_t$, we can determine the labels of items $A$ want to compare in round $t+1$ even when $A$ is adaptive. So there is no point to put the labels of compared items in the history. So we only put the comparison result in the history, i.e $S_t$ is a number in $[n]$ and $S$ is a length-$T$ string of numbers in $[n]$.

Again since $A$ is deterministic, the label $A$ outputs is just a deterministic function of $S$, we use $A(S)$ to denote it. $A$ outputs correctly if $A$ outputs the label of the top-$k$ items, i.e. $A(S) = \{ \pi_1,...,\pi_k\}$. 

We use $p(S, \pi)$ to denote the probability that the items are labeled as $\pi$ and $A$ has history $S$. Now consider the case when we set $\theta_i$ equals to $\theta_{k+1}$ for $i\leq k$. In this case the probability of $A(S) =\{ \pi_1,...,\pi_k\}$ should be at most $1/2$ as item $k+1$ and item $i$ have the same weight. We use $p_i(S, \pi)$ to denote the probability that the items are labeled as $\pi$ and $A$ has history $S$ when $\theta_i$ is changed to $\theta_{k+1}$.

We define $N(\pi,S)$ as the set of items among top-$k$ items such that they are not chosen as the favorite items by algorithm $A$ in history $S$ with labels $\pi$. As there are only $k/4$ comparisons, $N(\pi,S) \leq 3/4$ for all $\pi, S$.    

Now we prove the following simple lemma that gives the connection between $p(S,\pi)$ and $p_i(S,\pi)$ for all $i \in N(\pi, S)$.

\begin{lemma}
\label{lem:bounddiv4}
\[
\forall \pi, S, i \in N(\pi, S), p_i(S,\pi) \geq p(S,\pi).
\]
\end{lemma}
\begin{proof}
We write $p(S,\pi)$ as
\[
p(S,\pi) = \prod_{t=1}^T p(S_t,\pi| S_1...S_{t-1}).
\]
And similarly $p_i(S,\pi)$ as 
\[
p_i(S,\pi) = \prod_{t=1}^T p_i(S_t,\pi| S_1...S_{t-1}).
\]
Consider the comparison in round $t$ given $S_1,S_2,...,S_t, \pi$. There are two cases
\begin{enumerate}
\item $i$-th item is not compared in round $t$: The change of $\theta_i$ does not change $p(S_t,\pi| S_1...S_{t-1})$. So $p(S_t,\pi| S_1...S_{t-1}) = p_i(S_t,\pi| S_1...S_{t-1})$.
\item $i$-th item is compared in round $t$: We know the $i$-th item is not the favorite item of round $t$ in this history. Therefore decreasing $\theta_i$ to $\theta_{k+1}$ will increase $p(S_t,\pi| S_1...S_{t-1})$. So $p(S_t,\pi| S_1...S_{t-1}) \leq p_i(S_t,\pi| S_1...S_{t-1})$.
\end{enumerate}
Thus we always have $p(S_t,\pi| S_1...S_{t-1}) \leq p_i(S_t,\pi| S_1...S_{t-1})$. By multiplying things together we get the statement of this lemma.
\end{proof}
Finally we have
\begin{eqnarray*}
 &&\Pr_{(\pi,S) \sim p} [A(S) = \{ \pi_1,...,\pi_k\}] \\
 &=& \sum_{\pi,S, A(S) =  \{ \pi_1,...,\pi_k\}} p(\pi,S) \\
 &\leq& \sum_{\pi,S, A(S) =  \{ \pi_1,...,\pi_k\}} \frac{1}{|N(\pi,S)|} \sum_{i\in N(\pi,S)} p_i(\pi,S) \\ 
 &\leq&\sum_{\pi,S, A(S) =  \{ \pi_1,...,\pi_k\}} \frac{1}{|N(\pi,S)|} \sum_{i\in\{1,...,k\}} p_i(\pi,S) \\ 
 &\leq&\sum_{\pi,S, A(S) =  \{ \pi_1,...,\pi_k\}} \frac{4}{3k} \sum_{i\in\{1,...,k\}} p_i(\pi,S) \\ 
 &=&\frac{4}{3k} \sum_{i\in\{1,...,k\}}  \sum_{\pi,S, A(S) =  \{ \pi_1,...,\pi_k\}} p_i(\pi,S) \\
 &\leq& \frac{4}{3k} \sum_{i\in\{1,...,k\}} \frac{1}{2} \\
 &=&  \frac{2}{3}.
\end{eqnarray*}
\end{proof}


\begin{theorem}[Restatement of Theorem \ref{thm:lb5}]
\label{thm:lb5app}
Assume $c < 1/2$. For any algorithm $A$ (can be adaptive), if $A$ uses $\frac{c n}{l} $ comparisons of size at most $l$ (can be $2$-wise, $3$-wise,...,$l$-wise comparisons), then $A$ will identify the top-$k$ items with probability at most $7/8$. 
\end{theorem}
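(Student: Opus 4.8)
The plan is to prove the (slightly stronger) statement that $A$ succeeds with probability $<3/4$, using nothing about the magnitudes of the $\theta_i$'s beyond their order. First I would make some reductions. If $\theta_k=\theta_{k+1}$ the top-$k$ set is ambiguous: pairing each label permutation with the one obtained by swapping the labels of the rank-$k$ and rank-$(k+1)$ items gives two indistinguishable instances with distinct correct answers, so $A$ succeeds with probability $\le 1/2$ and we are done; hence assume $\theta_k>\theta_{k+1}$, $1\le k\le n-1$ and $n\ge 2$ (the other cases being vacuous). We may assume $A$ is deterministic (fix its best random string) and always outputs a $k$-subset of labels. Let $\mu\colon[n]\to[n]$ be the uniformly random bijection assigning rank-labels to item-labels, so the true top set is $T^*=\{a:\mu(a)\le k\}$, and let $U$ be the (random) set of item-labels queried at least once; since each query has size $\le l$ and there are at most $cn/l$ of them, $|U|\le cn$, and in particular $m:=n-|U|>n/2\ge 2$.

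\emph{Key step 1 (posterior symmetry).} Conditioned on the whole transcript $S$ (the queried sets together with their reported favorites), the restriction $\mu|_{[n]\setminus U}$ is uniform over the bijections from $[n]\setminus U$ onto the rank-labels not appearing on $U$, because $\Pr[S\mid\mu]=\prod_t\big(\theta_{\mu(R_t)}\big/\sum_{a\in S_t}\theta_{\mu(a)}\big)$ depends on $\mu$ only through $\mu|_U$. Hence, writing $t':=|T^*\cap U|$, conditioned on $S$ and on $\mu|_U$ the set $T^*\setminus U$ is a uniformly random $(k-t')$-subset of $[n]\setminus U$; since $A(S)$ is a fixed $k$-set, $A$ is then correct with conditional probability at most $1/\binom{m}{k-t'}$. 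Taking expectations over $(S,\mu|_U)$ gives $\Pr[A\text{ correct}]\le\E\!\big[1/\binom{m}{k-t'}\big]$, where $t'$ is genuinely random (it is a function of $\mu|_U$, which $A$ does not see).

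\emph{Key step 2 (one cannot cover $T^*$ or its complement).} Since $\binom{m}{j}\ge 2$ for $1\le j\le m-1$ and here $0\le k-t'\le m$ with $m\ge 2$, the denominator equals $1$ only when $k-t'\in\{0,m\}$, i.e.\ only when $T^*\subseteq U$ or $[n]\setminus T^*\subseteq U$; therefore $\Pr[A\text{ correct}]\le\tfrac12+\tfrac12\Pr[\,T^*\subseteq U\text{ or }[n]\setminus T^*\subseteq U\,]$. To bound the last probability I will use a lazy-revelation coupling: reveal $\mu(a)$ only the first time item $a$ is queried. By the same symmetry as in Key step 1, each newly touched item receives a rank-label that is uniform over the as-yet-unrevealed labels, independently of how adaptively the queries were chosen; so the set of rank-labels of the first $\lfloor cn\rfloor$ touched items (pad with arbitrary items if fewer are ever touched) is a uniformly random $\lfloor cn\rfloor$-subset of $[n]$. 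As $|U|\le\lfloor cn\rfloor$, the event $T^*\subseteq U$ implies $\{1,\dots,k\}$ lies in this random subset, which happens with probability $\binom{n-k}{\lfloor cn\rfloor-k}\big/\binom{n}{\lfloor cn\rfloor}\le c^k$; symmetrically $\Pr[[n]\setminus T^*\subseteq U]\le c^{\,n-k}$. Finally $T^*\subseteq U$ forces $k\le|U|<n/2$ while $[n]\setminus T^*\subseteq U$ forces $n-k\le|U|<n/2$, so at most one of the two events can occur and $\Pr[\,T^*\subseteq U\text{ or }[n]\setminus T^*\subseteq U\,]\le c<\tfrac12$. Combining, $\Pr[A\text{ correct}]<\tfrac12+\tfrac14=\tfrac34\le\tfrac78$.

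The main obstacle is making the two symmetry claims airtight under adaptivity: one has to argue carefully that, because everything the algorithm has observed so far is a function of the rank-labels of already-touched items, it has no leverage to steer its next query toward top-$k$ (or away from bottom) items, so the lazy-revelation subset really is uniform irrespective of the adaptive strategy. The rest is bookkeeping — chiefly keeping track of the fact that $t'$ is random given the transcript, which is why the averaging in Key step 1 must be over the joint law of $(S,\mu|_U)$ rather than conditioning on a fixed $t'$.
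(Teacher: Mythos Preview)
Your argument is correct and actually yields the stronger conclusion $\Pr[A\text{ correct}]<3/4$. The symmetry claims (that the posterior on $\mu|_{[n]\setminus U}$ is uniform given the transcript, and that $\mu(U')$ is a uniform $\lfloor cn\rfloor$-subset under lazy revelation) hold for exactly the reason you give: the likelihood $\Pr[S\mid\mu]$ factors through $\mu|_U$, so the algorithm's adaptive choices are measurable with respect to the already-revealed ranks and cannot bias the next revealed rank. Serializing multi-item queries and padding deterministically (e.g.\ lowest unused label) makes this rigorous.

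Your route is genuinely different from the paper's. The paper singles out the two pivotal items of rank $k$ and $k+1$: conditioned on the event that \emph{neither} is ever queried, swapping their labels is an outcome-preserving symmetry that exchanges the correct answer with an incorrect one, so the conditional success probability is at most $1/2$; it then bounds the probability that at least one of these two ranks is touched by the $\le cn$ queried items. You instead exploit the full residual permutation symmetry on $[n]\setminus U$, obtaining the finer conditional bound $1/\binom{m}{k-t'}$ and then controlling the probability of the extreme events $T^*\subseteq U$ or $[n]\setminus T^*\subseteq U$. The paper's version is shorter and needs only a single transposition; yours is more work but yields the sharper constant and makes the ``no-leverage-over-unseen-ranks'' principle explicit, which is arguably the conceptual heart of the bound.
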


\begin{proof}
We are going to prove by contradiction. Suppose there's some $A$ uses $\frac{c n}{l} $ comparisons of size at most $l$ and identify the top-$k$ items with probability more than $7/8$. 

Now consider another task where the goal is just to make sure $k$-th item appeared in some comparison or $(k+1)$-th item appeared in some comparison. Notice that when some algorithm fails this new task, then the algorithm cannot output top-$k$ items with probability better than $1/2$ because when both $k$-th item and $(k+1)$-th item are not compared, the algorithm should output them with same probability for identifying top-$k$ items. So algorithm $A$ should solve the new task with probability more than $3/4$. 

For the new task, it's easy to see that the best strategy is to always use $l$-wise comparison and compare $ \frac{c n}{l} \cdot l$ different items. The probability of having either $k$-th item or $(k+1)$-th item compared is 
\[
1- \left( 1- \frac{2}{n}\right)^{cn} \leq  1- \frac{1}{4^{2c}} \leq 3/4.
\]
Here we need to use the fact that $n\geq 4$ (when $n < 4$, the statement of the theorem is trivial). Now we get a contradiction.
\end{proof}

\section*{Acknowledgement}

We would like to thank Rene Caldentey and Yifan Feng for earlier discussions of this problem.

\bibliographystyle{alpha}
\bibliography{references}

\end{document}